\renewcommand{\QED}{\QEDopen}
\newtheorem{definition}{\bf Definition}
\newtheorem{theorem}{\bf Theorem}
\newtheorem{proposition}{\bf Proposition}
\newtheorem{lemma}{\bf Lemma}
\newtheorem{corollary}{\bf Corollary}
\newtheorem{remark}{\bf Remark}
\begin{document}
\title{Universal Weak Variable-Length Source Coding\\ on  Countably Infinite Alphabets}
%author.....
\author{Jorge F. Silva and Pablo Piantanida
% JSF.
\thanks{This article has been accepted for publication by IEEE. Digital Object Identifier 10.1109/TIT.2019.2941895. Link: https://ieeexplore.ieee.org/document/8840879.}
\thanks{(c) 2019 IEEE. Personal use of this material is permitted.  Permission from IEEE must be obtained for all other uses, in any current or future media, including reprinting/republishing this material for advertising or promotional purposes, creating new collective works, for resale or redistribution to servers or lists, or reuse of any copyrighted component of this work in other works.}
\thanks{J. F. Silva was supported in part by CONICYTChile, Fondecyt, under Grant 1170854 and in part by the Advanced Center for Electrical and Electronic Engineering, Basal Project, under Grant FB0008. 
%P. Piantanida was supported by the European Commission’s Marie Sklodowska-Curie Actions (MSCA) through the Marie Sklodowska-Curie IF under Grant H2020-MSCAIF-2017-EF-797805. 
This project has received funding from the European Union’s Horizon 2020 research and innovation programme under the Marie Skłodowska-Curie grant agreement No 792464.
This work was presented in part at the 2016 and 2017 IEEE International Symposium on Information Theory (ISIT) [1], [2].
}
\thanks{J. F. Silva is with the Information and Decision Systems (IDS) Group, University of Chile, Santiago 412-3, Chile (e-mail: josilva@ing.uchile.cl).
P. Piantanida is with the Laboratoire des Signaux et Systèmes (L2S).}
\thanks{P. Piantanida is with the Laboratoire des Signaux et Systèmes (L2S), CentraleSupélec-CNRS-Université Paris-Sud, 91190 Gif-sur-Yvette, France, and also with the Montreal Institute for Learning Algorithms (Mila), Université de Montréal, Montréal, QC H3T 1J4, Canada (e-mail: pablo.piantanida@centralesupelec.fr).}
}

\maketitle

%==========================================================
%==========================================================
\begin{abstract}
Motivated from the fact that universal source coding on countably infinite alphabets ($\infty$-alphabets) is not feasible, this work introduces the  notion of ``almost lossless source coding''. Analog to the {\em weak variable-length source coding} problem studied by Han~(IEEE TIT, 2000, 46, 1217-1226), 
almost lossless source coding aims at relaxing the lossless block-wise assumption to allow an average per-letter distortion that vanishes asymptotically as the block-length tends to infinity. In this setup,  we show on one hand  that Shannon entropy characterizes the minimum achievable rate (similarly to the case of finite alphabet sources) while on the other that almost lossless universal source coding becomes feasible for the family of finite-entropy stationary memoryless sources with $\infty$-alphabets.  
Furthermore, we study a stronger notion of almost lossless universality that demands uniform convergence of the  average per-letter distortion to zero, where we establish a necessary and sufficient condition for the so-called family of ``envelope distributions'' to achieve it. Remarkably, this condition is the same necessary and sufficient condition needed for the existence of a strongly minimax (lossless) universal source code for the family of envelope distributions. Finally,  we show that an almost lossless coding scheme offers faster rate of convergence for the (minimax) redundancy compared to the well-known information radius developed for the lossless case at the expense of tolerating a non-zero distortion that vanishes to zero as the block-length grows. This shows that even when lossless universality is feasible, an almost lossless scheme can offer different regimes on the rates of convergence of the (worst case) redundancy versus the (worst case) distortion.
\end{abstract}

\begin{keywords}
Universal source coding,  countably infinite alphabets ($\infty$-alphabets), weak source coding, envelope distributions, information radius (i-radius), metric entropy analysis. 
\end{keywords}

%\IEEEpeerreviewmaketitle
%==========================================================
%==========================================================
\section{Introduction}
\label{sec_intro}
The problem of Universal Source Coding (USC) has a long history on information theory \cite{csiszar_2004,cover_2006,gyorfi_1994,davisson_1973,kieffer_1978}. This topic started with the seminal work of Davisson~\cite{davisson_1973} that formalizes the variable-length lossless coding and introduces relevant information quantities (mutual information and channel capacity~\cite{cover_2006}).  In lossless variable-length source coding,  it is well-known  that if we know the statistics of a source (memoryless or stationary and ergodic) the Shannon entropy (or Shannon entropy rate) provides the minimum achievable rate~\cite{cover_2006}. However, when the statistics of the source is not known  but it belongs to family of distributions $\Lambda$, then the problem reduces to characterize the worst-case expected overhead  (or \emph{worse-case redundancy}) that a pair of encoder and decoder experiences due to the lack of knowledge about true distribution governing the source samples to be encoded~\cite{csiszar_2004,gassiat_2018}.

A seminal information-theoretic result states that the least worst-case overhead (or minimax redundancy of $\Lambda$) is fully characterized by the \emph{information radius} of $\Lambda$~\cite{csiszar_2004}.  The information radius (i-radius) has been richly studied by the community and there are 
numerous contributions~\cite{boucheron_2009,bontemps_2014,haussler_1997,54897, 382017}, including applications to universal prediction of individual sequences~\cite{144706}. In particular,  it is well-known  that the  i-radius growths sub-linearly for the family of finite alphabet stationary and memoryless sources~\cite{csiszar_2004}, which implies the existence of an {universal source code} that achieves {Shannon entropy} for every distribution in  this family provided that the block length tends to infinity.
What is intriguing in this positive result obtained for finite alphabet memoryless sources is that it does not longer extend to the case of stationary and memoryless sources on {\em countably infinite alphabets} ($\infty$-alphabets), as was clearly shown in~\cite{kieffer_1978,gyorfi_1994,boucheron_2009}. From an information complexity perspective,  this infeasibility result implies that the i-radius of this family is unbounded for any finite block-length and, consequently,   lossless universal source coding for $\infty$-alphabet stationary and memoryless sources is an intractable problem.
%
% CONTEXT AND MOTIVATION OF THIS WORK:.......  
In this regard, the proof presented by Gy\"{o}rfi {\em et al.} \cite[Theorem 1]{gyorfi_1994} is constructed over a connection between variable-length prefix-free codes and  distribution estimators, and the fact that  the redundancy of a given code upper bounds the expected divergence  between the true distribution and the induced  (through the code) estimate of the distribution. Then, the existence of an universal source code implies the existence of an universal estimator in the sense  of expected information divergence~\cite{barron_1992}.\footnote{Distribution estimator  consistent  in information divergence and reversed order information divergence were introduced by Barron {\em et al.} in \cite{barron_1992}. The connection between distribution estimation consistent in expected information divergence and universal source coding can be found in \cite{gyorfi_1994} and \cite[Sec. III.A]{barron_1992}.} The impossibility of achieving this learning objective for the family of finite entropy memoryless sources~\cite[Theorem 2]{gyorfi_1994} 
motives the main question addressed in this work that is, the study of a ``weak notion'' of universal variable-length source coding. 

In this framework, we propose to address the problem of universal source coding for $\infty$-alphabet stationary and memoryless sources by studying a weaker (lossy) notion of coding instead of the classical lossless definition \cite{csiszar_2004,cover_2006}. This  
notion borrows ideas from the seminal work by Han~\cite{han_2000} that allows reconstruction errors but assuming known statistic. In this paper, we investigate the idea of relaxing the lossless block-wise assumption with the goal that the corresponding weak universal source coding formulation will be reduced to a learning criterion that becomes feasible for the  whole family finite entropy stationary and memoryless sources on countably infinite alphabets. 
In particular,  we move from lossless coding to an  asymptotic vanishing distortion fidelity criterion based on the {\em Hamming distance}  as a fidelity metric. 

%CONTRIBUTIONS: ---------------------------------------------
\subsection{Contributions}

Assuming that the distribution of the source is known, we first introduce the problem of ``almost lossless source coding'' for memoryless sources defined on countably infinite alphabets. Theorem~\ref{theorem_entropy} shows that {\em Shannon entropy}  characterizes the minimum achievable rate for this problem. The proof of this theorem adopts a result from Ho {\em et al.}~\cite{ho_2010} that provides a closed-form expression for the \emph{rate-distortion function} $R_\mu(d)$ on $\infty$-alphabets. From this characterization, we show that $\lim_{d \rightarrow 0} R_\mu(d)=\mathcal{H}(\mu)$ which is essential to prove this result\footnote{This result is well-known for finite alphabets, however the extension on countably infinite alphabets is not straightforward due to the  discontinuity of the entropy~\cite{silva_2012_isit,ho_2010b}.}.

Then, we address the problem of almost lossless universal source coding. The main difficulty arises in finding a lossy coding scheme that achieves asymptotically zero distortion, i.e.,  point-wise over the family, while guaranteeing that the worst-case average redundancy --w.r.t. the minimum achievable rate-- vanishes with the block-length~\cite{csiszar_2004}. The proof of existence of an universal code with the desired property relies on a two-stage coding scheme that first quantizes (symbol-by-symbol) the $\infty$-alphabet and then applies a lossless variable-length code over the resulting quantized symbols. Our main result, stated in Theorem~\ref{th_achie_almost_lossless_universality}, shows that almost lossless universal source coding is feasible for the family of finite entropy stationary and memoryless sources.

We further  study the possibility of obtaining rates of convergence for the worst-case distortion and the worst-case redundancy. To this end, we restrict our analysis to the family of stationary and memoryless sources with 1D-densities dominated by an envelope function $f$, which was previously studied in~\cite{boucheron_2009,bontemps_2014,bontemps_2011,acharya_2014}. Theorem~\ref{th_dichotomy_stronger_universality} presents a necessary and sufficient condition on $f$ to achieve an uniform convergence (over the family) of the distortion to zero and, simultaneously, a vanishing worst-case average redundancy. Remarkably, this condition ($f$ being a summable function) is the same necessary and sufficient condition needed for the existence of a strongly minimax (lossless) universal source code~\cite[Theorems 3 and 4]{boucheron_2009}. 

Finally, we provide an analysis of the potential benefit of an almost lossless two-stage coding scheme by exploring the family of  envelope distributions that admits strong minimax universality in lossless source coding \cite{boucheron_2009,gassiat_2018}.  In this  context,  Theorem~\ref{th_information_radius_gain_tail_schemme} shows that we can have an almost lossless approach that offers a non-trivial reduction to the rate of convergence of the worst-case redundancy, with respect to the well-known i-radius developed for the lossless case,  at the expense of tolerating a non-zero distortion that vanishes with the blocklength. This result provides evidence that even in the case where lossless universality is feasible, an almost lossless scheme can reduce the rate of the worst-case redundancy and consequently, it offers ways of achieving different regimes for the rate of convergence of the redundancy versus the distortion. The proof of this result uses advanced tools by Haussler and Opper~ \cite{haussler_1997} to relate the minimax redundancy of a family of distributions with its metric entropy with respect to the {\em Hellinger distance}. Indeed, this metric entropy approach has shown to be instrumental to derive tight bounds on the i-radius for summable envelope distributions in~\cite{bontemps_2014}. We extended this metric entropy approach to our almost lossless coding setting with a two-stage coding scheme to characterize the precise regime in which we can achieve gains in the rate of convergence of the redundancy.

\subsection{Organization of the Paper}
The rest of the paper is organized as follows. Section~\ref{sec_background} introduces  some definitions and preliminary results. 
Section~\ref{subsec:alsc} introduces our main weak source coding problem and shows that Shannon entropy is the minimum achievable rate  
provided that the statistics of the source is known. Section~\ref{sec_universal} presents the problem of almost lossless universal source coding and proves its feasibility for the family of finite entropy memoryless distributions on $\infty$-alphabets. Section~\ref{sub_sec:uniform_convergence_distortion} elaborates a result for a stronger notion of almost lossless universality, and Section~\ref{sec_redundancy_gains_envelop_families} studies the gains in the rate of convergence of the minimax-redundancy that can be obtained with an almost lossless scheme  for families of distributions that admit lossless USC.  Finally, Section~\ref{sec_discusion}  concludes with a summary of the work. 
The proofs of the main results of this paper are presented in Section~\ref{section_proof}, while some supporting results are relegated to the Appendix section.

\subsection{Basic Notation}
The following notations and conventions are used throughout this article: 
 $(x_n)_n$ will denote an infinite dimensional sequence in $\mathbb{R}^{\mathbb{N}}$; 
 $(x_n)_n \ll (y_n)_n$ or,  alternatively,  $(x_n)_n$ being $o(y_n)$,  means that  $\lim_{n \rightarrow \infty} \frac{x_n}{y_n}=0$;
 $(x_n)_n$ being $o(1)$ means that  $\lim_{n \rightarrow \infty}  {x_n}=0$;
  $(x_n)_n \leq (y_n)_n$  means that  $x_n\leq y_n$ for all $n\geq 1$; 
  $(x_n)_n \leq (y_n)_n$ eventually in $n$ means that there exists $N>1$ such that  $x_n\leq y_n$ for all $n\geq N$; 
  $(x_n)_n \sim (y_n)_n $ means that $\lim_{n \rightarrow \infty} \frac{x_n}{y_n}=1$; 
  $\mathcal{X}$ will denote the countably infinite alphabet and $\mathcal{P}(\mathcal{X})$ the collection of
  probability measures in   $\mathcal{X}$;
  for any function $f:\mathcal{X} \longrightarrow \mathbb{R}$, $f$ is said to be summable (denoted by $f\in \ell_1(\mathcal{X})$) if $\sum_{n\in \mathcal{X}} \left| f(x)\right| < \infty$; 
 and for $\mu, v\in \mathcal{P}(\mathcal{X})$,  $ \mu \ll v$ means that 
 if $v(B)=0$ then $\mu(B)=0$ for any $B\subset \mathcal{X}$.

%==========================================================
%==========================================================
\section{Preliminaries} % NEW SECTION ===========================
\label{sec_background}
%A) Information Sources: ---  ----------  
This section introduces some useful concepts, definitions and results that will be needed across the paper. Let ${\bf X}= \left\{X_i \right\}_{i=1}^\infty$ be a stationary and memoryless process (or i.i.d. source) with values in a countably infinite alphabet ($\infty$-alphabets) $\mathcal{X}$ equipped with a probability measure $\mu$ defined on the measurable space $(\mathcal{X}, \mathcal{B}(X))$\footnote{$\mathcal{B}(\mathcal{X})$ denotes the power set of $\mathcal{X}$.}.  Let $X^n=(X_1,\dots,X_n)$ denote a finite block of length $n$ of the process following the product  measure $\mu^n$ on $(\mathcal{X}^n, \mathcal{B}(X^n))$\footnote{The product measure satisfies the memoryless condition for all $\mathcal{B}_1\times \cdots \times \mathcal{B}_n \in \mathcal{X}^n$ then $\mu^n(\mathcal{B}_1\times \cdots \times \mathcal{B}_n)=\mu( \mathcal{B}_1)\cdots \mu( \mathcal{B}_n)$.}.  
Let us denote by $\mathcal{P}(\mathcal{X})$ the family of probability measures in $(\mathcal{X}, \mathcal{B}(X))$, where for every $\mu \in \mathcal{P}(\mathcal{X})$, we understand $f_\mu(x) \coloneqq    \frac{d \mu}{d\lambda}(x) = \mu(  \left\{ x\right\})$ to be a short-hand for its probability mass function (pmf). Let $\textrm{supp}(f)= \left\{x \in \mathcal{X}: \left| f(x)\right|>0\right\}$ and let $\mathcal{P}_\mathcal{H}(\mathcal{X})\coloneqq     \left\{ \mu: H(\mu)<\infty \right\} \subset \mathcal{P}(\mathcal{X})$ denote the collection of finite Shannon entropy probabilities~\cite{shannon_1948} where
%.............
\begin{equation}\label{eq_sec_pre_1}
	H(\mu) = -\sum_{x\in \mathcal{X}} f_\mu(x) \log f_\mu(x),  
\end{equation}
with $\log$ function on base 2.  

%B) Variable length lossless coding: -------------------------
Given an i.i.d. source ${\bf X}= \left\{X_i \right\}_{i=1}^\infty$  with distribution $\mu\in \mathcal{P}(\mathcal{X})$,  let us consider a (variable length) lossless code  $f_n$  of length $n$ as a prefix-free mapping from $\mathcal{X}^n$ to finite sequences of symbols in $\left\{0, 1 \right\}$ \cite{cover_2006}.
It is well-known that $\mathbb{E}_{X^n}\left\{ \mathcal{L} (f_n(X^n)) \right\} \geq H(\mu^n)$ \cite{cover_2006}, where $\mathcal{L}(\cdot)$ indicates the functional that returns the length of binary sequences in $\left\{0,1\right\}^\star \coloneqq \cup_{k \geq 1} \left\{0, 1 \right\}^k$.
Then, the average length (in bits) used to encode $X^n$ with $f_n$  can be measured relative to $H(\mu^n)$,  which motivates the introduction of the 
average redundancy (or redundancy) of $f_n$ by the expression: $\mathbb{E}_{X^n}\left\{ \mathcal{L} (f_n(X^n)) \right\} - H(\mu^n)$. When $\mu$ is known,  
the {\em Huffman code} uses that information to offer an optimal prefix-free mapping (minimizing the average code-length)  
whose redundancy is upper bound by 1 \cite{cover_2006}.
%

%C) Universal Source Coding: --------------------------------------
\subsection{Strong Minimax Universality, Information Radius and Envelope Families}
\label{sub_sec_envelop_results_usc}
In universal source coding (USC), we need to encode a stationary memoryless source {\bf X} with an unknown probability $\mu$ that belongs to 
a class of models $\Lambda \subset \mathcal{P}(\mathcal{X})$.  In this context,  a natural performance for a prefix-free encoder $f_n:\mathcal{X}^n \rightarrow \left\{0,1\right\}^\star$ is the worse-case  (over the family $\Lambda$) redundancy expressed by:  
%..............
$$R(f_n, \mu^n) \coloneqq \sup_{\mu^n \in \Lambda^n} \left(  \mathbb{E}_{X^n\sim \mu^n }\left\{ \mathcal{L} (f_n(X^n)) \right\} - H(\mu^n) \right),$$  
where $\Lambda^n \coloneqq \left\{\mu^n, \mu\in \Lambda \right\}\subset \mathcal{P}(\mathcal{X}^n)$ is a short-hand for the family of $n$-fold (product) distributions
induced by $\Lambda$. 
This worse-case performance indicator motivates the adoption of the minimax design principle:  
 $\min_{f_n} R(f_n, \mu^n)$ frequently used in USC \cite{csiszar_2004}, where the optimization is carried over the family of prefix-free codes. 
%D) Prefix-Free codes as Probabilities: --------------------------
Importantly, there is a 
well-documented correspondence between prefix-free codes for $\mathcal{X}^n$ and probabilities in $\mathcal{P}(\mathcal{X}^n)$ \cite{cover_2006}.
In fact, the {\em Kraft-MacMillan inequality} defines a probability in  $\mathcal{P}(\mathcal{X}^n)$ from a prefix-free code of length $n$ \cite{cover_2006},
and conversely arithmetic coding provides a prefix-free code for  $\mathcal{X}^n$ from a probability $v\in \mathcal{P}(\mathcal{X}^n)$,
where the length of this code  (in bits) is $\lceil -\log v(x^n)  \rceil  + 1$ for any $x^n\in \mathcal{X}^n$ \cite{rissanen_1979,cover_2006}. Then  
the minimax redundancy problem for USC reduces to the solution of the i-radius problem \cite{csiszar_2004}\footnote{In fact, it follows that $R^+(\Lambda^n) +2 \geq \min_{f_n} R(f_n, \mu^n) \geq R^+(\Lambda^n)$ \cite{cover_2006}.}:
%......................................................
\begin{equation}\label{eq_sec_pre_2}
	R^+(\Lambda^n) \coloneqq   \inf_{v \in  \mathcal{P}(\mathcal{X}^n)} \sup_{\mu^n \in \Lambda^n} \mathcal{D}(\mu^n | v)
\end{equation}
and 
%......................................................
\begin{equation}\label{eq_sec_pre_3}
	\mathcal{D}(\mu^n | v) = \sum_{x^n\in \mathcal{X}^n} f_{\mu^n}(x^n) \log \frac{f_{\mu^n}(x^n) }{f_{v}(x^n)}
\end{equation}
is the divergence of $\mu^n$ with respect to $v$ \cite{kullback1958,csiszar_2004,cover_2006}. 
Again using this connection between codes and distributions, a class $\Lambda \subset \mathcal{P}(\mathcal{X})$
of i.i.d. sources will be said to be {\em weakly universal} if there is a sequence of probabilities $(v_n)_n$ (where $v_n\in \mathcal{P}(X^n)$ for all $n$)
such that $\sup_{\mu \in \Lambda}  \lim_{n \rightarrow \infty} \frac{1}{n} \mathcal{D}(\mu^n | v_n)=0$, and  it will be {\em strongly
universal} (or strongly minimax universal) if  $\lim_{n \rightarrow \infty} \sup_{\mu \in \Lambda} \frac{1}{n} \mathcal{D}(\mu^n | v_n)=0$.
For the last stringent USC objective,  the minimax redundancy sequence $(R^+(\Lambda^n))_n$ of $\Lambda$ in (\ref{eq_sec_pre_2}) 
determines if the family is strongly minimax universal \cite{csiszar_2004,gassiat_2018}. For $\infty$-alphabets i.i.d. sources, it is known that $R^+(\mathcal{P}(X)^n)=\infty$ and, furthermore,  weak minimax universality is not feasible \cite{kieffer_1978,gyorfi_1994,boucheron_2009}. This motivates the study of strong minimax universality over  sub-collections  of distributions whose 1D densities are dominated by an envelope function \cite{boucheron_2009,bontemps_2014,bontemps_2011}: 
% Definition: Envelop Distribution.......
\begin{definition}\label{def_envelop}
	Given a non-negative function $f:\mathcal{X} \longrightarrow \mathbb{R}^+$, the \emph{envelope family} indexed by $f$ is given by:
	%.......................
	\begin{equation}\label{eq_ucd_1}
		\Lambda_{f} \coloneqq    \big\{\mu\in \mathcal{P}(\mathcal{X}): f_\mu(x)\leq f(x),\  \textrm{ for }\ x\in\mathcal{X} \big\}.
	\end{equation}
\end{definition}
The next result by Boucheron {\em et al.}~\cite{boucheron_2009} establishes a necessary and sufficient condition to 
make $\Lambda_{f}$ strongly minimax universal:
%*********************************************************************************
%*********************************************************************************
\begin{theorem}\label{dichotomy_envelops} 
	\cite[Ths. 3 and 4]{boucheron_2009}
	Let us consider $f: \mathcal{X} \rightarrow \mathbb{R}^+$, with $\mathcal{X}$ an $\infty$-alphabet, and the family of i.i.d. envelope distributions $\left\{ \Lambda^n_f, n\geq 1 \right\}$. It follows that: 
	\begin{itemize}
	\item[i)] If $f$ is summable, i.e., $f\in \ell_1(\mathcal{X})$,  then $R^{+}(\Lambda_f^n)<\infty$ for all $n \geq 1$, and furthermore $(R^{+}(\Lambda_f^n))_n$ is $o(n)$. 
	\item[ii)] Otherwise, $R^{+}(\Lambda_f^n)=\infty$ for all $n\geq 1$.
	\end{itemize}
\end{theorem} 
Therefore for any envelope family in an $\infty$-alphabet,  either it is strongly minimax universal (i.e.,$(R^{+}(\Lambda_f^n))_n$ is $o(n)$) 
or its  i-radius in (\ref{eq_sec_pre_2}) is unbounded for any finite block-length. The last unbounded scenario means that for any $n$  and any prefix-free code $f_n$ there is a distribution $\mu$ in the family 
where the average redundancy of $f_n$ (with respect to the entropy $H(\mu^n)$) is unbounded. Furthermore for a summable envelope function $f$, 
a series of relevant results stipulate the way $(R^{+}(\Lambda_f^n)/n)_n$ tends to zero function of specific tail attributes of $f$ \cite{boucheron_2009,bontemps_2014,bontemps_2011}. We select a result here that will be important for our exposition, for  which some definitions are needed:
% Defintions:-------------------
\begin{definition}  (Bontemps {\em et al.}\cite{bontemps_2014}) 
\label{def_tail_function_critical_dimension}
For a non-negative envelope function $f\in \ell_1(\mathcal{X})$ with  $\left|\textrm{supp}(f)\right|=\infty$,  we can determine 
%..............................
$$l_{f} \coloneqq     \max \left\{k: \sum_{j\geq k} f(j) \geq 1\right\}$$ 
and the associated envelope probability $\mu_f \in \Lambda_f$ by: 
%..............................
\begin{equation}
\mu_f( \left\{ k\right\}) \coloneqq \left\{
\begin{array}{lll}
0, & \textrm{ for } & k<l_{f} \\
f(k), & \textrm{ for } &  k>l_{f}\\
1- \sum\limits_{j > l_f} f(j), & \textrm{ for } & k=l_f. 
\end{array}\right.
\end{equation} 
\end{definition}

\begin{definition}(Bontemps {\em et al.}\cite{bontemps_2014}) 
\label{def_tail_function_critical_dimension_b}
If we denote by ${F}_f(u) \coloneqq \mu_f( \left\{1,\dots, u\right\})$ the envelope distribution and by  $\bar{F}_f(u) \coloneqq 1- F_f(u)$  the tail function of $f$ (for all $u\geq 1$), we can define the quantile of order $\frac{1}{n}$ of $\mu_f$ as the solutions of \cite{bontemps_2014}:
%..............................
\begin{equation}\label{eq_sec_rgef_4}
	u^*_f(n) \coloneqq    \min  \left\{ u\geq 1:  \bar{F}_f(u) <\frac{1}{n} \right\} \text{ for all } n \geq 1.
\end{equation}
\end{definition}
%..................
\begin{theorem}\label{critimal_dimension_envelop}  \cite[Th. 4]{boucheron_2009}\&\cite[Th. 2, Prop. 5]{bontemps_2014}
	Let us consider the envelope family  $\left\{\Lambda^n_f, n \geq 1 \right\}$ in Def. \ref{def_envelop}  with $f\in \ell_{1}(\mathcal{X})$. 
	Then there is a sequence $(\xi_n)_n$ being $o(1)$ such that
	%.......................
	\begin{align*}
	(1+ \xi_n) \frac{ (u^*_f(n)-1) }{4} \log n &\leq R^+(\Lambda_f^n) \\
								 &\leq 2+ \log e + \frac{(u^*_f(n)-1)}{2} \log n
	\end{align*}
	holds eventually with $n$. 
\end{theorem} 
It has been shown that when $f\in \ell_{1}(\mathcal{X})$ then $(u^*_f(n) \log n)_n$ is $o(n)$ \cite{bontemps_2014}, therefore Theorem \ref{critimal_dimension_envelop}  
is consistent with Theorem \ref{dichotomy_envelops}. Importantly, $(u^*_f(n))_n$ captures the complexity of the envelope family by 
determining the worse-case redundancy (and its velocity of convergence to zero with $n$) that an optimal universal code can achieve 
in compressing (losslessly) a stationary and memoryless source in this family.
 
%--------------------------------------------------------------------------- 
%--------------------------------------------------------------------------- 
%..........................................................................................
\section{Almost Lossless Source Coding}
\label{subsec:alsc}
We now introduce the notion of a lossy variable-length coding of $n$ source symbols, which consists of a pair  $(f_n, g_n)$ where $f_n: \mathcal{X}^n \longrightarrow \left\{0,1\right\}^\star$ is a prefix free variable-length code (encoder) \cite{cover_2006} and $g_n: \left\{0,1\right\}^\star \longrightarrow \mathcal{X}^n$ is the inverse mapping from bits to source symbols (decoder). Inspired by the weak coding setting introduced by Han \cite{han_2000}, the possibility that $\left\{x^n: g_n(f_n(x^n))\neq x^n \right\} \neq \emptyset$ is allowed. In order to quantify the loss induced by this encoding process, a per letter distortion measure characterization $\rho:\mathcal{X} \times \mathcal{X}: \longrightarrow \mathbb{R}^+$ is considered \cite{berger_1998,gray_1990}, where  for $x^n,y^n\in \mathcal{X}^n$ the distortion is given by
%.............
\begin{equation}\label{eq_sec_background_3}
	\rho_n(x^n, y^n)\coloneqq \frac{1}{n}\sum_{i=1}^n \rho(x_i,y_i). 
\end{equation}
Given an information source $\mathbf{X}=\left\{X_i \right\}_{i=1}^n$, the average 
distortion induced  by the pair $(f_n, g_n)$ is
%.............
\begin{equation}\label{eq_sec_background_4a}
	d(f_n, g_n,\mu^n) \coloneqq    \mathbb{E}_{X^n\sim \mu^n} \big\{  \rho_n\big(X^n, g_n(f_n(X^n))\big) \big\}.
\end{equation}
For the rest of the paper, we will focus on the special case where $\rho(x,y)=\mathbf{1}_{\left\{ x\neq y \right\} }$. 
Then, $\rho_n(x^n, y^n)$ is the normalized {\em Hamming distance} between the sequences $(x^n, y^n)$. 
%Rate of the code in bits per sample.....................................:)
On the other hand, the rate of the pair $(f_n,g_n)$ (in bits per sample) is
%.............
\begin{equation}\label{eq_sec_background_5a}
	r(f_n, \mu^n) \coloneqq    \frac{1}{n} \mathbb{E}_{X^n\sim \mu^n} \left\{ \mathcal{L}(f_n(X^n)) \right\}.
\end{equation}
At this stage,  we can 
introduce the almost-lossless source coding problem and with this,  the standard notion of minimum achievable rate. 

%Definition of Achievable Rate for the almost-lossless Source Coding:
\begin{definition}[Achievability]\label{def_achievable}
	Given an information source  $\mathbf{X}= \left\{X_i \right\}_{i=1}^\infty$, we say that a rate $R>0$ is achievable
	for almost-losslessly encoding $\mathbf{X}$, i.e., with zero asymptotic distortion, if there exists a sequence of encoder and decoder mappings $\left\{ (f_n, g_n)\right\}_{n\geq 1}$ satisfying: 
	%.............
	\begin{IEEEeqnarray}{rCl}%\label{eq_sec_background_6}
		\limsup_{n \longrightarrow \infty} r(f_n, \mu^n) &\leq& R, \\ 
		\lim_{n \longrightarrow \infty} d(f_n, g_n,\mu^n) &=&0.
	\end{IEEEeqnarray}
	The minimum achievable rate is then defined as: 
	%.............
	\begin{align}\label{eq_sec_background_7}
		R_{al}(\mathbf{X})\coloneqq   \min  \left\{ R: R  \text{ is achievable for } \mathbf{X}\right\}. 
	\end{align}
\end{definition}

Let $R_{al}(\mu)$  denotes the minimum achievable rate of a stationary and memoryless source driven by $\mu\in \mathcal{P}(\mathcal{X})$. The next theorem characterizes $R_{al}(\mu)$ provided that the source statistics is known.

%================================================
%THEOREM 1: Main source coding problem: -------------------------
\begin{theorem}[Known statistics] \label{theorem_entropy}
Given a stationary and memoryless source on a $\infty$-alphabet 
driven by the probability measure $\mu \in \mathcal{P}_{\mathcal{H}}(\mathcal{X})$, it follows that $R_{al}(\mu)=H(\mu)$.	
\end{theorem}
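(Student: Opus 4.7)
The plan is to establish $R_{al}(\mu)=H(\mu)$ by proving the two standard directions independently, namely achievability (rate $H(\mu)$ is achievable with asymptotically vanishing distortion) and a converse (no rate below $H(\mu)$ can satisfy the vanishing distortion requirement), leveraging the closed form rate-distortion function $R_\mu(d)$ from Ho \emph{et al.} as flagged by the paper.

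For achievability, I would exhibit a sequence of \emph{lossless} block codes, since any lossless code has $d(f_n,g_n,\mu^n)=0$ and in particular satisfies the vanishing distortion condition trivially. Concretely, since $H(\mu)<\infty$, a Shannon code on the product distribution $\mu^n$ assigns codeword lengths $\lceil -\log \mu^n(x^n)\rceil$, yielding a prefix code (by Kraft) with expected length bounded by $H(\mu^n)+1=nH(\mu)+1$. Thus $r(f_n,\mu^n)\le H(\mu)+1/n$, which furnishes a sequence achieving rate $H(\mu)$ with zero (hence vanishing) distortion, proving $H(\mu)\in R_{al}(\mathbf{X})$ and thus $R_{al}(\mu)\le H(\mu)$.

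For the converse, I would start from an arbitrary achievable sequence $(f_n,g_n)$ with $d_n\coloneqq d(f_n,g_n,\mu^n)\to 0$ and $r_n\coloneqq r(f_n,\mu^n)\to R$. Writing $Y^n\coloneqq g_n(f_n(X^n))$, the chain of inequalities
\begin{equation*}
n\, r_n \;=\; \mathbb{E}\bigl[\mathcal{L}(f_n(X^n))\bigr] \;\ge\; H\!\bigl(f_n(X^n)\bigr) \;\ge\; H(Y^n) \;\ge\; I(X^n;Y^n)
\end{equation*}
holds by Kraft (first), the data-processing-style inequality for the deterministic decoder $g_n$ (second), and $I\le H$ (third). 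Since $X^n$ is memoryless and $Y^n$ meets the expected Hamming distortion constraint $\mathbb{E}[\rho_n(X^n,Y^n)]=d_n$, the single-letter rate-distortion characterization gives $I(X^n;Y^n)\ge n\, R_\mu(d_n)$, so $r_n\ge R_\mu(d_n)$. Passing to the limit yields $R\ge \liminf_{n} R_\mu(d_n)$.

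The crux is then to argue that $\lim_{d\to 0^+} R_\mu(d)=H(\mu)$ on countably infinite alphabets, which the excerpt explicitly flags as the technical obstacle (Shannon entropy is only lower semi-continuous on $\mathcal{P}(\mathcal{X})$, so the finite-alphabet continuity argument is not transferable). Here I would invoke the closed-form expression of $R_\mu(d)$ for countable alphabets established by Ho \emph{et al.}~\cite{ho_2010}, which, combined with dominated/monotone convergence over the terms of this formula for $\mu\in\mathcal{P}_\mathcal{H}(\mathcal{X})$, allows one to take the pointwise limit $d\to 0^+$ and recover $H(\mu)$. Once this continuity at the origin is in place, $R\ge H(\mu)$ follows and combined with achievability yields $R_{al}(\mu)=H(\mu)$. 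The bulk of the work and the only nontrivial step is thus the left-continuity of $R_\mu$ at $d=0$ in the countable setting; the coding-theoretic inequalities above are standard and, given the finite-entropy hypothesis, unproblematic.
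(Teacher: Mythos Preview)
Your proposal is correct and follows essentially the same route as the paper: the converse chain $r_n\ge R_\mu(d_n)$ via single-letterization and convexity, followed by the Ho--Verd\'u closed form $R_\mu(d)=H(\mu)-H(\tilde\mu_{\theta(d)})$ and a dominated-convergence argument to obtain $\lim_{d\to 0}R_\mu(d)=H(\mu)$, is exactly the paper's argument. For achievability you take a direct lossless Shannon code on $\mu^n$, whereas the paper defers to its two-stage truncation-plus-Shannon construction (Proposition~\ref{proposition_entropy_two_stage}); both are valid for known statistics, yours being the more economical choice here while the paper's sets up machinery reused in the universal setting.
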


The proof is presented in Section \ref{proof_theorem_entropy}. 

%--------------------------------
%Remarks on Theorem 1:
As it is expected, Shannon entropy characterizes the minimum achievable rate for the almost lossless source coding problem formulated in Definition~\ref{def_achievable}. In the proof of  Theorem \ref{theorem_entropy}, we adopt a result from Ho {\em et al.} \cite{ho_2010} that provides a closed-form expression for the rate-distortion function $R_\mu(d)$ of $\mu$ on $\infty$-alphabet 
through a tight upper bound on the conditional entropy for a given minimal error probability \cite[Theorem 1]{ho_2010}. 
From this characterization, we show that $\lim_{d \rightarrow 0} R_\mu(d)=H(\mu)$, which is essential to show the  
result\footnote{Theorem \ref{theorem_entropy} is well-known for finite alphabet stationary memoryless sources, however its extension to countably infinite alphabets is not straightforward  due to the discontinuity of the entropy. The interested reader may be refer to~\cite{ho_2010b,silva_2012_isit,silva_2018} for further details.}.  

%--------------------------------------------------------------------------- 
\subsection{A Two-Stage Source Coding Scheme} \label{sub_sec:quantization_alphabet}
In this section, we consider a two-stage source coding scheme that first applies a lossy  (symbol-wise)  reduction of the alphabet, and second a variable-length lossless source code over the restricted alphabet.
Let us define the finite set $\Gamma_k \coloneqq    \left\{1,\dots,k\right\}$.   We say that a {\em two-stage lossy code} of block-length $n$  and size $k$ is the composition of:  a lossy mapping of the alphabet, represented by a pair of functions  $(\phi_n, \psi_n)$, where $\phi_n:\mathcal{X} \longrightarrow \Gamma_k$ and $\psi_n: \Gamma_k \longrightarrow \mathcal{X}$, and a fixed to variable-length prefix-free pair of lossless encoder-decoder $(\mathcal{C}_n, \mathcal{D}_n)$, where $\mathcal{C}_n:\Gamma_k^n \longrightarrow \left\{0,1\right\}^\star$ and $\mathcal{D}_n: \left\{0,1\right\}^\star \longrightarrow \Gamma_k^n$.
 
Given a source $\mathbf{X}= \left\{X_i \right\}_{i=1}^\infty$ and an $(n,k_n)$-lossy source encode $(\phi_n, \psi_n,\mathcal{C}_n, \mathcal{D}_n)$\footnote{For brevity, the decoding function $\mathcal{D}_n: \left\{0,1\right\}^\star \longrightarrow  \Gamma_k^n$ will be omitted and considered implicit in the rest of the exposition.}, the lossy encoding of $\mathbf{X}$ induced by $(\phi_n, \psi_n,\mathcal{C}_n)$ is a two-stage process where first a quantization of size $k_n$ over $\mathcal{X}^n$ is made (letter-by-letter) to generate a finite alphabet random sequence $Y^n\coloneqq  (\phi_n(X_1),\dots, \phi_n(X_n))$ and then, a variable-length coding is applied to produce $\mathcal{C}_n(Y^n)$.  Associated to the pair $(\phi_n, \psi_n)$, there is an induced partition of $\mathcal{X}$ given by: 
\begin{equation}\label{eq_sec_background_2}
	\pi_n\coloneqq    \left\{ \mathcal{A}_{n,i} \coloneqq    \phi_n^{-1}(\left\{ i\right\})\,| \, i \in \Gamma_{k_n} \right\} \subset \mathcal{B}(\mathcal{X}), 
\end{equation}
and a collection of prototypes\footnote{Without loss of generality, we assume that $y_{n,i}\in \mathcal{A}_{n,i}$.} $\big\{y_{n,i}\coloneqq    \psi_n(i)\in \mathcal{A}_{n,i}\,| \, i \in \Gamma_{k_n} \big\} \subset \mathcal{X}$. 
The resulting distortion incurred by this code is given by
%.............
\begin{equation}\label{eq_sec_background_4}
	d(\phi_n, \psi_n,\mu^n) \coloneqq    \mathbb{E}_{X^n\sim \mu^n} \big\{  \rho_n\big(X^n, \Psi_n(\Phi_n(X^n))\big) \big\},
\end{equation}
where $\hat{X}^n = \Psi_n(\Phi_n(X^n))$ is a short-hand to denote $\big(\psi_n(\phi_n(X_1)),\ldots, \psi_n(\phi_n(X_n))\big)$.
%
%Rate: Complexity of the Second Stage.................................:)
On the other hand, the coding rate is:
%.............
\begin{equation}\label{eq_sec_background_5}
	r(\phi_n, \mathcal{C}_n, \mu^n) \coloneqq    \frac{1}{n} \mathbb{E}_{X^n\sim \mu^n} \big\{ \mathcal{L}\big(\mathcal{C}_n(\Phi_n(X^n))\big) \big\},
\end{equation}
with $\Phi_n(X^n)$ denoting $(\phi_n(X_1),\dots,\phi_n(X_n))$. An illustration of this two-stage process is presented 
in Figure \ref{fig1}.
%%%%%%%%%%%%%%%%%%%%%%%%%%%%%%%%
%%%%%%%%%%%%%%%%%%%%%%%%%%%%%%%%
%Fig.1: Two-Stage Lossy Coding.................................................
\begin{figure*}[h]
\centering
\includegraphics[width=0.9\textwidth]{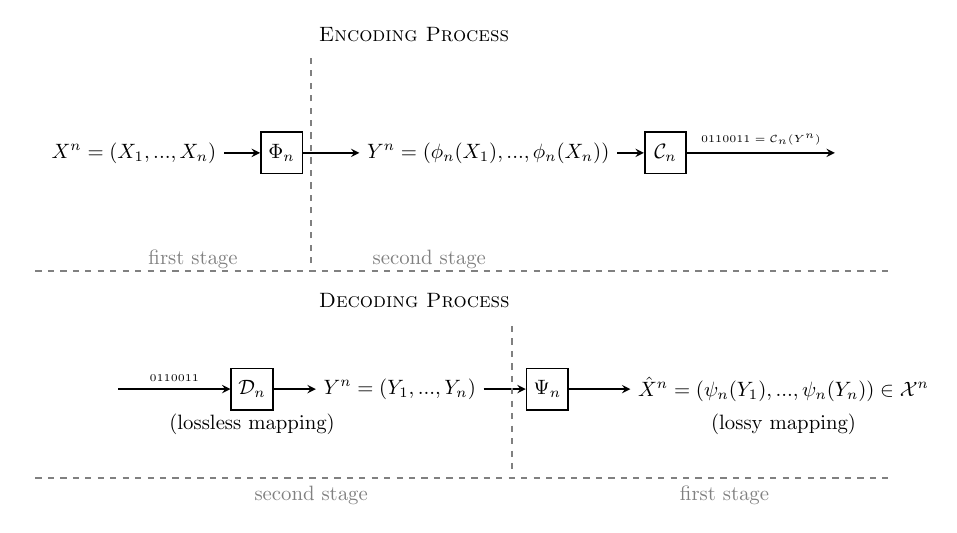}
\caption{Illustration of the two-stage lossy coding scheme $(\phi_n, \psi_n,\mathcal{C}_n, \mathcal{D}_n)$ presented in Section \ref{sub_sec:quantization_alphabet}.}
\label{fig1}
\end{figure*}
%%%%%%%%%%%%%%%%%%%%%%%%%%%%%%%%
%%%%%%%%%%%%%%%%%%%%%%%%%%%%%%%%

At this point, it is worth mentioning some basic properties on the partitions induced by $(\phi_n, \psi_n,\mathcal{C}_n)$ on $\mathcal{X}$.
%================================================
\begin{definition}\label{def_asymtotically_suff_part}
	A sequence of partitions $\left\{ \pi_n\right\}_{ n \geq 1}$ of $\mathcal{X}$ is said to 
	be asymptotically sufficient with respect to $\mu\in \mathcal{P}(\mathcal{X})$, if for all $x\in \textrm{supp}(\mu)$%\footnote{}
	%.......................
	\begin{equation}\label{eq_proof_lemma_entropy_1}
		\lim_{n \rightarrow \infty } \pi_n(x)=   \left\{ x\right\}, \ \mu\text{-almost everywhere}, 
	\end{equation}
	where $\pi_n(x)\subset \mathcal{X}$ denotes the cell in $\pi$ that contains $x$ and
	the almost-sure limit with respect to $\mu$ stated in (\ref{eq_proof_lemma_entropy_1}) 
	refers to the condition: 
	%.......................
	$$\mu\left(\limsup\limits_{n \rightarrow \infty} \pi_n(x) \setminus  \left\{ x\right\}\right)=0,$$ 
	which is equivalent to $\lim\limits_{n  \rightarrow \infty} \mu\left(\bigcup_{k\geq n} \pi_k(x) \setminus  \left\{ x\right\}\right)=0$.
\end{definition}

Consider now almost lossless coding for which we can state the following.
%================================================
\begin{lemma}\label{lemma_zero_distortion}
	Let $\mathbf{X}$ be a stationary and memoryless source  driven by $\mu$. A necessary and sufficient condition 
	for $\left\{ (\phi_n, \psi_n,\mathcal{C}_n): n\geq 1\right\}$  to have that 
	%..............................
	%\begin{equation*}
		$\lim\limits_{n \rightarrow \infty} d(\phi_n, \psi_n,\mu^n) =0$
	%\end{equation*}
	is that $\left\{\pi_n \right\}_{n\geq 1}$ in (\ref{eq_sec_background_2}) is asymptotically sufficient for $\mu$.
\end{lemma}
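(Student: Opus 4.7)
The plan is to reduce the lemma to a single-letter statement and then exploit the constraint that each prototype lies in its own cell. Since $\mathbf{X}$ is memoryless and $\rho_n$ is the averaged Hamming distortion, the distortion collapses to a one-letter probability,
\[
d(\phi_n, \psi_n, \mu^n) = \mathbb{P}_{X \sim \mu}\big(X \neq \psi_n(\phi_n(X))\big) = \sum_{x \in \mathcal{X}} f_\mu(x)\, h_n(x),
\]
where $h_n(x) := \mathbf{1}\{\psi_n(\phi_n(x)) \neq x\} \in \{0,1\}$. The key structural observation is that the prototype $y_{n,\phi_n(x)}$ must lie in $\pi_n(x)$, so whenever $\pi_n(x) = \{x\}$ one automatically has $\psi_n(\phi_n(x)) = x$ and hence $h_n(x)=0$; in particular, $h_n(x) \leq \mathbf{1}\{\pi_n(x) \neq \{x\}\}$ pointwise.

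For sufficiency I would use the reading of asymptotic sufficiency from Definition~\ref{def_asymtotically_suff_part}: on the countable alphabet $\mathcal{X}$, it amounts to $\pi_n(x) = \{x\}$ for all $n$ large enough, for $\mu$-almost every $x$. The bound above then forces $h_n(x) \to 0$ on a set of full $\mu$-measure, and since $0 \leq h_n \leq 1$ the bounded convergence theorem delivers $d(\phi_n, \psi_n, \mu^n) = \mathbb{E}_\mu[h_n(X)] \to 0$.

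Conversely, from $d(\phi_n, \psi_n, \mu^n) = \mu(\{x : h_n(x) = 1\}) \to 0$ and the trivial bound $\mu(\{x : h_n(x)=1\}) \geq f_\mu(x_0)\, h_n(x_0)$, every $x_0 \in \textrm{supp}(\mu)$ must satisfy $h_n(x_0) = 0$ for all $n$ large enough, i.e., $\psi_n(\phi_n(x_0)) = x_0$ eventually. I would translate this prototype-level statement into a partition statement by contradiction: if two distinct points $x, y \in \textrm{supp}(\mu)$ shared a common cell $\pi_n(x) = \pi_n(y)$ for infinitely many $n$, then the unique prototype of that cell could match at most one of $x,y$, forcing $h_n(x)=1$ or $h_n(y)=1$ infinitely often---contradicting the prototype-matching just established. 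Consequently, for $\mu$-a.e.\ $x$, $\limsup_n \pi_n(x) \cap \textrm{supp}(\mu) \subseteq \{x\}$, whence $\mu(\limsup_n \pi_n(x) \setminus \{x\}) = 0$, which is exactly the condition in the footnote of Definition~\ref{def_asymtotically_suff_part}. The delicate part is precisely this necessity step: vanishing \emph{expected} distortion gives direct control only on the prototypes, not on the cells themselves, and one must leverage the uniqueness of the prototype inside each cell to exclude two support points from repeatedly sharing a block.
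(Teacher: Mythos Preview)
Your proof is correct and follows essentially the same route as the paper: reduce to the single-letter identity $d(\phi_n,\psi_n,\mu^n)=\mathbb{E}_\mu[h_n(X)]$, invoke bounded convergence for sufficiency, and for necessity argue that two distinct support points sharing a cell infinitely often forces a strictly positive distortion along a subsequence. If anything, your converse is stated more carefully than the paper's (which writes ``for all $n\geq N$'' where only ``infinitely often'' is available), and you make explicit the role of prototype uniqueness inside each cell.
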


%\begin{proof}
The proof of Lemma \ref{lemma_zero_distortion} is presented in Appendix~\ref{proof_lemma_zero_distortion}.
%\end{proof}

Studying the minimum achievable rate for zero-distortion coding requires the following definition.

%================================================
\begin{definition}\label{def_entropy_sub_sigmafield}
	For $\mu\in \mathcal{P}_{\mathcal{H}}(\mathcal{X})$ and a partition $\pi$ of  
	$\mathcal{X}$, the entropy of $\mu$ restricted to the sigma-field induced by  $\pi$, 
	which is denoted by $\sigma(\pi)$, is given by  
	%.......................
	\begin{equation}\label{eq_proof_lemma_entropy_2}
		H_{\sigma(\pi)}(\mu) \coloneqq    - \sum_{\mathcal{A}\in \pi}  \mu(\mathcal{A}) \log \mu(\mathcal{A}).
	\end{equation}
\end{definition}
A basic  inequality~\cite{csiszar_2004,cover_2006} shows that if $\sigma(\pi) \subset \sigma(\bar{\pi})$, then $H_{\sigma(\pi)}(\mu) \leq H_{\sigma(\bar{\pi})}(\mu)$ for every $\mu$. In particular, $H_{\sigma(\pi)}(\mu) \leq H(\mu)$, where it is simple to show that $H(\mu)=\sup_{\pi\in \Pi(\mathcal{X})} H_{\sigma(\pi)}(\mu)$  with $\Pi(\mathcal{X})$ representing the collection of finite partitions of $\mathcal{X}$. 
Furthermore, it is possible to state the following result.
%================================================
\begin{lemma}\label{lemma_asym_suff}
	If a sequence of partitions $\left\{ \pi_n\right\}_{n \geq 1}$ of $\mathcal{X}$ is asymptotically 
	sufficient with respect to $\mu$ (Def. \ref{def_asymtotically_suff_part}), then 
	%.......................
	\begin{equation}\label{eq_proof_lemma_entropy_3}
		\lim_{n \longrightarrow \infty} H_{\sigma(\pi_n)}(\mu)=H(\mu).
	\end{equation}
\end{lemma}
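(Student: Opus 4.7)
The plan is to express $H_{\sigma(\pi_n)}(\mu)$ as an integral against $\mu$ of a pointwise information function, identify the pointwise limit using asymptotic sufficiency, and then apply Fatou's lemma combined with the obvious upper bound $H_{\sigma(\pi_n)}(\mu)\leq H(\mu)$ stated just before the lemma.

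More concretely, I would first rewrite
\begin{equation*}
H_{\sigma(\pi_n)}(\mu) \;=\; -\sum_{\mathcal{A}\in \pi_n}\mu(\mathcal{A})\log\mu(\mathcal{A}) \;=\; \int_{\mathcal{X}} g_n(x)\,d\mu(x),
\end{equation*}
where $g_n(x) := -\log \mu(\pi_n(x))$ is the information function of the partition $\pi_n$ (the identity follows by splitting the integral over each cell $\mathcal{A}_{n,i}$, on which $g_n$ is constant). The function $g_n$ is nonnegative since $\mu(\pi_n(x))\leq 1$.

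Next, I would use the asymptotic sufficiency hypothesis to pin down the pointwise limit of $g_n$. By Definition~\ref{def_asymtotically_suff_part} together with the precise statement in the footnote, for $\mu$-almost every $x \in \textrm{supp}(\mu)$,
\begin{equation*}
\mu\!\left(\pi_n(x)\setminus \left\{x\right\}\right) \;\leq\; \mu\!\left(\bigcup_{k\geq n}\pi_k(x)\setminus \left\{x\right\}\right)\;\xrightarrow[n\to\infty]{}\;0,
\end{equation*}
so $\mu(\pi_n(x))\to \mu(\left\{x\right\}) = f_\mu(x)$, and therefore $g_n(x)\to -\log f_\mu(x)$ $\mu$-a.e.

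Then, since $g_n\geq 0$, Fatou's lemma yields
\begin{equation*}
\liminf_{n\to\infty} H_{\sigma(\pi_n)}(\mu) \;=\; \liminf_{n\to\infty}\int g_n\,d\mu \;\geq\; \int \liminf_{n\to\infty} g_n\,d\mu \;=\; -\sum_{x\in \mathcal{X}} f_\mu(x)\log f_\mu(x) \;=\; H(\mu).
\end{equation*}
Combining this with the basic inequality $H_{\sigma(\pi_n)}(\mu)\leq H(\mu)$ recalled right before the lemma, which gives $\limsup_n H_{\sigma(\pi_n)}(\mu)\leq H(\mu)$, the two bounds sandwich the sequence and prove (\ref{eq_proof_lemma_entropy_3}).

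The one subtle point, and what I would flag as the main obstacle in writing up the argument, is that the entropy functional is not continuous on $\mathcal{P}(\mathcal{X})$ for countably infinite $\mathcal{X}$ (a point already emphasized in the proof of Theorem~\ref{theorem_entropy}), so one cannot conclude by an appeal to continuity of $H$ under weak or pointwise convergence of the induced measures. Fatou's lemma circumvents this precisely because $g_n\geq 0$ makes no integrable domination necessary; a dominated-convergence attempt would require producing an $\ell_1(\mathcal{X})$ upper envelope for $\{g_n\}$, which in general is not available without extra assumptions on $\mu$ or on how the $\pi_n$ refine.
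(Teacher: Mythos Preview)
Your argument is correct. The paper takes a closely related but distinct route: it writes the \emph{difference}
\[
H(\mu)-H_{\sigma(\pi_n)}(\mu)=\sum_{x\in\mathcal{X}} f_\mu(x)\,\tilde g_n(x),\qquad \tilde g_n(x)=\log\frac{\mu(\pi_n(x))}{f_\mu(x)}\geq 0,
\]
observes that $\tilde g_n\to 0$ $\mu$-a.e.\ by asymptotic sufficiency, and then applies the dominated convergence theorem with the envelope $\tilde g_n(x)\leq \log(1/f_\mu(x))$, which lies in $\ell_1(\mu)$ precisely because $H(\mu)<\infty$. So your closing remark is slightly off: a DCT proof \emph{is} available, and the required dominating function is exactly the self-information $-\log f_\mu$; the ``extra assumption'' it needs is just finite entropy, which is implicit throughout the paper (the ambient class is $\mathcal{P}_{\mathcal{H}}(\mathcal{X})$). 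Conversely, your Fatou-plus-monotonicity argument has the mild advantage that it does not invoke $H(\mu)<\infty$ at all for the $\liminf$ direction, so it yields the statement even when $H(\mu)=\infty$ (where it reads $H_{\sigma(\pi_n)}(\mu)\to\infty$); the paper's DCT step would not cover that case.
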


%\begin{proof}
The proof of this result is presented in Appendix~\ref{proof_lemma_asym_suff}.
%\end{proof}

This implies that if a two-stage scheme $\left\{ (\phi_n, \psi_n,\mathcal{C}_n): n\geq 1\right\}$ achieves zero distortion, then   
\begin{equation}
\left\{ \pi_n=\left\{\phi_n^{-1}(\left\{ i\right\}): i \in \Gamma_k \right\}: n\geq 1 \right\}
	\end{equation}
is asymptotically sufficient for $\mu$ (cf. Lemma~\ref{lemma_zero_distortion}). From the well-known result in lossless variable-length source coding~\cite{cover_2006}, we have that:
%.......................
	\begin{IEEEeqnarray}{rCl}
		r(\phi_n, \mathcal{C}_n, \mu^n) &\geq &\frac{1}{n} H\left(\Phi_n(X^n)\right) \nonumber\\
		&=& H_{\sigma(\pi_n)}(\mu)\label{eq_proof_lemma_entropy_4}
\end{IEEEeqnarray}
and consequently, Lemma~\ref{lemma_asym_suff} implies that
%.......................
%\begin{align}\label{eq_proof_lemma_entropy_4b}
		$\liminf\limits_{n \rightarrow \infty} r(\phi_n, \mathcal{C}_n, \mu^n) \geq H(\mu)$.
%\end{align}
%..........................
%.........................
Hence, letting $\bar{R}_{al}(\mu)$ to be the minimum achievable rate w.r.t. the family of two-stage lossy schemes in  Definition~\ref{def_achievable}, we obtain that  $\bar{R}_{al}(\mu) \geq  R_{al}(\mu) =H(\mu)$.

The next result shows that there is no additional  overhead  (in terms of bits per sample),  if we restrict the problem to the family of two-stage lossy schemes.
%----------------------
\begin{proposition}\label{proposition_entropy_two_stage} 
For a stationary and memoryless source $\mathbf{X}=\left\{X_i \right\}_{ i=1}^\infty$ driven by $\mu \in  \mathcal{P}_{\mathcal{H}}(\mathcal{X})$, 
	%.............
	\begin{equation}
		\bar{R}_{al}(\mu)=R_{al}(\mu) =H(\mu).	\nonumber
	\end{equation}
\end{proposition}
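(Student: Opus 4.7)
The discussion immediately preceding the proposition already supplies all three nontrivial converse inequalities: Theorem~\ref{theorem_entropy} gives $R_{al}(\mu) \geq H(\mu)$, the inclusion of two-stage schemes into the general class yields $\bar{R}_{al}(\mu) \geq R_{al}(\mu)$, and the argument built on Lemmas~\ref{lemma_zero_distortion} and~\ref{lemma_asym_suff} (via \eqref{eq_proof_lemma_entropy_4}) gives $\bar{R}_{al}(\mu) \geq H(\mu)$. What remains is the achievability: I plan to exhibit an explicit sequence of two-stage schemes $\{(\phi_n,\psi_n,\mathcal{C}_n)\}_{n\geq 1}$ whose distortion vanishes and whose rate tends to $H(\mu)$, which will simultaneously establish $\bar{R}_{al}(\mu)\leq H(\mu)$ and, a fortiori, $R_{al}(\mu)\leq H(\mu)$.

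Using the assumed decreasing ordering $f_\mu(1)\geq f_\mu(2)\geq\cdots$ of the atoms of $\mathcal{X}=\mathbb{N}$, I would choose an increasing sequence $k_n\uparrow\infty$ (for instance $k_n=n$) and define the quantizer by $\phi_n(i)=i$ for $i<k_n$ and $\phi_n(i)=k_n$ for $i\geq k_n$, with prototypes $\psi_n(j)=j$ for $1\leq j\leq k_n$. The induced partition is $\pi_n=\{\{1\},\dots,\{k_n-1\},\{k_n,k_n+1,\dots\}\}$, which satisfies $\pi_n(x)=\{x\}$ for every $x\in\mathcal{X}$ once $n$ is large enough; thus $\{\pi_n\}_{n\geq 1}$ is asymptotically sufficient for $\mu$ in the sense of Definition~\ref{def_asymtotically_suff_part}. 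Lemma~\ref{lemma_zero_distortion} then gives $\lim_{n\to\infty} d(\phi_n,\psi_n,\mu^n)=0$, and Lemma~\ref{lemma_asym_suff} gives $\lim_{n\to\infty} H_{\sigma(\pi_n)}(\mu)=H(\mu)$.

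For the rate, note that $Y^n=(\phi_n(X_1),\ldots,\phi_n(X_n))$ is an i.i.d.\ sequence on the \emph{finite} alphabet $\Gamma_{k_n}$ with entropy $H(Y_1)=H_{\sigma(\pi_n)}(\mu)$. Applying an optimal (e.g., Shannon or Huffman) prefix-free block code $\mathcal{C}_n:\Gamma_{k_n}^n\to\{0,1\}^\star$ to the whole block $Y^n$ yields
\begin{equation}
 r(\phi_n,\mathcal{C}_n,\mu^n)=\frac{1}{n}\mathbb{E}\bigl\{\mathcal{L}(\mathcal{C}_n(Y^n))\bigr\}\leq \frac{H(Y^n)+1}{n}=H_{\sigma(\pi_n)}(\mu)+\frac{1}{n},\nonumber
\end{equation}
so $\limsup_{n\to\infty} r(\phi_n,\mathcal{C}_n,\mu^n)\leq H(\mu)$ by Lemma~\ref{lemma_asym_suff}. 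Combined with the vanishing distortion, this shows that any $R>H(\mu)$ is achievable by a two-stage scheme, hence $\bar{R}_{al}(\mu)\leq H(\mu)$. Chaining with the converses noted above yields $\bar{R}_{al}(\mu)=R_{al}(\mu)=H(\mu)$, which also completes the achievability direction missing from the proof of Theorem~\ref{theorem_entropy}.

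There is no real obstacle here once Lemmas~\ref{lemma_zero_distortion} and~\ref{lemma_asym_suff} are in hand; the only point that deserves care is ensuring that the growth rate $k_n\to\infty$ is compatible with a vanishing block-coding overhead, which is automatic because the $+1/n$ overhead of Shannon coding on blocks of length $n$ is independent of the alphabet size $k_n$. Everything else follows by assembling the previously established pieces.
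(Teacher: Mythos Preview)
Your proposal is correct and follows essentially the same approach as the paper: both use the tail partition $\pi_n=\{\{1\},\dots,\{k_n-1\},\{k_n,k_n+1,\dots\}\}$ with $k_n\to\infty$ (the paper simply takes $k_n=n+1$), invoke asymptotic sufficiency to get vanishing distortion and $H_{\sigma(\pi_n)}(\mu)\to H(\mu)$, and apply a Shannon block code to obtain $r(\phi_n,\mathcal{C}_n,\mu^n)\leq H_{\sigma(\pi_n)}(\mu)+1/n$. Your write-up is in fact slightly more explicit than the paper's in citing Lemmas~\ref{lemma_zero_distortion} and~\ref{lemma_asym_suff} and in noting that the $+1/n$ overhead is independent of $k_n$.
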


%\begin{proof}
The proof is presented in Appendix~\ref{proof_proposition_entropy_two_stage}.
%\end{proof}
%-------------------

%____________________________________________________________________
%____________________________________________________________________
\section{Universal  Almost Lossless Source Coding}
%\subsection{\textcolor{red}{Defintion}}
\label{sec_universal}
Consider a stationary and memoryless source $\left\{X_n\right\}_{n=1}^\infty$  on a $\infty$-alphabet  
with unknown distribution but belonging to a family $\Lambda \subset \mathcal{P}(\mathcal{X})$.
The main question to address here is if there exists a lossy coding scheme whose rate achieves the minimum feasible rate in~Theorem~\ref{theorem_entropy},  for every possible distribution in   
$\Lambda$, while the  distortion goes to zero as the block-length tends to infinity as defined below.
%...........................................
\begin{definition}%[Admissible codes] 
	\label{def_universality}
	A family of distribution $\Lambda$ is said to admit an almost lossless USC
	scheme, if there is a lossy source code $\left\{ (f_n, g_n)\right\}_{n\geq 1}$ simultaneously satisfying:
	%.............
	\begin{equation} \label{eq_sec_background_8}
		\sup_{\mu\in \Lambda} \lim_{n \rightarrow \infty} d(f_n, g_n,\mu^n) =0,
\end{equation}
and
\begin{equation} 
		\label{eq_sec_background_8b}
		\lim_{n \rightarrow \infty} \sup_{\mu\in \Lambda} \Big(r(f_n, \mu^n) -H(\mu)\Big)=0.
\end{equation}
\end{definition}
An almost lossless universal code provides a point-wise convergence of the distortion to zero for every $\mu \in \Lambda$ while constraining the worst-case expected redundancy to vanish as the block length tends to infinity. It is obvious from Definition~\ref{def_universality} that if $\Lambda$  admits a classical lossless universal source code~\cite{kieffer_1978,davisson_1973}, i.e., the worst-case average redundancy vanishes with zero distortion for every finite $n$, then it admits an almost lossless USC. 
The next result shows that there is a richer family of distributions that admits an almost lossless USC scheme:

\begin{theorem}[Feasibility] \label{th_achie_almost_lossless_universality}
	The family $\mathcal{P}_{\mathcal{H}}(\mathcal{X})$ admits an almost lossless USC scheme.
\end{theorem}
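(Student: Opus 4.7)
My plan is to invoke Lemma~\ref{lemma_two_stage_suffcond_al_usc}: it suffices to exhibit a sequence of partitions $\{\pi_n\}_{n \geq 1}$ of $\mathcal{X}$ that is asymptotically sufficient for every $\mu \in \mathcal{P}_\mathcal{H}(\mathcal{X})$ while simultaneously the restricted information radius $R^+(\mathcal{P}_\mathcal{H}(\mathcal{X})^n, \sigma(\pi_n))$ grows sub-linearly in $n$. The two-stage source coding scheme from Section~\ref{sub_sec:quantization_alphabet} then delivers the universal almost lossless code.

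The construction I would use is a simple truncate-and-collapse scheme. Pick a growth sequence $k_n \to \infty$ with $k_n \log n = o(n)$ (for concreteness, $k_n = \lfloor \sqrt{n}\, \rfloor$), and set
\begin{equation}
\pi_n := \bigl\{\{1\}, \{2\}, \ldots, \{k_n - 1\}, \{k_n, k_n+1, \ldots\}\bigr\}.
\end{equation}
Asymptotic sufficiency is then immediate: since $\mathcal{X}$ is countable and $k_n \to \infty$, for every fixed $x \in \mathcal{X}$ the cell $\pi_n(x)$ equals $\{x\}$ once $k_n > x$, so $\pi_n(x) \to \{x\}$ pointwise on $\mathcal{X}$, hence $\mu$-a.e.\ for every $\mu \in \mathcal{P}_\mathcal{H}(\mathcal{X})$ (cf.\ Definition~\ref{def_asymtotically_suff_part_family}).

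The core step is controlling $R^+(\mathcal{P}_\mathcal{H}(\mathcal{X})^n, \sigma(\pi_n))$. The restricted divergence in (\ref{eq_sub_main_1}) depends on $\mu$ only through its projection onto the $k_n$-cell partition $\pi_n$, and conversely every $\nu \in \mathcal{P}(\Gamma_{k_n})$ arises as such a projection of some $\mu \in \mathcal{P}_\mathcal{H}(\mathcal{X})$ (simply lift $\nu$ by placing all mass of the last cell on the singleton $\{k_n\}$, which produces a finitely supported, hence finite-entropy, distribution on $\mathcal{X}$). Consequently the restricted information radius coincides with the classical minimax redundancy for i.i.d.\ sources on the finite alphabet $\Gamma_{k_n}$ at block length $n$, for which the well-known bound
\begin{equation}
R^+\bigl(\mathcal{P}(\Gamma_{k_n})^n\bigr) \leq \tfrac{k_n - 1}{2} \log n + O(k_n)
\end{equation}
is available. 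Under the calibration $k_n \log n = o(n)$ this yields $R^+ = o(n)$, and Lemma~\ref{lemma_two_stage_suffcond_al_usc} concludes the proof.

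The only real tension in this strategy is the joint calibration of $k_n$: it must grow to infinity so that the partitions eventually separate every atom, yet slowly enough that the finite-alphabet redundancy stays sub-linear. Because pointwise convergence on a countable alphabet imposes no quantitative lower bound on the rate at which $k_n \to \infty$, any $1 \ll k_n \ll n/\log n$ works and there is ample slack; the conceptual obstacle, rather than a technical one, is simply the recognition that the lossy truncation decouples the infinite-alphabet source into a finite-alphabet problem where classical universal coding bounds apply.
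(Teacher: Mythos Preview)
Your proposal is correct and follows essentially the same route as the paper: construct the tail partitions $\bar{\pi}_{k_n}=\{\{1\},\ldots,\{k_n-1\},\Gamma_{k_n-1}^c\}$ with $(k_n)$ growing but $o(n/\log n)$, verify asymptotic sufficiency for every $\mu$, bound the restricted information radius by the finite-alphabet minimax redundancy $\tfrac{k_n-1}{2}\log n + O(1)$, and invoke Lemma~\ref{lemma_two_stage_suffcond_al_usc}. The only cosmetic difference is that the paper bounds $R^+(\mathcal{P}_\mathcal{H}(\mathcal{X})^n,\sigma(\pi_n))$ above by $R^+(\mathcal{P}(\mathcal{X}^n),\sigma(\pi_n))$ directly, whereas you argue the two coincide via an explicit lifting; either way the conclusion is identical.
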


The proof  is presented in Section \ref{proof_th_achie_almost_lossless_universality}.

% ------------------------------------------------------------------------------------------
% ------------------------------------------------------------------------------------------
%\item: 
Remarkably,  Theorem~\ref{th_achie_almost_lossless_universality} shows that a weak notion of universality allows to code the complete collection of finite entropy stationary memoryless sources defined on $\infty$-alphabets. 
Since the same result for lossless source coding is not possible~\cite{gyorfi_1994}, an interpretation of Theorem~\ref{th_achie_almost_lossless_universality} is that a non-zero distortion (for any finite block-length) is strictly needed to make the average redundancy of an universal coding scheme vanishing with the block-length. To obtain this result,  the two-stage approach  presented in Section~\ref{sub_sec:quantization_alphabet} was considered. 

%\item: 
If we restrict the family of two-stage schemes to have an exhaustive first-stage mapping, i.e., $\pi_n(x)= \left\{ x \right\}$ for all $x \in \mathcal{X}$ and $n \geq 1$,  then we reduce the approach to the lossless setting (i.e.,  zero distortion for every finite block-length).  In this case, if we apply the condition to obtain Theorem \ref{th_achie_almost_lossless_universality} (stated in Lemma~\ref{lemma_two_stage_suffcond_al_usc} in Section~\ref{proof_lemma_two_stage_suffcond_al_usc}), 
this reduces to verify 
that the i-radius of the family grows sub-linearly with the block-length (more details presented in  Sections~\ref{proof_th_achie_almost_lossless_universality}
 and \ref{proof_lemma_two_stage_suffcond_al_usc}), which is the condition known  for a family of distributions to have a nontrivial minimax redundancy rate~\cite{davisson_1973,kieffer_1978,gyorfi_1994,boucheron_2009,csiszar_2004}.
 
 % Just a comments on Entropy Estimation..............
\subsection{Entropy Estimation with an Almost Lossless Universal Code: A Side Comment}
\label{sec_learning_analysis_of_alsc}
In the lossless case, the existence of a weak minimax source coding scheme $ \left\{f_n: \mathcal{X}^n  \rightarrow   \left\{ 0,1\right\}^*, n\geq 1 \right\}$ for a family of distribution $\Lambda$ implies that 
$\sup_{\mu\in \Lambda} \lim_{n \rightarrow \infty} \big(r(f_n, \mu^n) -H(\mu)\big)=0$ \cite{csiszar_2004}. Consequently,  the average length of the code $r(f_n, \mu^n)=\mathbb{E} \left\{ \mathcal{L}(\mathcal{C}_n(\Phi_n(X^n))) \right\}/n$ is a weak consistent estimator of the entropy distribution-free in  $\mu\in \Lambda$ \cite{beirlant_1997}.  For the family of finite entropy stationary and memoryless sources, we have that it is not feasible to have a weak minimax USC scheme in $\infty$-alphabets.   
In fact, \cite[Theorem 2]{gyorfi_1994} says that for every code $f_n$ and $n\geq 1$, there exists $\mu\in \mathcal{P}_{\mathcal{H}}(\mathcal{X})$ such that $r(f_n, \mu^n)=\infty$. In other words, there is no lossless variable-length source coding scheme that offers a weakly consistent estimator of the entropy using its average block-length (per letter).
In contrast, Theorem \ref{th_achie_almost_lossless_universality} shows that there is an almost lossless 
USC scheme $\big\{(\phi_n, \psi_n,$ $ \mathcal{C}_n, \mathcal{D}_n): n\geq 1 \big\}$ with an average length that offers a distribution-free weakly consistent estimation of the entropy in $\mathcal{P}_{\mathcal{H}}(\mathcal{X})$. In fact from the proof of Lemma \ref{lemma_two_stage_suffcond_al_usc} (Sec.\ref{proof_lemma_two_stage_suffcond_al_usc}), we have that 
%.........................
\begin{equation*}
\lim_{n \rightarrow \infty} \sup_{\mu\in \mathcal{P}_\mathcal{H}(\mathcal{X})} \left(r(\phi_n, \mathcal{C}_n, \mu^n) -H_{\sigma(\pi_n)}(\mu)\right)=0,
\end{equation*}
and from the fact that $\left\{ \pi_n: n\geq 1\right\}$ is asymptotically  sufficient for $\mathcal{P}_{\mathcal{H}}(\mathcal{X})$ (Definition \ref{def_asymtotically_suff_part_family} in Section~\ref{proof_th_achie_almost_lossless_universality}), it follows that:  $\lim_{n \rightarrow \infty} r(\phi_n, \mathcal{C}_n, \mu^n) = H(\mu)$, for all $\mu\in \mathcal{P}_{\mathcal{H}}(\mathcal{X})$. Then, by relaxing the lossless block-wise assumption (introducing a  non-zero distortion), we control the worse-case redundancy, which  is bounded by the i-radius of $\mathcal{P}_{\mathcal{H}}(\mathcal{X})$ restricted to a sub-sigma field (see (\ref{eq_sub_main_2})). This flexibility enables the capacity to balance two sources of errors: 
 $r(\phi_n, \mathcal{C}_n, \mu^n) -H_{\sigma(\pi_n)}(\mu)$ (a kind of estimation error)  and $H(\mu)-H_{\sigma(\pi_n)}(\mu)$ (an approximation error),  that at the end offers an distribution-free estimate of the entropy (point-wise) using the average length of the code.

% -------------------------------------------------------------------------------------------------------
% -------------------------------------------------------------------------------------------------------
% Uniform Convergence of the Distortion for Two-Stage Lossy Coding Schemes
% -------------------------------------------------------------------------------------------------------
% -------------------------------------------------------------------------------------------------------
\section{Uniform Convergence of the Distortion} 
\label{sub_sec:uniform_convergence_distortion}
In this section, we further focus on a stronger notion of universal weak source coding.   We study whether is possible to achieve an  uniform convergence of the distortion to zero (over the entire family $\Lambda$), instead of the point-wise convergence stated in Definition \ref{def_universality}. To this end, we restrict the analysis to the rich family of envelope distributions introduced in Section \ref{sub_sec_envelop_results_usc}. 
We can state the following dichotomy: 
%...................
\begin{theorem}[Uniform convergence] \label{th_dichotomy_stronger_universality}
	Let us consider the family of envelope distributions $\Lambda_f$.  
	\begin{itemize}
		% Achievability for Summable envelops:
		\item[i)] If $f\in \ell_1(\mathcal{X})$, then there is a two-stage coding scheme 
		$\left\{ (\phi_n, \psi_n,\mathcal{C}_n): n\geq 1 \right\}$ with $\left| \pi_n\right| < \infty$ (finite size) such that
		%..................
		\begin{align}
		&\lim_{n \rightarrow \infty} \sup_{\mu\in \Lambda_f} d(\phi_n, \psi_n,\mu^n) = 0, \text{ and }	\nonumber\\
		&\lim_{n \rightarrow \infty} \sup_{\mu\in \Lambda_f} \Big(r(\phi_n, \mathcal{C}_n, \mu^n) -H(\mu)\Big)=0.\nonumber
		\end{align}
		
		% Uniform convergence of the distortion to zero => minimax expected redudancy is inifnite <=> 
		% the family does not admit a universal source code. 
		\item[ii)] Otherwise, i.e., $f\notin \ell_1(\mathcal{X})$, for any two-stage 
		code $(\phi_n, \psi_n,\mathcal{C}_n)$ of length $n$ with $\left| \pi_n\right| < \infty$  it follows that 
		%..................
		\begin{equation*}
			\sup_{\mu\in \Lambda_f} d(\phi_n, \psi_n,\mu^n) = 1,
		\end{equation*}
		while if $\left| \pi_n\right| = \infty$, then 
		%..................
		\begin{equation*}
			\sup_{\mu\in \Lambda_f} \Big (r(\phi_n, \mathcal{C}_n, \mu^n) -H(\mu)\Big )= \infty.
		\end{equation*}
		More generally, for a lossy code $(f_n,g_n)$ of length $n$, provided that 
		%..................
		\begin{equation*}
			\sup_{\mu \in \Lambda_f} d(f_n,g_n,\mu^n)<1, 
		\end{equation*}
		then
		\begin{equation*}
			\sup_{\mu \in \Lambda_f} \Big(r(f_n,\mu^n) -H(\mu)\Big)=\infty.\nonumber
		\end{equation*}
	\end{itemize}
\end{theorem}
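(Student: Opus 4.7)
\emph{Part (i), achievability.} I would instantiate the two-stage scheme of Section~\ref{sub_sec:quantization_alphabet} with the tail partition $\bar{\pi}_{k_n} = \{\{1\},\ldots,\{k_n-1\},\Gamma_{k_n-1}^c\}$ used in Theorem~\ref{th_achie_almost_lossless_universality}, composed with a minimax-optimal lossless prefix code (e.g., Krichevsky--Trofimov) over the finite alphabet $\Gamma_{k_n}$, choosing $k_n\to\infty$ with $k_n = o(n/\log n)$. The worst-case distortion is at most the $\mu$-probability of the tail cell, so $\sup_{\mu \in \Lambda_f} d(\phi_n,\psi_n,\mu^n) \leq \sum_{i\geq k_n} f(i)$, which vanishes by $f \in \ell_1$. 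For the worst-case redundancy, $H_{\sigma(\bar{\pi}_{k_n})}(\mu) \leq H(\mu)$ implies $r-H(\mu) \leq r - H_{\sigma(\bar{\pi}_{k_n})}(\mu)$, and the right-hand side is the finite-alphabet lossless redundancy on $\Gamma_{k_n}$, uniformly $\tfrac{k_n-1}{2n}\log n + O(1/n) \to 0$ by the choice of $k_n$.

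\emph{Part (ii), finite partition converse.} Given any two-stage code with $|\pi_n|=k<\infty$ and prototype set $S = \{y_{n,1},\ldots,y_{n,k}\}$, the decoder $\psi_n\circ\phi_n$ takes values in $S$. Since $f \notin \ell_1$ and $|S|<\infty$, $\sum_{x\notin S} f(x)=\infty$, so one builds a probability measure $\mu^\star \in \Lambda_f$ supported entirely on $\mathcal{X}\setminus S$ by greedily filling mass along the complement (for some enumeration $\{x_j\}$, set $\mu^\star(x_j) = \min(f(x_j), 1 - \sum_{i<j}\mu^\star(x_i))$). Under $\mu^\star$, $X_1 \neq \psi_n(\phi_n(X_1))$ almost surely, so $d(\phi_n,\psi_n,(\mu^\star)^n)=1$ and $\sup_{\mu \in \Lambda_f} d=1$.

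\emph{Part (ii), infinite partition and general claim.} For $|\pi_n|=\infty$, let $F(i)\coloneqq\sum_{x\in\mathcal{A}_{n,i}} f(x)$; since $\pi_n$ partitions $\mathcal{X}$, $\sum_i F(i)=\sum_x f(x)=\infty$, so $F \notin \ell_1(\mathbb{N})$. The second stage $\mathcal{C}_n$ is a lossless prefix code on $\mathbb{N}^n$, and by the dichotomy of Boucheron et al.~\cite{boucheron_2009} applied to the envelope family $\Lambda_F$ the sub-sigma-field information radius $R^+(\Lambda_F^n)$ is infinite for every $n$. Lifting to $\Lambda_f$, the plan is to exhibit $\mu_k \in \Lambda_f$ whose cell-laws $\nu_k = \mu_{k,\sigma(\pi_n)} \in \Lambda_F$ realize $r(\phi_n,\mathcal{C}_n,\mu_k^n) - H_{\sigma(\pi_n)}(\mu_k) \to \infty$, chosen so that the within-cell spread $H(\mu_k) - H_{\sigma(\pi_n)}(\mu_k) = \sum_i \nu_k(i) H(\mu_k|_{\mathcal{A}_{n,i}})$ stays bounded (concentrating $\mu_k$ on a near-maximal-$f$ representative of each active cell whenever possible). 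For the final, most general claim, I would argue by contradiction: if both $\sup_\mu d \leq d_0 < 1$ and $\sup_\mu(r-H) \leq C$ held, one could upgrade the lossy code to a quasi-lossless description by appending an error-pattern side message whose per-letter expected length is controlled, via the Ho--Verd\'u identity, by $H(\tilde\mu_{\theta(d_0)})$, producing a universal lossless-type bound on $\Lambda_f$ and contradicting the non-existence of weakly minimax universal codes on envelope families with $f \notin \ell_1$.

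The hard part will be the converse for $|\pi_n|=\infty$ and the general statement: one must convert the distortion-tolerance hypothesis into a lossless-type constraint on a derived family that inherits the non-summability of $f$. The subtlety is that the prototype-only envelope $\tilde f(i) \coloneqq \max_{x\in\mathcal{A}_{n,i}} f(x)$ may be summable even when $f$ is not (e.g.\ for exponentially-growing cells), so a simple restriction to prototype-supported distributions is insufficient; the construction must accommodate within-cell spread while carefully budgeting $H(\mu_k|\sigma(\pi_n))$ so that it does not cancel the redundancy gain extracted from the non-summability of $F$.
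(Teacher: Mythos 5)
Parts (i) and the finite-partition half of (ii) are correct and coincide with the paper's argument: tail partition with $k_n\to\infty$, $k_n=O(n^{\tau})$, distortion bounded by $\sum_{x\geq k_n}f(x)$, redundancy bounded via $H_{\sigma(\pi_n)}(\mu)\leq H(\mu)$ and the finite-alphabet minimax bound; and for a finite prototype set $S$, non-summability of $f$ on $\mathcal{X}\setminus S$ yields a $\mu^\star\in\Lambda_f$ supported off $S$ with distortion $1$. The problems are in the two converses you yourself flag as ``the hard part,'' and there they are genuine, not just matters of polish. For the infinite-partition two-stage converse you only state a \emph{plan} (``exhibit $\mu_k$ \ldots chosen so that the within-cell spread stays bounded''), and your parenthetical hint --- concentrating $\mu_k$ on a near-maximal-$f$ representative of each cell --- collides with your own correct observation that the representative-only envelope may be summable when $f$ is not. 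The paper resolves this by invoking Boucheron et al.'s construction of an infinite subfamily $\tilde\Lambda\subset\Lambda_f$ of \emph{finitely supported} distributions with pairwise disjoint supports: finite support gives $H(\tilde\mu_j)<\infty$, so an infinite redundancy relative to $H_{\sigma(\pi_n)}$ transfers to an infinite redundancy relative to $H$. Without some such explicit family your step from $R^+(\Lambda_F^n)=\infty$ to $\sup_\mu\bigl(r-H(\mu)\bigr)=\infty$ does not close, because $r-H(\mu)\leq r-H_{\sigma(\pi_n)}(\mu)$ goes the wrong way.

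The proposed contradiction for the general claim does not work as described. To ``upgrade the lossy code to a quasi-lossless description'' you must transmit not only the error pattern but the actual source symbols at the erroneous positions, and over a countably infinite alphabet the expected length of that correction message is not uniformly controllable over $\Lambda_f$ --- it is essentially the intractable lossless problem you are trying to reduce to. Moreover the Ho--Verd\'u identity $R_\mu(d)=H(\mu)-H(\tilde\mu_{\theta(d)})$ is a \emph{lower} bound on the rate of any lossy code, not an upper bound on the cost of repairing its errors, and $H(\tilde\mu_{\theta(d_0)})$ is distribution-dependent and unbounded over a non-summable envelope class. The paper avoids any conversion: it decomposes an arbitrary $(f_n,g_n)$ into a vector quantizer $\phi_n:\mathcal{X}^n\to\mathcal{I}_n$ followed by a prefix code (Proposition~\ref{pro_redundancy_bound} lower-bounds the redundancy by the information radius of the quantized family), shows that $\sup_\mu d<1$ forces at least one reconstruction point per disjoint support $\mathcal{A}^n_{\tilde\mu_j}$ and hence $|\mathcal{I}_n|=\infty$, extracts a sub-collection whose support coverings are disjoint unions of $\pi_n$-cells, and observes that on the induced coarser partition $\xi_n$ these measures degenerate, so $\sup_l\mathcal{D}_{\sigma(\xi_n)}(\bar\mu_l^n|v)=\infty$ for every $v$; finite supports then convert this into $\sup_\mu\bigl(r(f_n,\mu^n)-H(\mu)\bigr)=\infty$. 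You would need to supply an argument of this kind (or repair the conversion) for the converse to stand.
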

% ------------------------------------------------------------------------------------------
% ------------------------------------------------------------------------------------------

The proof is presented in Section \ref{proof_th_dichotomy_stronger_universality}.

	%1. \item 
	Theorem \ref{th_dichotomy_stronger_universality} states that if the envelope function is summable, there is a two-stage coding scheme of finite size 
	that offers a uniform convergence of the distortion to zero (over $\Lambda_f$), while ensuring that the worst-case average 
	redundancy (over $\Lambda_f$) vanishes with the block-length. 
	On the negative side, for all stationary memoryless sources indexed by a non-summable envelope function, it is not possible to achieve an uniform 
	convergence of the distortion to zero with  
	a finite size two-stage coding rule. An infinite size rule is indeed needed, 
	i.e., $\left| \pi_n\right| = \infty$, eventually with $n$, that on the down-side it has an unbounded i-radius 
	(details presented in Lemmas \ref{pro_invariance_envelop} and \ref{dichotomy_envelops} at Section \ref{proof_th_dichotomy_stronger_universality}). 
	Importantly, this impossibility result remains when enriching the analysis with the adoption of general lossy coding rules (details in Sec. \ref{converse_general_lossy_scheme}).
	
	% 2.\item 
	Finally, it worths noting that the family $\Lambda_f$ with $f\in \ell_1(\mathcal{X})$ has a 
	finite regret and redundancy in the context of lossless universal source coding \cite[Ths. 3 and 4]{boucheron_2009}. Furthermore,  summability is the necessary and sufficient condition on $f$ that makes this collection strongly minimax universal 
	in  lossless source coding \cite{boucheron_2009}. Then, based on this strong almost lossless source coding criterion
	(with a uniform convergence to zero of the distortion and the redundancy) 
	it is not possible to code (universally) a richer family of distributions when restricting the analysis to envelope families.

%^^^^^^^^^^^^^^^^^^^^^^^^^^^^^^^^^^^^^^^^^^^^^^^^^^^^^^^^^^^^^^^^^^^^^^^^^
%^^^^^^^^^^^^^^^^^^^^^^^^^^^^^^^^^^^^^^^^^^^^^^^^^^^^^^^^^^^^^^^^^^^^^^^^^
\section{Redundancy Gains for Summable Envelope Families}
\label{sec_redundancy_gains_envelop_families}

Theorem \ref{th_dichotomy_stronger_universality} states that we can achieve a uniform convergence 
of the distortion to zero while the worst-case redundancy vanishes with a two-stage lossy scheme if,  and only if, $\Lambda_f$ has a summable envelope function. On the lossless variable-length source coding side,  if $f\in \ell_1(\mathcal{X})$ we know from Theorem \ref{dichotomy_envelops} that the i-radius of the family
%.......................
\begin{equation*}%\label{eq_sec_rgef_0}
R^+(\Lambda_f^n) =   \min_{v^n \in \mathcal{P}(\mathcal{X}^n)} \sup_{\mu^n \in \Lambda_f^n} \mathcal{D}(\mu^n | v^n),
\end{equation*}
is $o(n)$  \cite[Ths. 3 and 4]{boucheron_2009}, 
which is equivalent to state that $\Lambda_f$  
is strongly minimax universal \cite{csiszar_2004}. Therefore, under the assumption that  $f\in \ell_1(\mathcal{X})$, the lossy approach with asymptotic vanishing distortion may appear to not be useful if no gains are observed in the way the worst-case redundancy approaches zero in (\ref{eq_sec_background_8b}), with respect  to the normalized i-radius sequence $(R^+(\Lambda_f^n)/n)_{n\geq 1}$ that governs minimax redundancy in the lossless case \cite{csiszar_2004,boucheron_2009}.

This section explores the feasibility of obtaining gains in terms of the minimax redundancy of a  two-stage lossy approach tends to zero, when compared to the minimax redundancy (of the lossless scenario) for $\Lambda_f$ when $f\in \ell_1(\mathcal{X})$. We focus on the finite size {\em tail-based partition} scheme used to prove Theorem \ref{th_achie_almost_lossless_universality} and the achievability part of Theorem \ref{th_dichotomy_stronger_universality}.

%---------------------------------------------------------------------------------------
%---------------------------------------------------------------------------------------
%\subsection{Preliminaries on Summable Envelop Families}
\subsection{Preliminaries}
Let us consider a positive and non-decreasing sequence of integers $(k_n)_{n}$ and the collection of {\em tail partitions} induced by: 
\begin{equation}\label{eq_sec_rgef_1}
	\tilde{\pi}_{k_n} \coloneqq    \left\{  \left\{ 1\right\}, \cdots,  \left\{ k_n-1\right\}, \Gamma_{k_n-1}^c \right\}, \ \forall n\geq 1, 
\end{equation}
where $\Gamma_{k}= \left\{1,\dots ,k\right\}$. Note that $\tilde{\pi}_{k_n}$ resolves all the elements of $\Gamma_{k_n-1}= \left\{1,\dots,k_n-1 \right\}$ and consequently,  there is a pair $(\tilde{\phi}_n, \tilde{\psi}_n)$ associated with $\tilde{\pi}_{k_n}$ such that $\forall \mu \in \Lambda_f$, 
%..............................
\begin{IEEEeqnarray}{rCl}
	d(\tilde{\phi}_n, \tilde{\psi}_n,\mu^n) &\leq& \mu(\Gamma_{k_n-1}^c) 
	%&\leq& 
\leq	\sum_{x\geq k_n} f(x). \label{eq_sec_rgef_2}
	\end{IEEEeqnarray}
Consequently, it follows that 
\begin{equation}
\sup_{\mu \in \Lambda_f} d(\tilde{\phi}_n, \tilde{\psi}_n,\mu^n) \leq \sum\limits_{x \geq k_n} f(x)<\infty. 
\end{equation}
It is then easy to verify that $(1/k_n)_n$ being  $o(1)$ is the necessary and sufficient condition for the tail-based scheme to have the uniform convergence (over $\Lambda_f$)  of the distortion to zero. 

%Jorge 2019************ Provide a better and clear justification on the use of information radius projected. 
Concerning the worst-case minimax redundancy of the two-stage scheme induced by $\left\{ \tilde{\pi}_{k_n}\right\}$ in (\ref{eq_sec_background_8b}), 
we can consider a lossy mapping  $(\tilde{\phi}_n, \tilde{\psi}_n)$ consistent with $\tilde{\pi}_{k_n}$ (first-stage), where it is clear 
that the entropy of $Y^n=(\tilde{\phi}_n (X_1),..,\tilde{\phi}_n(X_n))$, which is $H(\tilde{\Phi}_n(X^n))=n H_{\sigma(\tilde{\pi}_{k_n})}(\mu) \leq n H(\mu)$,  
is a  
lower bound for the performance of any prefix-free code acting on  $Y^n$. Then given the first-stage $(\tilde{\phi}_n, \tilde{\psi}_n)$,   
we define the worse-case redundancy of  any $\mathcal{C}_n:\Gamma_{k_n}^n \rightarrow \left\{0,1\right\}^*$ as follows: 
%...........................
\begin{equation*}
	\bar{R}(\Lambda_f^n,  \tilde{\pi}_{k_n}, \mathcal{C}_n) 
	\coloneqq    \sup_{\mu \in \Lambda_f} \big(r(\tilde{\phi}_n,\mathcal{C}_n, \mu^n ) - H_{\sigma(\tilde{\pi}_{k_n})}(\mu)\big).
\end{equation*}
Therefore for any finite $n$ and first-stage 
partition $\tilde{\pi}_{k_n}$,  the minimax redundancy of the second-stage is: 
%........................
\begin{equation} \label{eq_sec_rgef_1b}
	\min_{\mathcal{C}_n: \Gamma^n_{k_n} \rightarrow \left\{0,1\right\}^*} \bar{R}(\Lambda_f^n,  \tilde{\pi}_{k_n}, \mathcal{C}_n). 
\end{equation}
For the second stage, we can use again the connection between prefix-free codes and distributions to map  $\mathcal{C}_n$ to a probability $v$ in 
$\mathcal{P}(\Gamma_{k_n}^n)$,  where the redundancy $r(\tilde{\phi}_n,\mathcal{C}_n, \mu^n ) - H_{\sigma(\tilde{\pi}_{k_n})}(\mu)$ can be expressed as 
one over $n$ the divergence restricted to the cells of $\tilde{\pi}_{k_n}$ \cite{kullback1958}, more precisely as $\frac{1}{n } \mathcal{D}_{\sigma(\tilde{\pi}_{k_n} \times\cdots\times \tilde{\pi}_{k_n})}(\mu^n | v) $ where: 
%........................
\begin{equation*} %\label{eq_lemma_two_stage_4}
\mathcal{D}_{\sigma(\tilde{\pi}_{k_n} \times\cdots\times \tilde{\pi}_{k_n})}(\mu^n | v) \coloneqq  \sum_{\mathcal{A}\in \tilde{\pi}_{k_n} \times\cdots\times \tilde{\pi}_{k_n}} \mu^n(\mathcal{A})\log \frac{\mu^n(\mathcal{A})}{v(\mathcal{A})},
\end{equation*}
with $\tilde{\pi}_{k_n} \times\cdots\times \tilde{\pi}_{k_n}$ being a short-hand for the product partition of $\mathcal{X}^n$ induced by $\tilde{\pi}_{k_n}$.
Consequently, the USC problem of the second-stage in (\ref{eq_sec_rgef_1b}) given the first stage, i.e., given $\tilde{\pi}_{k_n}$ in (\ref{eq_sec_rgef_1}), 
can be expressed by $\frac{1}{n} R^+(\Lambda_f^n, \sigma(\tilde{\pi}_{k_n}))$ where:
%...................................
\begin{equation}\label{eq_sub_main_1c}
	R^+(\Lambda_f^n, \sigma(\tilde{\pi}_{k_n})) \coloneqq  \min_{v \in \mathcal{P}(\mathcal{X}^n)} \sup_{\mu^n \in \Lambda_f^n} \mathcal{D}_{\sigma(\tilde{\pi}_{k_n} \times\cdots\times \tilde{\pi}_{k_n})}(\mu^n | v). 
\end{equation}
$R^+(\Lambda_f^n, \sigma(\tilde{\pi}_{k_n}))$ can be interpreted as the i-radius of $\Lambda_f^n$ restricted to the events of the sub-sigma field $\sigma(\tilde{\pi}_{k_n} \times\cdots\times \tilde{\pi}_{k_n})$.\footnote{More details are presented in (\ref{eq_lemma_two_stage_4}) and (\ref{eq_lemma_two_stage_5}) in Sec. \ref{proof_lemma_two_stage_suffcond_al_usc}.} 

%---------------------------------------------------------------------------------------
%---------------------------------------------------------------------------------------
\subsection{Redundancy Gain Analysis}
\label{sub_sec_redundancy_result}
Returning to our question, in the context of Eq. (\ref{eq_sub_main_1c}) we know that $\mathcal{D}_{\sigma(\tilde{\pi}_{k_n} \times\cdots\times \tilde{\pi}_{k_n})}(\mu^n|v) \leq \mathcal{D}(\mu^n|v)$ \cite{kullback1958}, therefore  for any sequence $(k_n)_{n}$ of positive integers, it follows that $R^+(\Lambda_f^n)\geq R^+(\Lambda_f^n, \sigma(\tilde{\pi}_{k_n}))$ for all $n$,  Consequently, we have that  
%...................................
\begin{equation}
\liminf_{n \rightarrow \infty} \frac{R^+(\Lambda_f^n)}{R^+(\Lambda_f^n, \sigma(\tilde{\pi}_{k_n}))}\geq 1.
\end{equation}
In particular, we want to determine regimes on $(k_n)_{n}$ that guarantee an asymptotic gain in minimax redundancy in the sense that
%..............................
\begin{equation}\label{eq_sec_rgef_3}
\lim_{n \rightarrow \infty} \frac{R^+(\Lambda_f^n, \sigma(\tilde{\pi}_{k_n}))}{R^+(\Lambda_f^n)} =0,
\end{equation}
subject to the condition that $(1/k_n)_n$ is  $o(1)$.  If $(k_n)_{n}$ offers an asymptotic gain on minimax redundancy in the sense of Eq. (\ref{eq_sec_rgef_3}), then any sequence where $(\tilde{k}_n)_n$ such that 
$(\tilde{k}_n)_n \leq (k_n)_n$  eventually in $n$  
offers a gain in the minimax redundancy\footnote{This is a simple consequence of the fact that $k\geq \tilde{k}$ implies $\sigma(\tilde{\pi}_{\tilde{k}}) \subset \sigma(\tilde{\pi}_k)$.}. Therefore it is important to determine  the largest size sequence  for the tail partition such that  (\ref{eq_sec_rgef_3}) is satisfied.

Note that any sequence $(k_n)_{n}$ offers a non-zero worst-case distortion for a finite block-length,  and if $(1/k_n)_n$ is $o(1)$ this worst case distortion goes to zero at rate function of $(k_n)_n$ and  the envelope function $f$. From this, one could suspect a gain in the minimax redundancy,  in the sense established by (\ref{eq_sec_rgef_3}), no matter how fast $(k_n)_n$ tends to infinity with the block-length as long as $k_n<\infty$ for any $n$.  In other words, one simple conjecture is that the complexity of a family of distributions with infinite degrees of freedom, measured in terms of the rate of convergence to zero of the minimax redundancy per sample $({R^+(\Lambda_f^n)}/{n})_{n}$, cannot be reached by projecting this family into  finite but dynamic (with the block-length) alphabets. However, the following result refutes this initial guess and determines a non-trivial regime for $(k_n)_{n}$ with no minimax redundancy gain. Importantly, this regime is fully determined by $(u^*_f(n))_n$ (see Def.\ref{def_tail_function_critical_dimension_b} in Section \ref{sub_sec_envelop_results_usc}), which can be interpreted as a sequence of critical dimensions for $\Lambda_f$ that was introduced by Bontemps {\em et al.} \cite{bontemps_2011, bontemps_2014} in the context of lossless USC.

%------------------------------------------------------------------------------------------------------
%Main Theorem: Characterization of the regime of minimax redundancy gains. 
\begin{theorem}[Minimax redundancy gains] \label{th_information_radius_gain_tail_schemme}
	Let $\Lambda_f \subset \mathcal{P}(\mathcal{X})$ be an \emph{envelope family} with $f\in \ell_1(\mathcal{X})$ and  $\left|\textrm{supp}(f)\right|=\infty$. 
	In addition, let $  \left\{ \tilde{\pi}_{k_n}: n\geq 1 \right\} $  be the collection of tail-based partitions  in (\ref{eq_sec_rgef_1}) 
	driven by a positive non-decreasing sequence $(k_n)_{n}$. It follows that:
	\begin{itemize}
		\item[i)] If $(k_n)_n \geq (u^*_f(n))_n$ eventually in $n$, then there is no gain in minimax redundancy in the sense that: 
		%..............................
		\begin{equation*}
			\lim_{n \rightarrow \infty} \frac{R^+(\Lambda_f^n, \sigma(\tilde{\pi}_{k_n}))}{R^+(\Lambda_f^n)} =1.
		\end{equation*}
		\item[ii)] Conversely, if $(k_n)_n$ is $o(u^*_f(n))$, i.e., $\lim_{n \rightarrow \infty }{k_n}/{u^*_f(n)}=0$, 
		then we have a minimax redundancy gain:
		%..............................
		\begin{equation*}
			\lim_{n \rightarrow \infty} \frac{R^+(\Lambda_f^n, \sigma(\tilde{\pi}_{k_n}))}{R^+(\Lambda_f^n)} =0.
		\end{equation*}
	\end{itemize}	
\end{theorem}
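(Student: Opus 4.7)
The plan is to reduce both information radii to quantities governed by the Hellinger-metric entropy of the induced envelope families and then invoke the sharp asymptotics of Bontemps \emph{et al.}~\cite{bontemps_2014} to compare them. First, by Lemma~\ref{pro_invariance_envelop} applied with the projection $\tilde\phi_n$ associated to $\tilde\pi_{k_n}$, the restricted information radius $R^+(\Lambda_f^n,\sigma(\tilde\pi_{k_n}))$ coincides with the (unrestricted) information radius of an envelope family on the finite alphabet $\Gamma_{k_n}$, with truncated envelope $\tilde f_{k_n}(i)=f(i)$ for $i<k_n$ and $\tilde f_{k_n}(k_n)=\min\bigl\{1,\sum_{j\ge k_n}f(j)\bigr\}$. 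Hence the theorem reduces to comparing $R^+(\Lambda_{\tilde f_{k_n}}^n)$ with $R^+(\Lambda_f^n)$, both classical lossless minimax redundancies of envelope families on countable alphabets.

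Next, I would invoke the Haussler--Opper characterization of the information radius in terms of the metric entropy of the family at scale $1/\sqrt n$ under the Hellinger distance, together with the sharp asymptotic formula of Bontemps \emph{et al.}~\cite{bontemps_2014} for summable envelopes, which expresses $R^+(\Lambda_g^n)$ asymptotically as $\tfrac{u^*_g(n)-1}{2}\log n$ up to lower order terms controlled by the tail of $g$. The structural observation driving the comparison is that for $k_n>l_f$ one has $u^*_{\tilde f_{k_n}}(n)=\min\bigl(k_n,u^*_f(n)\bigr)$, because $\tilde f_{k_n}$ coincides with $f$ on $\Gamma_{k_n-1}$ and merely lumps the tail mass onto the single boundary symbol $k_n$.

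With these two ingredients the two regimes follow by direct comparison. In the no-gain regime $k_n\ge u^*_f(n)$, the above identity yields $u^*_{\tilde f_{k_n}}(n)=u^*_f(n)$, so the leading terms of $R^+(\Lambda_{\tilde f_{k_n}}^n)$ and $R^+(\Lambda_f^n)$ coincide and the ratio tends to $1$. In the gain regime $k_n/u^*_f(n)\to 0$, the restricted family has effective dimension at most $k_n$, so the standard parametric bound (essentially (\ref{eq_cor_universality_1})) combined with the upper half of the Bontemps formula gives $R^+(\Lambda_f^n,\sigma(\tilde\pi_{k_n}))\le \tfrac{k_n}{2}\log n+O(k_n)$, while Bontemps' lower bound supplies $R^+(\Lambda_f^n)\ge \tfrac{u^*_f(n)-1}{2}\log n\,(1+o(1))$; the ratio is therefore $O(k_n/u^*_f(n))=o(1)$.

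The main obstacle is Part 1 (the no-gain direction): one must rule out a constant-factor loss in the leading coefficient when passing from $\Lambda_f$ to $\Lambda_{\tilde f_{k_n}}$. The difficulty is that $\tilde f_{k_n}$ is not strictly an envelope of the original tail structure but a perturbation that concentrates all mass beyond $k_n$ on a single symbol; one must verify that this lumped symbol contributes only $o(u^*_f(n)\log n)$ to the Hellinger metric entropy at scale $1/\sqrt n$, uniformly over $k_n\ge u^*_f(n)$. This amounts to extending the metric-entropy computation of~\cite{bontemps_2014} from pure envelope families to the truncated-envelope family $\Lambda_{\tilde f_{k_n}}$ and producing matching upper and lower bounds that share the same leading coefficient as in the full family, which is where the bulk of the technical work is expected to lie.
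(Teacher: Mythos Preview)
Your treatment of the gain regime ($k_n=o(u^*_f(n))$) is correct and matches the paper: the parametric upper bound $R^+(\Lambda_f^n,\sigma(\tilde\pi_{k_n}))\le \tfrac{k_n-1}{2}\log n+K$ from (\ref{eq_cor_universality_1}) against the lower bound $R^+(\Lambda_f^n)\ge c\,(u^*_f(n)-1)\log n$ (the paper gets the constant $1/4$ rather than your $1/2$, but this is immaterial for the ratio).

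For the no-gain regime there is a genuine gap. Your observation $u^*_{\tilde f_{k_n}}(n)=\min(k_n,u^*_f(n))$ is correct, but you cannot then invoke the Bontemps asymptotic $R^+(\Lambda_g^n)\sim \log(e)\int_1^n U_g(x)/(2x)\,dx$ with $g=\tilde f_{k_n}$: that result is proved for a \emph{fixed} envelope with infinite support, and the $o(1)$ term in it is not uniform in $g$. Here $\tilde f_{k_n}$ has finite support and varies with $n$, so the asymptotic does not apply, and in particular matching quantiles does not by itself yield a matching \emph{lower} bound on $R^+(\Lambda^n_{\tilde f_{k_n}})$.

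The paper circumvents this by staying at the Haussler--Opper metric-entropy level for each fixed $n$. The key technical step (the paper's Proposition~\ref{proposition_metric_entropy_asymtotics}) is a non-asymptotic comparison: whenever $\bar F_f(k_n-1)\le \epsilon_n^2/16$, the volume-comparison bounds of Bontemps for $\mathcal{H}_{\epsilon_n}(\Lambda_{\tilde f_{k_n}})$ and for $\mathcal{H}_{\epsilon_n}(\Lambda_f)$ are \emph{literally identical} (because the truncation index $N_{\epsilon_n}$ lies strictly below $k_n$, so the envelopes agree on all coordinates that enter the bounds). One then checks that $k_n\ge u^*_f(n)$ forces $\bar F_f(k_n-1)<1/n\ll \epsilon_{l,n}^2$ at the critical scale $\epsilon_{l,n}$ solving $l(1/\epsilon)=n\epsilon^2/8$, and feeds the resulting metric-entropy lower bound into Lemma~\ref{lemma_HO_metric_entropy}. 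This is precisely the ``extension of the metric-entropy computation'' you flag as the obstacle, but the point is that it is not an extension at all: under the tail condition the two computations coincide term by term, and the $u^*$-comparison you propose is a consequence rather than the engine of the argument.
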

The proof is presented in Section \ref{proof_th_information_radius_gain_tail_schemme}.

%-------------------------------------------------------------------------------------------------------------------
%\subsection{Discussions of Theorem \ref{th_information_radius_gain_tail_schemme}}
%\subsection{Analysis and Interpretation of Theorem \ref{th_information_radius_gain_tail_schemme}}
%\label{sec_discusion_information_radius_gain_tail_schemme}
Analysis and interpretation of Theorem \ref{th_information_radius_gain_tail_schemme}:
\begin{enumerate}
	%1 The idea of critical dimension for $\lambda_f$: 
	\item First, we note that the sequence $(u^*_f(n))_n$ in (\ref{eq_sec_rgef_4}) defines a notion of critical dimension (or cardinality) for the family $\Lambda_f$,  as it characterizes a boundary (or phase transition) on the size of the tail-based two-stage coding schemes above from which no gains in terms of the rate of minimax redundancy are obtained.
	
	%2. Sufficiency of the Tail-Based Partition:
	\item If we consider the regime of redundancy gain, i.e., where $(k_n)_n$ is  $o(u^*_f(n))$,   it is simple to note that any arbitrary partition 
	scheme $\left\{\pi_n: n\geq 1 \right\}$ such that $\left| \pi_n \right|= \left| \tilde{\pi}_{k_n}\right| =k_n$ satisfies:
	%....................
	$$\lim_{n \rightarrow \infty} \frac{R^+(\Lambda_f^n, \sigma({\pi}_{n}))}{R^+(\Lambda_f^n)} =0.$$
	Then, this scenario of redundancy gain can be extended to any finite alphabet partition strategy, and consequently, we can say that 
	the condition $(k_n)_n$ is $o(u^*_f(n))$ offers a {\em trivial}  regime of minimax redundancy gain. However, what is not evident is the fact that the tail-based partition offers a non-trivial regime of redundancy gain, in the sense that the condition $(k_n)_n \geq (u^*_f(n))_n$ eventually with $n$ suffices  to guarantee that: 
	%.....................
	$$\lim_{n \rightarrow \infty} \frac{R^+(\Lambda_f^n, \sigma(\tilde{\pi}_{k_n}))}{R^+(\Lambda_f^n)} =1.$$
	From this angle, the tail-based partition is efficient (or sufficient) to capture the asymptotic complexity of $\Lambda_f$ with a minimum alphabet size. Complementing this richness property of $\left\{ \tilde{\pi}_{k_n} \right\}$, it is simple to verify that the  tail partition is an optimal solution when the objective is to minimize the worst-case distortion of a two-stage lossy coding scheme restricting the finite size $k>0$ on the quantization.
	
	%3 The formalization of the fact that summable envelops can be seen in terms of an equivalent finite alphabet representation. 
	\item From a complexity view-point, $R^+(\Lambda_f^n)$ measures the complexity of the lossless coding task. 
	Then, for a given finite partition $\pi_n \in \mathcal{B}(\mathcal{X})$,  $R^+(\Lambda_f^n)- R^+(\Lambda_f^n, \sigma(\pi_{n})) \geq 0$ can be interpreted as the reduction on complexity by the process of projecting $\Lambda^n_f$ into a finite alphabet, i.e.,  
	%.....................
	$$\Lambda_f^n/\sigma({\pi}_{n}) \coloneqq    \big\{\mu^n/\sigma{({\pi}_{n} \times \cdots \times {\pi}_{n})}:\mu \in \Lambda_f \big\}$$
	where $\mu/\sigma{(\pi)} \coloneqq \left\{\mu(\mathcal{A}): \mathcal{A}\in \sigma{(\pi)} \right\} \in \mathcal{P}(\mathcal{X}, \sigma(\pi))$ is a short-hand for the probability $\mu$  restricted to the sub-sigma field induced by $\pi$ (details presented in Section \ref{proof_subsec_no_gain_i_radius}).
	Then, it is interesting to know if the i.i.d. family of envelope distributions $ \left\{ \Lambda^n_f: n\geq 1\right\}$ with $f\in \ell_1(\mathcal{X})$ 
	admits a finite but dynamic alphabet reduction that captures its complexity asymptotically with $n$. For that question, we can introduce the following: 
	%definitions:-----------
	%--------------------------
	% Def. 1: -----
	\begin{definition}
	\label{def_sufficient_size_lambda}
	We say that $\Lambda \subset \mathcal{P}(\mathcal{X})$ has a finite alphabet reduction, if there exists a partition scheme $\left\{ \pi_n: n\geq 1 \right\} $ with $  \left| \pi_n \right|=k_n<\infty$ such that $\lim_{n \rightarrow \infty} \frac{R^+(\Lambda^n, \sigma({\pi}_{n}))}{R^+(\Lambda^n)} =1$, or, equivalently, that $\left\{ \Lambda^n: n\geq 1\right\}$ is equivalent to $ \left\{ \Lambda^n/\sigma(\pi_n): n\geq 1 \right\}$  in terms of asymptotic information complexity. In this case, we  say that $(k_n)_n$ 
	is a sequence  of {\em sufficient sizes} (or sufficient) to represent $\Lambda$.
	\end{definition}
	% Def. 2: -----
	\begin{definition} 
	We say that  $(k^*_n)_n$ is the {\em critical (or minimal) size to represent} $\Lambda_ f$, if  $(k^*_n)_n$ is a  sequence of sufficient size to represent $\Lambda$ (Def. \ref{def_sufficient_size_lambda}), and no sequence $(l_n)_n$ exists such that: $(l_n)_n$ is sufficient to represent $\Lambda$ and $(l_n)_n \ll (k^*_n)_n$. 
	\end{definition}
	
	In this context,  the proof of Theorem \ref{th_information_radius_gain_tail_schemme} shows as a corollary that $(u^*_f(n))_{n}$ is the critical size to represent $\Lambda_ f$.  The achievability part is obtained using the tail-based partition and some metric entropy lower bound  
	for the i-radius extended from \cite{haussler_1997,bontemps_2011,bontemps_2014}.  On the other hand, the converse argument derives from  basic i-radius results for i.i.d. sources over finite alphabets \cite{csiszar_2004} and results for envelope families on countably infinite alphabets \cite{boucheron_2009}.
	
	Finally, from  Boucheron {\em et al.} \cite[Ths. 3 and 4, and Cor. 2]{boucheron_2009},  
	we have that $\Lambda_f$ has either a finite alphabet reduction with a sub-linear critical size sequence given by $(u^*_f(n))_{n}$ (if  $f\in \ell_1(\mathcal{X})$), or infinite minimax redudancy for all $n\geq 1$. 
	
	\item To illustrate the result, let us consider the exponentially decreasing envelope class studied in \cite{bontemps_2011,boucheron_2009}: 
	%.......
	$$\Lambda_{f_{Ce^{-\alpha}}}=\left\{\mu\in \mathcal{P}(\mathcal{X}), f_\mu(x) \leq f_{Ce^{-\alpha}}(x)= C e^{-\alpha x} \right\},$$
	where $C>0$ and $\alpha>0$. It has been shown in \cite[Prop. 6]{bontemps_2011} that
	%.......
	$$\frac{1}{\alpha} \ln(C x) \leq U_{f_{Ce^{-\alpha}}}(x) +1 \leq \frac{1}{\alpha} \ln(\kappa C x)$$
	where $\kappa = 1/(1-e^{-\alpha})$ and $U_{f}(x)$ is defined in Def. \ref{def_continuos_Uf} (see Section \ref{proof_th_information_radius_gain_tail_schemme} for details). 
	Importantly for our analysis, it follows that  $u^*_{f_{Ce^{-\alpha}}}(n)-1 \leq U_{f_{Ce^{-\alpha}}}(n) <  u^*_{f_{Ce^{-\alpha}}}(n)$ (see  Section \ref{proof_th_information_radius_gain_tail_schemme_gain_regime}), therefore we have that for all $n\geq 1$
	%.......
	$$\frac{1}{\alpha} \ln(C n) +1 <  u^*_{f_{Ce^{-\alpha}}}(n) \leq \frac{1}{\alpha} (\ln (\kappa) +\ln(C n) ).$$
	Consequently,  the critical dimension of this exponential family, which determines the regime of redundancy gain, 
	scales like $\sim (\frac{1}{\alpha}  \ln n)_n$. Similar analysis can be conducted on the power-law envelopes and sub-exponential envelopes  classes
	studied in \cite{bontemps_2014,bontemps_2011,boucheron_2009,acharya_2014}. See also an excellent exposition of these last results in \cite{gassiat_2018}.
\end{enumerate}

%____________________________________________________________________
%____________________________________________________________________
\section{Summary and Concluding Remarks} 
\label{sec_discusion}
The problem of almost lossless universal source coding for countably infinite alphabet sources is introduced in this work. 
Our main result  shows that a weak notion of universal (variable length) source coding is feasible for the entire 
class of finite entropy stationary memoryless sources. This result is obtained by  tolerating a (non-zero) single-letter distortion in the encoding process 
that vanishes asymptotically with the block-length. To this end, one key idea is an induced sequence of partitions of the  $\infty$-alphabet, which offers a way to control the  worst-case average redundancy associated with the i-radius of a class of distributions restricted to a subsigma-field. 
We have also studied a stronger almost losses condition, asking for uniform convergence of the distortion to zero (over the family of distributions), where it turns out that this variation of weak universality can be achieved  for the same class of envelope distributions that is strong minimax universal in the lossless case.  This last result suggests that asking for a non-zero distortion that convergence to zero point-wise (over the family of distributions) is the strongest relaxation from the lossless criterion 
that allows us to control the  worst-case redundancy of the problem. 
Finally,  we show that it is possible  to obtain gains in the rate of convergence of the worst-case redundancy  of an almost lossless scheme,  with respect to the worst-case redundancy of the lossless case, by tolerating a non-zero distortion that tends to zero with the block-length. In this context,  we fully characterize the regime of  gains for a two-stage lossy scheme induced by tail based partitions. 
%____________________________________________________________________
%____________________________________________________________________
\section{Proofs of the Main Results} 
\label{section_proof}

%%%%%%%%%%%%%%%%%%%%%%%%%%%%%%%%%%%%%%%%%%%
%%%%%%%%%%%%%%%%%%%%%%%%%%%%%%%%%%%%%%%%%%%
%____________________________________________________________________
%Proof of Theorem 1: --------------------------------------------------------------------------------------
\subsection{Theorem \ref{theorem_entropy}}
%\begin{proof}
\label{proof_theorem_entropy}
First, we introduce a 
result and some definitions that will be used in the proof. 
%stage 1: ------
\subsubsection{Preliminaries}
% The rate distortion function:............
\begin{definition} \label{def_rate_distortion}
For % a distribution 
$\mu\in \mathcal{P}(\mathcal{X})$ its rate distortion function  is given by \cite{berger_1998,cover_2006,gray_1990}: 
%..................
\begin{equation*} 
	%\label{eq_proof_theorem_entropy_3}
	R_{\mu}(d)\coloneqq    \inf_{P(\tilde{X}|X)\, \textrm{ st. } \, \mathbb{P}(\tilde{X} \neq X) \leq d} I(X; \tilde{X}). 
\end{equation*}
\end{definition}

% The truncated distribution......
\begin{definition}\label{def_truncated_distribution}
For any $\mu\in  \mathcal{P}(\mathcal{X})$ and $\theta>0$, let us define $\tilde{\mu}_\theta \in  \mathcal{P}(\mathcal{X})$
by: $\tilde{\mu}_{\theta}(\left\{i \right\}) \coloneqq  \min \left\{\theta, \mu(\left\{i \right\}) \right\}$ for all $i>1$ and 
$\tilde{\mu}_{\theta}(\left\{1 \right\}) \coloneqq  1-\kappa_{\theta}$ where  $\kappa_{\theta} \coloneqq \sum_{i>1} \tilde{\mu}_{\theta}(\left\{i \right\})$. 
\end{definition}

%Fundamental result for the rate-distortion function on countable alphabets..........
\begin{lemma} (Ho {\em et al.}\cite[Th. 1]{ho_2010}) 
\label{lemma_asymt_rate_dist_func_countable}
Let us consider $\mu\in \mathcal{P}(\mathcal{X})$, where $\mathcal{X}$ is an $\infty$-alphabet, then there is $d_0>0$ such that $\forall d \leq d_0$ 
%................................................
\begin{equation} \label{eq_proof_theorem_entropy_4}
	R_{\mu}(d) = H(\mu) -  H(\tilde{\mu}_{\theta(d)})
\end{equation}
with $\tilde{\mu}_{\theta}$ introduced in Def.\ref{def_truncated_distribution} 
and $\theta(d) > 0$ being the solution of the condition $\kappa_{\theta} = \sum_{i>1} \tilde{\mu}_{\theta}(\left\{i \right\})=d$. 
\end{lemma}

%stage 2: Main argument -------
\subsubsection{Proof of Theorem \ref{theorem_entropy}}
We consider the non-trivial case where $\mu$ has infinite support over $\mathcal{\mathcal{X}}$, i.e., $\inf_{x\in \mathcal{X}} f_\mu(x)=0$,  otherwise the problem reduces to a finite alphabet scenario where this result is known \cite{cover_2006,csiszar_2004}. 

We begin with the converse argument. This reduces to prove that any lossy coding scheme with zero asymptotic distortion 
has a rate that convergences to a limit that is grater or equal to $H(\mu)$ (see Def. \ref{def_achievable}). 
For that, let us assume that we have a lossy scheme $\left\{ (f_n, g_n): n\geq 1\right\}$ such that
%.......................
\begin{align} \label{eq_proof_theorem_entropy_1}
	&\lim_{n \longrightarrow \infty} d(f_n, g_n,\mu^n) =0 \Leftrightarrow \nonumber\\
	 &\lim_{n \longrightarrow \infty}  \frac{1}{n} \sum_{i=1}^n \mathbb{P}  \big\{  X_i\neq (g_n(f_n(X^n)))_i \big\} =0.
\end{align}
If we denote by $\hat{X}^n \coloneqq g_n(f_n(X^n))$ the reconstruction,  from lossless variable length source coding  it is well-known that \cite{csiszar_2004}:
%.......................
\begin{IEEEeqnarray}{rCl} 	%\label{eq_proof_theorem_entropy_2}
	r(f_n, \mu^n) &\geq&  \frac{1}{n} I(X^n, \hat{X}^n) \nonumber\\
			\label{eq_proof_theorem_entropy_2b}	
			    &\geq &\frac{1}{n} \sum_{i=1}^n I(X_i;\hat{X}_i) 
			    %& \geq & 
			     \geq \frac{1}{n} \sum_{i=1}^n R_{\mu}\big(\mathbb{P}\{X_i\neq \hat{X}_i\}\big)
			    %&  \geq&  
			    \nonumber\\
			    &\geq & R_{\mu}  \left( \frac{1}{n} \sum_{i=1}^n \mathbb{P}  \big\{  X_i\neq  \hat{X}_i \big\}\right) \nonumber\\
			    %\label{eq_proof_theorem_entropy_2bi}
			    %&=& 
			    &=& R_{\mu} \big(d(f_n, g_n,\mu^n)\big), 
\end{IEEEeqnarray}
where for the inequalities in (\ref{eq_proof_theorem_entropy_2b}), we use that $\mathbf{X}$ is memoryless, the non-negativity of the conditional mutual information \cite{cover_2006}, and the convexity of the rate-distortion function of $\mu$  \cite{berger_1998,cover_2006}.  
For the rest we assume that $\mu$ is organized  in decreasing order in the sense that $f_\mu(1) \geq f_{\mu}(2)\geq \cdots$ \footnote{We note that  for the
charcaterization of $R_{\mu}(d)$ as $d \rightarrow 0$ this assumption implies no loss of generality.} and that we are in the regime where $d\leq d_o$ (introduced in Lemma \ref{lemma_asymt_rate_dist_func_countable}). 
Using Lemma \ref{lemma_asymt_rate_dist_func_countable}, we have that $R_{\mu}(d) = H(\mu) -  H(\tilde{\mu}_{\theta(d)})$,  where if we consider
%..................
\begin{equation} \label{eq_proof_theorem_entropy_5}
	K_\mu(d) \coloneqq     \min \big\{ k>1:
								f_\mu(k+1) \leq \theta(d) \big\}, 
\end{equation}
it is simple to verify that:
%..................
\begin{align} \label{eq_proof_theorem_entropy_6}
	&H(\tilde{\mu}_{\theta(d)}) = (1-\kappa_{\theta(d)})\log \frac{1}{1-\kappa_{\theta(d)}} \nonumber\\
	&+ (K_\mu(d)-1)\cdot \theta(d) \log \frac{1}{\theta(d)} + \sum_{i>K_\mu(d)} f_\mu(i) \log \frac{1}{f_\mu(i)}. 
\end{align}
From (\ref{eq_proof_theorem_entropy_4}) and (\ref{eq_proof_theorem_entropy_1}), we focus on exploring $H(\tilde{\mu}_{\theta(d_n)})$ when $d_n \rightarrow 0$.  First,  it is simple to verify that $d \rightarrow 0$ implies that $\theta(d) \rightarrow 0$ by definition.  Then, for a fix $\mu \in \mathcal{P}_{\mathcal{H}}(\mathcal{X})$ with infinite support, the problem reduces to chacaterize $\lim_{\theta \rightarrow 0} H(\tilde{\mu}_{\theta})$.  Note that $\tilde{\mu}_{\theta}$ converges point-wise to the degenerate probability $\mu^*=(1,0,\cdots)$ as $\theta$ vanishes~\footnote{In the  $\infty$-alphabet %countably infinite alphabet  case  
the point-wise convergence of probabilities to a limit is equivalent to the weak convergence and the convergence in total variations \cite{piera_2009}.}. 
However, by the entropy discontinuity~\cite{ho_2010b, silva_2012_jspi,silva_2012_isit}, the convergence of the measure to $\mu^*$ is not sufficient to guarantee  that  $\lim_{\theta \rightarrow 0} H(\tilde{\mu}_{\theta})=H(\mu^*)=0$.

First, it is simple to note that $\kappa_\theta \rightarrow 0$, as $\theta \rightarrow 0$ considering that $\tilde{\mu}_{\theta}(i) \rightarrow 0$ for all $i\geq 1$,  $\mu\in \mathcal{P}_\mathcal{H}(\mathcal{X})$, and 
the {\em dominated convergence theorem} \cite{varadhan_2001}. 
Then, $\lim_{\theta \rightarrow 0} (1-\kappa_\theta)\log \frac{1}{1-\kappa_\theta}=0$, which is the limit of the first term in the RHS of (\ref{eq_proof_theorem_entropy_6}). For the rest, we define the self-information function $i_{\tilde{\mu}_\theta}(i) \coloneqq    \tilde{\mu}_{\theta}(\left\{i \right\}) \log 1/ \tilde{\mu}_{\theta}(\left\{i \right\})>0$, for all $i>1$, and $i_{\tilde{\mu}_\theta}(1)\coloneqq  0$. By definition $\lim_{\theta \rightarrow 0} i_{\tilde{\mu}_\theta}(i)=0$ point-wise in $\mathcal{X}$, noting that $\lim_{\theta \rightarrow 0} K_\mu(\theta) = \infty$ and $H(\mu)<\infty$. Furthermore, there is $\theta_0>0$ such that for all  $\theta< \theta_0$, $0\leq i_{\tilde{\mu_\theta}}(i) \leq i_{\tilde{\mu}_{\theta_0}}(i)$ for all \footnote{This follows from the fact that the function $\theta \log \frac{1}{\theta}$ is monotonically increasing in the range of $(0,\theta_0)$ for some $\theta_0>0$.} $i\geq 1$, where  from the  assumption that $H(\mu)<\infty$, and the fact that $K_{\mu}(\theta_0)<\infty$, then  $ (i_{\tilde{\mu_{\theta_0}}}(i)) \in \ell_1(\mathcal{X})$.  Again by the {\em dominated convergence theorem} \cite{varadhan_2001},  $\lim_{\theta \rightarrow 0} \sum_{i\geq 1} i_{\tilde{\mu_\theta}}(i) =0$ and consequently, $\lim_{\theta \rightarrow 0} H(\tilde{\mu}_{\theta})=0$ from (\ref{eq_proof_theorem_entropy_6}).

Returning to (\ref{eq_proof_theorem_entropy_2b}), it follows that for all $n\geq 1$,
%..................
\begin{IEEEeqnarray}{rCl} \label{eq_proof_theorem_entropy_7}
	r(f_n, \mu^n) &\geq& R_{\mu} \big(\underbrace{d(f_n, g_n,\mu^n)}_{d_n \coloneqq}\big) \nonumber\\
	&=&H(\mu) - H\left(\tilde{\mu}_{\theta(d_n)}\right). 
\end{IEEEeqnarray}
Finally, as $d_n \rightarrow 0$, 
%..................
\begin{IEEEeqnarray}{rCl} 
	\liminf_{n \rightarrow 0} r(f_n, \mu^n) &\geq& H(\mu) - \limsup_{d_n \rightarrow 0 }H\left(\tilde{\mu}_{\theta(d_n)}\right) \nonumber\\
	&=& H(\mu) - \lim_{\theta \rightarrow 0} H(\tilde{\mu}_{\theta})\nonumber\\
	&=&H(\mu)  \label{eq_proof_theorem_entropy_8}. 
\end{IEEEeqnarray}
Therefore, the inequality in (\ref{eq_proof_theorem_entropy_8}) implies that $R_{al}(\mu)\geq H(\mu)$ from Definition \ref{def_achievable}.  

The achievability part (i.e., $R_{al}(\mu)\leq H(\mu)$) follows from the proof of Proposition \ref{proposition_entropy_two_stage} in Appendix~\ref{proof_proposition_entropy_two_stage}.
%\end{proof}
\hspace{\fill}~\QED

%%%%%%%%%%%%%%%%%%%%%%%%%%%%%%%%%%%%%%%%%%%
%%%%%%%%%%%%%%%%%%%%%%%%%%%%%%%%%%%%%%%%%%%
% Theorem 2: -----
%==================================================
\subsection{Theorem \ref{th_achie_almost_lossless_universality}}
\label{proof_th_achie_almost_lossless_universality}
% Proof of Theorem 2: ---------------------------------------------------------------
%\begin{proof} 
For the proof of Theorem \ref{th_achie_almost_lossless_universality}, we  first introduce some definitions and an achievability result:
\subsubsection{Preliminaries}
Regarding the distortion,  we need the following definition: 
\begin{definition}\label{def_asymtotically_suff_part_family}
	A sequence of partitions $\left\{ \pi_n: n \geq 1\right\}$ of $\mathcal{X}$ is 
	asymptotically sufficient for $\Lambda \subset \mathcal{P}(\mathcal{X})$, if  it is 
	asymptotically sufficient for every measure $\mu\in \Lambda$ (cf. Definition~\ref{def_asymtotically_suff_part}). 
\end{definition}

Concerning the analysis of the worst-case average redundancy in a lossy context,  it is instrumental to introduce the divergence restricted to a sub-sigma field \cite{kullback1958}. 
% Def. 
\begin{definition}\label{def_divergence_sub_sigma}
Let $\pi$ be a  partition of $\mathcal{X}$ and $\sigma(\pi) \subset \mathcal{B}(\mathcal{X})$ its induces sigma-field. Then, for every $\mu,  v \in \mathcal{P}(\mathcal{X})$,  the divergence of $\mu$ with respect to $v$ restricted to $\sigma(\pi)$ is \cite{kullback1958}:
%...................................
\begin{equation}\label{eq_sub_main_1}
	\mathcal{D}_{\sigma(\pi)}(\mu|v) \coloneqq    \sum_{\mathcal{A}\in \pi} \mu(\mathcal{A})\log \frac{\mu(\mathcal{A})}{v(\mathcal{A})}.
\end{equation}
\end{definition}

% Def.  Information Radius.......
\begin{definition}\label{def_i-radius_sub_sigma}
Let $\Lambda \subset \mathcal{P}(\mathcal{X})$ and $\pi$ be a partition of $\mathcal{X}$. 
Fot any $n\geq 1$, the information radius of $\Lambda^n \subset \mathcal{P}(\mathcal{X}^n)$ 
restricted to 
$\sigma(\pi)$ is given by: 
%...................................
\begin{equation}\label{eq_sub_main_2}
	R^+(\Lambda^n, \sigma(\pi)) \coloneqq    \min_{v^n \in \mathcal{P}(\mathcal{X}^n)} \sup_{\mu^n \in \Lambda^n} \mathcal{D}_{\sigma(\pi \times\cdots\times \pi)}(\mu^n | v^n), 
\end{equation}
 where $\pi \times\cdots\times \pi$ denotes the product partition of $\mathcal{X}^n$, $\mathcal{P}(\mathcal{X}^n)$ is the set of probability measures in $(\mathcal{X}^n, \mathcal{B}(\mathcal{X}^n))$,  and $\Lambda^n$ denotes the collection of all i.i.d (product) probabilities measures in $(\mathcal{X}^n, \mathcal{B}(\mathcal{X}^n))$ induced by $\Lambda$. 
 \end{definition}

%Main Lemma to prove Theorem 2: -----------------------
\begin{lemma}\label{lemma_two_stage_suffcond_al_usc}
	Let us consider $\Lambda \subset \mathcal{P}(\mathcal{X})$. 
	If there is a sequence of partitions $\left\{ \pi_n: n \geq 1\right\}$ of $\mathcal{X}$ such that:
	\begin{itemize}
	\item  $\left\{ \pi_n: n \geq 1\right\}$ is asymptotically sufficient  for $\Lambda$ (Def. \ref{def_asymtotically_suff_part_family}),  and 
	\item $(R^+(\Lambda^n, \sigma(\pi_n)))_{n}$ is $o(n)$, 
	\end{itemize}
	then the family of stationary and memoryless sources with marginal distribution in $\Lambda$ admits 
	an almost lossless source coding scheme. 
\end{lemma}

%\begin{proof} 
The proof 
is presented in Section~\ref{proof_lemma_two_stage_suffcond_al_usc}.
%\end{proof}

\subsubsection{Proof of Theorem \ref{th_achie_almost_lossless_universality}}
Let us consider a collection of finite size partitions $\left\{ \pi_n: n \geq 1\right\} \subset \mathcal{B}(\mathcal{X})$ 
with  $k_n=\left| \pi_n\right|<\infty$ for all $n$.  
We note that if 
%.................
\begin{equation} \label{eq_sub_main_3}
\bigcap\limits_{m\geq 1}\bigcup\limits_{l\geq m} \pi_l(x)= \left\{ x\right\}, \text{ for all }  x\in \mathcal{X} 
\end{equation}
then, this partition scheme is asymptotically sufficient for $\mathcal{P}(\mathcal{X})$.  
Concerning the information radius,  we have that $k_n = \left| \pi_n\right| <\infty$, which reduces the analysis to the finite alphabet case. In this context, it is well-known that \cite[Theorem 7.5]{csiszar_2004}:
%...........................
\begin{equation}\label{eq_cor_universality_1}
	  \frac{k_n-1}{2}  \log n -K_1 \leq R^+(\mathcal{P}(\mathcal{X}^n), \sigma(\pi_n)) \leq   \frac{k_n-1}{2}  \log n + K_2,
\end{equation}
for some universal constants $K_1$ and $K_2$. Then, provided that  $(k_n)_{n}$ is  $o(n/\log n)$  it follows that  $(R^+(\mathcal{P}(\mathcal{X}^n), \sigma(\pi_n)))_n$ is  $o(n)$. 
There are numerous finite partition sequences that satisfy the  conditions stated in (\ref{eq_sub_main_3}) and $(k_n)_n$ being $o(n/\log n)$.  For example, the tail partition family given by $\bar{\pi}_{k_n} \coloneqq    \left\{ \left\{1 \right\}, \left\{2 \right\}, \cdots, \left\{ k_n-1\right\}, \Gamma_{k_n-1}^c \right\}$, where $\Gamma_{k} \coloneqq    \left\{1,\cdots,k\right\}$ and $\Gamma_{0} \coloneqq    \emptyset$, considering that  $(1/k_n)_n$ is $o(1)$ and $(k_n)_n$ is $o(n/\log n)$.  
Finally, for all  $\Lambda^n \subset \mathcal{P}(\mathcal{X}^n)$   we have by definition that $R^+(\Lambda^n, \sigma(\pi_n))\leq R^+(\mathcal{P}(\mathcal{X}^n), \sigma(\pi_n))$, which proves the result by applying Lemma~\ref{lemma_two_stage_suffcond_al_usc}. 
%\end{proof}
\hspace{\fill}~\QED

%%%%%%%%%%%%%%%%%%%%%%%%%%%%%%%%%%%%%%%%%%%
%%%%%%%%%%%%%%%%%%%%%%%%%%%%%%%%%%%%%%%%%%%
%Lemma 8: ----------
%==================================================
\subsection{Proof of Lemma \ref{lemma_two_stage_suffcond_al_usc}}
\label{proof_lemma_two_stage_suffcond_al_usc}
\begin{proof}
First note that if $\left\{ \pi_n: n \geq 1\right\}$ is asymptotically sufficient for the family $\Lambda$, 
it means that for all $\mu\in \Lambda$, $\lim_{n \rightarrow \infty} \mu(\cup_{k\geq n} \pi_k(x) \setminus \left\{ x\right\})=0$ (Def. \ref{def_asymtotically_suff_part}).  If we denote  by $k_n=\left| \pi_n \right|$ and $\pi_n=\left\{ \mathcal{A}_{n.i}: 1\leq i \leq k_n \right\}$, then we can construct $\phi_n:\mathcal{X}^n \rightarrow \left\{ 1,\dots, k_n\right\}$ such that $\phi^{-1}_n(i)=\mathcal{A}_{n,i}$ for all $1\leq i \leq k_n$. On the other hand, we can choose an arbitrary $y_{n,i}\in \mathcal{A}_{n.i}$ for each $i\in \Gamma_{k_n}$, and the mapping $\psi:\Gamma_{k_n} \rightarrow \mathcal{X}$ in the way $\psi(i)=y_{n,i}$.   
At this point, we observe:
%........................
\begin{IEEEeqnarray}{rCl}
	d(\phi_n, \psi_n,\mu^n)&=& \mathbb{P}(X\neq \psi_n(\phi_n(X))) \nonumber\\
	&=& \sum_{i=1}^{k_n} \sum_{x\in \mathcal{A}_{n,i}} f_\mu(x) \rho_{0,1}(x, y_{n, i}) \nonumber\\
	&=& \sum_{i=1}^{k_n} \mu(\mathcal{A}_{n,i} \setminus \left\{ y_{n,i} \right\}) \nonumber\\
	&=& \mu\left(\mathcal{X}\setminus \bigcup\limits_{i=1}^{k_n} \left\{ y_{n,i} \right\}\right).
	\label{eq_lemma_two_stage_1}
\end{IEEEeqnarray}
Then, from the hypothesis that $\left\{ \pi_n: n \geq 1\right\}$ is asymptotically sufficient for $\mu$, and the use of {\em bounded convergence theorem}, it is simple to verify that the RHS of (\ref{eq_lemma_two_stage_1}) goes to zero (see Section \ref{proof_lemma_zero_distortion}). Therefore, this convergence happens point-wise  $\forall \mu \in \Lambda$.

% Remark..............................................:)
\begin{remark}%\label{}
It worth mentioning that (\ref{eq_lemma_two_stage_1}) tends to zero,  if and only if,  $\lim_{n \rightarrow \infty} \cup_{k\geq n} \pi_k(x)= \left\{ x\right\}$ $\mu$-almost surely (cf. Lemma \ref{lemma_zero_distortion}). Hence, to achieve a point-wise convergence to zero of the distortion over $\Lambda$, for this two-stage scheme, it is necessary and sufficient that 
$\left\{ \pi_n: n \geq 1\right\}$ is asymptotically sufficient for $\Lambda$. 
\end{remark}
% Remark..............................................:)

Regarding the second coding stage, we ideally need to  find a lossless code with the least worst-case average
redundancy over the family 
%.....................................................
\begin{align*}
	\Lambda^n/\sigma(\pi_n) &\coloneqq  \left\{\mu^n/\sigma{(\pi_n \times \cdots \times \pi_n)}:\mu \in \Lambda \right\} \\ 
	&\subset \mathcal{P}(\mathcal{X}^n, \sigma(\pi_n \times \cdots \times \pi_n)),  
\end{align*}
where $\mu/\sigma{(\pi)} \in \mathcal{P}(\mathcal{X}, \sigma(\pi))$ is a short hand for the probability $\mu$ restricted to the sub-sigma field induced by $\pi$ \footnote{Note that if $\mu\in \mathcal{P}(\mathcal{X})$, then the restriction $\mu/\sigma(\pi) \in \mathcal{P}(\mathcal{X}, \sigma(\pi))$ reduces to the evaluation of $\mu$ over the cells of $\pi$. In fact, $\left\{\mu(B): B\in \pi \right\}$ plays the role of the probability mass function of $\mu/\sigma(\pi)$ on the measurable space $(\mathcal{X}, \sigma(\pi))$.},   and $P(\mathcal{X}, \sigma)$ denotes the collection of probabilities restricted to the events of the sub-sigma field $\sigma\subset \mathcal{B}(\mathcal{X})$. 

In fact, for a lossless prefix-free code $\mathcal{C}_n:\Gamma_{k_n}^n \rightarrow \left\{0,1\right\}^*$, associated to the first stage $\phi_n$, its worst-case average redundancy over $\Lambda^n$ 
is given by:
%........................
\begin{equation}\label{eq_lemma_two_stage_2}
	R(\Lambda^n, %/ \sigma(\pi_n)
	\phi_n, \mathcal{C}_n) \coloneqq    \sup_{\mu \in \Lambda} \Big(r(\phi_n,\mathcal{C}_n, \mu^n ) - H(\mu)\Big).
\end{equation}
For any fixed $\mu\in \Lambda$, it is clear that the entropy of $\Phi_n(X^n)$ is a lower bound for the average rate of the code, i.e., $r(\phi_n,\mathcal{C}_n, \mu^n)\geq \frac{1}{n}H(\Phi_n(X^n))=H_{\sigma{(\pi_n)}}(\mu)$, then constraining to the events of $\sigma(\pi_n)$, we are interested in controlling the following stringer worst-case overhead:
%........................
\begin{equation}\label{eq_lemma_two_stage_3}
	\bar{R}(\Lambda^n,%/ \sigma(\pi_n), 
	\phi_n,
	\mathcal{C}_n) \coloneqq    \sup_{\mu \in \Lambda} \big(r(\phi_n,\mathcal{C}_n, \mu^n ) - H_{\sigma(\pi_n)}(\mu)\big).
\end{equation}
Note that $H_{\sigma(\pi_n)}-H(\mu)\leq 0$ 
and thus, $\bar{R}(\Lambda^n, \phi_n,  \mathcal{C}_n)\geq R(\Lambda^n, \phi_n, \mathcal{C}_n)$. 
Then, we can choose a code solution to the following mini-max problem:
%........................
\begin{equation}\label{eq_lemma_two_stage_4}
	\mathcal{C}^*_n \coloneqq    \arg \min_{\mathcal{C}^n: \Gamma^n_{k_n} \rightarrow \left\{0,1\right\}^*} \bar{R}(\Lambda^n, 
	\phi_n, \mathcal{C}_n), \text{ for all } n\geq 1. 
\end{equation}

From the close connection between probabilities and prefix free codes \cite{csiszar_2004},   
the performance of the optimal code in (\ref{eq_lemma_two_stage_4}) is  tightly related to the i-radius of the family $\Lambda^n/\sigma(\pi_n)$  in (\ref{eq_sub_main_2}), in the sense that  $\forall n\geq 1$:
%........................
\begin{equation}\label{eq_lemma_two_stage_5}
	\frac{R^+(\Lambda^n, \sigma(\pi_n))}{n} \leq  \bar{R}(\Lambda^n,
	\phi_n, \mathcal{C}^*_n) \leq  \frac{R^+(\Lambda^n, \sigma(\pi_n))+2}{n}.
\end{equation}
Finally,  from the hypothesis on the information radius and (\ref{eq_lemma_two_stage_5}),  we have that:
\begin{IEEEeqnarray}{rCl}
\lim_{n \rightarrow \infty} R(\Lambda^n, \phi_n, \mathcal{C}^*_n) \leq \lim_{n \rightarrow \infty} \bar{R}(\Lambda^n, \phi_n, \mathcal{C}^*_n) =0.
\end{IEEEeqnarray}
%...................
\begin{remark}
The inequalities in (\ref{eq_lemma_two_stage_5}) states that the sub-linear trend (with the block-length) on the i-radius of $\left\{ \Lambda^n/\sigma(\pi_n): n\geq 1 \right\}$  is a necessary 
and sufficient condition for the existence of a strongly minimax universal  code for $\left\{ \Lambda^n/\sigma(\pi_n): n\geq 1 \right\}$.
\end{remark}
\end{proof}

%%%%%%%%%%%%%%%%%%%%%%%%%%%%%%%%%%%%%%%%%%%
%%%%%%%%%%%%%%%%%%%%%%%%%%%%%%%%%%%%%%%%%%%
%Theorem 3 ..........................
%____________________________________________________________________
%____________________________________________________________________
\subsection{Theorem \ref{th_dichotomy_stronger_universality}}
\label{proof_th_dichotomy_stronger_universality}
Let us first introduce some notations, definitions and results that will be used in the proof of Theorem \ref{th_dichotomy_stronger_universality}. 

% Induced distribution.........
\begin{definition}\label{def_information_radius}
	For $\Lambda \subset \mathcal{P}(\mathcal{X})$ its i-radius is given and denoted by:
	$$R^+(\Lambda) \coloneqq   \inf_{v \in  \mathcal{P}(\mathcal{X})} \sup_{\mu \in \Lambda} \mathcal{D}(\mu | v).$$
\end{definition}

% Induced distribution.........
\begin{definition}\label{def_induced_prob_mapping}
	For a function $\phi: \mathcal{X} \rightarrow \mathcal{I}$  (where $\mathcal{I}$ is either a finite 
	or a countably infinite set) and $\mu\in \mathcal{P}(\mathcal{X})$, let us denote by $v_\mu\in \mathcal{P}(\mathcal{I})$
	the distribution induced by $\phi$ in $\mathcal{I}$ trough the standard construction\footnote{There is no question about the measurability
of $\phi(\cdot)$ as  we consider that $\mathcal{B(\mathcal{X})}$ is the power set.}: $v_\mu(B) \coloneqq \mu(\phi^{-1}(B))$ for all 
$B\subset \mathcal{I}$.
\end{definition}

%---------------------------------------------------------------------------------------------
% Invariance of the envelop property through measurable mappings. ....
\begin{lemma}\label{pro_invariance_envelop}
	Let  $\phi: \mathcal{X} \rightarrow \mathcal{I}$ be a mapping where $\mathcal{I}$ is a countably infinite set.
	Then for any non-negative envelope function $f: \mathcal{X} \rightarrow \mathbb{R}^+$, 
	there is $\tilde{f}:\mathcal{I} \rightarrow \mathbb{R}^+$ given by\footnote{$A_{i}\coloneqq  \phi^{-1}(\left\{ i \right\})$ for any $i\in \mathcal{I}$.}
	%.........................
	\begin{equation}\label{eq_th_DSU_5}
		\tilde{f}(i) \coloneqq    \min\ \left\{ \sum_{x\in \mathcal{A}_{i}} f(x), 1 \right\},
	\end{equation}
	such that $\left\{v_\mu: \mu\in \Lambda_f \right\} =\tilde{\Lambda}_{\tilde{f}} \coloneqq    \left\{ v\in \mathcal{P}(\mathcal{I}): f_v(i)\leq \tilde{f}(i)\text{ for all $i \in \mathcal{I}$} \right\}$. 
\end{lemma}

The proof of this result is presented in Appendix~\ref{proof_pro_invariance_envelop}.

Lemma \ref{pro_invariance_envelop} implies that envelope families on $\mathcal{X}$ map to envelope families on $\mathcal{I}$ through the mapping $\phi$. 
In this context, the result by Boucheron {\em et al.} \cite{boucheron_2009} in Theorem \ref{dichotomy_envelops} (in Section \ref{sec_background}) 
is instrumental to prove Theorem \ref{th_dichotomy_stronger_universality}.

%\begin{proof}
{\em Proof of Theorem \ref{th_dichotomy_stronger_universality}:}
%Achievability: -------------------------------------------
\subsubsection{Achievability}
If $f\in \ell_1(\mathcal{X})$, the fact that $\Lambda_f$ has a uniform bound on the tails of the distributions suggests
that a family of tail truncating partitions should be considered to achieve the claim i). Let us define
%..............................
\begin{equation}\label{eq_th_DSU_1}
	\tilde{\pi}_{k_n}\coloneqq \left\{  \left\{ 1\right\}, \cdots,  \left\{ k_n\right\}, \Gamma_{k_n}^c \right\}, 
\end{equation}
which resolves  the elements of $\Gamma_{k_n}= \left\{1,\dots,k_n \right\}$ and, consequently,  
there is a pair $(\tilde{\phi}_n, \tilde{\psi}_n)$ associated with $\tilde{\pi}_{k_n}$ such that $\forall \mu \in \Lambda_f$:
\begin{equation}\label{eq_th_DSU_2}
	d(\tilde{\phi}_n, \tilde{\psi}_n,\mu^n) \leq \mu(\Gamma_{k_n}^c) \leq \sum_{x>k_n} f(x). 
\end{equation}
In fact, $\sup_{\mu \in \Lambda_f} d(\tilde{\phi}_n, \tilde{\psi}_n,\mu^n) \leq \sum_{x>k_n} f(x)<\infty$, and 
$(1/k_n)_n$ being  $o(1)$ is a sufficient condition to satisfy the uniform convergence of the distortion to zero.  Furthermore, 
from the proof of Lemma~\ref{lemma_two_stage_suffcond_al_usc} (Eq.(\ref{eq_lemma_two_stage_5})) and (\ref{eq_cor_universality_1}),  there is a lossless coding scheme $\left\{ \mathcal{\tilde{C}}_n:\Gamma_{k_n+1}^n  \rightarrow \left\{0,1\right\}^*: n\geq 1\right\}$ such that $\sup_{\mu\in \Lambda_f} \big(r(\phi_n, \mathcal{\tilde{C}}_n, \mu^n) -H(\mu)\big) \leq k_n\cdot \log \sqrt{n}/n + O(1/n)$. Therefore,  we can consider $(k_n)_n$ being $O(n^\tau)$ with $\tau \in (0,1)$ to conclude the achievability part.  

%Converse part of the result: -------------------------------------------
\subsubsection{Converse for Two-Stage Lossy Coding Schemes}\footnote{We first present this preliminary converse argument, as it provides the ground to explore the redundancy gain analysis presented in Section \ref{sec_redundancy_gains_envelop_families}.}
For the converse part,  let us first consider an arbitrary two-stage lossy rule  $(\phi_n, \psi_n, \mathcal{C}_n)$ with a finite partition $k_n =\left| \pi_n\right| < \infty$. If we denote its prototypes by $\mathcal{Y}_n \coloneqq    \left\{ \psi(i): i \in \Gamma_{k_n} \right\}$, it is clear that  there exists $\mu \in \Lambda_{f}$ such that $\textrm{supp}(\mu) \subset \mathcal{Y}_n^c$, and consequently,  $d(\phi_n, \psi_n,\mu^n)=1$ for all $n$.  Therefore, for any finite size partition rule $(\phi_n, \psi_n, \mathcal{C}_n)$ it follows that  $\sup_{\mu\in \Lambda_f} d(\phi_n, \psi_n,\mu^n)=1$ for all $n\geq 1$ and hence, when $f\notin \ell_1(\mathcal{X})$ no uniform convergence on the distortion can be achieved  with a finite size lossy rule.

On the other hand, for the family of infinite size partition rules, i.e., $(\phi_n, \psi_n, \mathcal{C}_n)$ such that $\left| \pi_n\right|= \infty$,  we focus our analysis on $R^+(\Lambda^n_f, \sigma(\pi_n))$ in (\ref{eq_sub_main_2}).  Let us fix a block-length $n>0$ and a rule $(\phi_n,\psi_n, \mathcal{C}_n)$
of infinite size. For sake of clarity, we consider that $\phi_n: \mathcal{X} \rightarrow \mathcal{I}$, where $\mathcal{I}$ is a $\infty$-alphabet.  
For any $\mu\in \Lambda_f$, $v_\mu$ denotes the induced measure in $\mathcal{I}$
by the mapping $\phi_n$ trough the standard construction (see Def. \ref{def_induced_prob_mapping}).  
In addition,  it is simple to verify 
that for any pair $\mu_1,\mu_2 \in \mathcal{P}(\mathcal{X})$ 
%..............................
\begin{IEEEeqnarray}{rCl}
		\mathcal{D}_{\sigma(\pi_n)}(\mu_1 | \mu_2) &=& \mathcal{D}(v_{\mu_1}|v_{\mu_2})\nonumber\\
		&=&\sum_{i\in \mathcal{I}} f_{v_{\mu_1}}(i) \frac{f_{v_{\mu_1}}(i) }{f_{v_{\mu_2}}(i) }, \label{eq_th_DSU_3}
\end{IEEEeqnarray}
where $\pi_n=\left\{\mathcal{A}_{n,i}=\phi_n^{-1}(\left\{i \right\}): i\in \mathcal{I}\right\}$ and $f_{v_\mu}(i) \coloneqq    \mu(\mathcal{A}_{n,i})$
 $\forall i$ denotes the pmf of $v_\mu$ on $\mathcal{I}$. Then, 
%..............................
\begin{IEEEeqnarray}{rCl}
	R^+(\Lambda^n_f, \sigma(\pi_n)) &=& R^{+}(\left\{v^n_\mu: \mu\in \Lambda_f \right\}) \label{eq_th_DSU_4}
\end{IEEEeqnarray}
where $v^n_\mu$ denotes de product probability on $\mathcal{I}^n$ with marginal $v_{\mu}$ and 
$\mathcal{P}(\mathcal{I}^n)$ is the collection of probability measures on $\mathcal{I}^n$. Then,  
the i-radius of $\Lambda_f^n$ restricted to the product sub-sigma field $\sigma(\pi_n\times \cdots \times \pi_n)$
is equivalent to the information radius of $\left\{v^n_\mu: \mu\in \Lambda_f \right\} \subset \mathcal{P}(\mathcal{I}^n)$ (Def. \ref{def_information_radius}).  
From Lemma \ref{pro_invariance_envelop},  $\left\{v^n_\mu: \mu\in \Lambda_f \right\}$ is an {envelope family} with envelope function  
given by (\ref{eq_th_DSU_5}).  It is simple to verify that $f\notin \ell_1(\mathcal{X})$ 
implies that $\tilde{f} \notin \ell_1(\mathcal{I})$,  then Theorem \ref{dichotomy_envelops} and (\ref{eq_th_DSU_4}) tell us
that  $R^+(\Lambda^n_f, \sigma(\pi_n))=\infty$. 
Finally, since the i-radius in (\ref{eq_th_DSU_4}) tightly bounds the least-worst expected redundancy for the second lossless coding stage (see  (\ref{eq_lemma_two_stage_4}) and (\ref{eq_lemma_two_stage_5})), this implies that: 
%.........................
\begin{equation}\label{eq_th_DSU_6}
	\sup_{\mu\in \Lambda_f} (r(\phi_n, \mathcal{C}_n, \mu^n) -H(\mu)) = \infty, 
\end{equation}
which concludes the argument.  

% Jorge 08/08/2016---------------------------------------------------------------------------------------
% New improved (more general) converse result proved when presenting the work  at ISIT2016:--------------------------------------
\subsubsection{Converse for general variable-length lossy codes}
\label{converse_general_lossy_scheme}
Let us consider a general lossy code $(f_n,g_n)$ of length $n>0$ introduced in Section \ref{subsec:alsc}. Without loss of generality we can decouple  
$f_n$ as the composition of a vector quantizer $\phi_n:\mathcal{X}^n \rightarrow \mathcal{I}_n$, where 
$\mathcal{I}_n$ is an index set, and a prefix-free losses mapping $\mathcal{C}_n: \mathcal{I}_n  \rightarrow \left\{0,1\right\}^*$, 
where $f(x^n)=\mathcal{C}_n(\phi_n(x^n))$ for all $x^n\in \mathcal{X}^n$. From this, we characterize
the vector quantization induced by  $(f_n,g_n)$ as follows:
%.........................
\begin{equation}\label{eq_th_DSU_7}
	\pi_n \coloneqq \left\{\phi_n^{-1}(\left\{ i \right\}): i \in \mathcal{I}_n \right\} \subset \mathcal{B}(\mathcal{X}^n).
\end{equation}
Using this two-stage (vector quantization-coding) view,  
it is possible to show that\footnote{The proof of (\ref{eq_th_DSU_7b}) is presented in Appendix~\ref{proof_pro_redundancy_bound}.}: 
%.........................
\begin{align}\label{eq_th_DSU_7b}
\bar{R}(\Lambda_f,f_n) &\coloneqq    \sup_{\mu \in \Lambda_f} \Big(r(f_n,\mu^n) -\frac{1}{n}H_{\sigma(\pi_n)}(\mu^n)\Big) \nonumber\\
&\geq \frac{1}{n} \inf_{v\in \mathcal{P}(\mathcal{X}^n)} \sup_{\mu \in \Lambda_f} \mathcal{D}_{\sigma(\pi_n)}(\mu^n | v),  
\end{align}
which means that the worst-case overhead, expressed by $\bar{R}(\Lambda_f,f_n)$,  is lower bounded by the i-radius
 of the $n$-fold family $\Lambda_f^n$ projected into the sub-sigma field induced by $\pi_n$, i.e., a quantization of $\mathcal{X}^n$. 
Considering that $f\notin \ell_1(\mathcal{X})$, we follow the construction presented in \cite{boucheron_2009} that shows 
that there is an infinite collection of distributions $\tilde{\Lambda}=\left\{ \tilde{\mu}_j \in \Lambda_f, j\in \mathcal{J} \right\}$ with $\left| \mathcal{J}\right| =\infty$, where if we denote by 
%...........
$$\mathcal{A}_{\tilde{\mu}_j} \coloneqq    \textrm{supp}( \tilde{\mu}_j)=  \left\{ x \in \mathcal{X}: f_{\tilde{\mu}_j}(x)>0 \right\},$$ 
then $ \left| \mathcal{A}_{\tilde{\mu}_j} \right| < \infty$ for each $j\in \mathcal{J}$ and for any $j_1\neq j_2$, $\mathcal{A}_{\tilde{\mu}_{j_1}} \cap \mathcal{A}_{\tilde{\mu}_{j_2}} =\emptyset$. In this context, for each $j\in \mathcal{J}$ $A^n_{\tilde{\mu}_j} \coloneqq \mathcal{A}_{\tilde{\mu}_j} \times  \ldots \times \mathcal{A}_{\tilde{\mu}_j} \in \mathcal{X}^n$ is the support of $\tilde{\mu}^n_j$.   

At this point, let us use the assumption that:  
%...........
$\sup_{\mu \in \Lambda_f} d(f_n,g_n,\mu^n)<1.$
This implies that $\sup_{\mu \in \tilde{\Lambda}} d(f_n,g_n,\mu^n)<1$. From the fact that $\tilde{\Lambda}$ is an infinite collection of probabilities 
with disjoint supports and the definition of the distortion, it is simple to verify that we need to allocate at least one prototype\footnote{The prototypes of $(f_n,g_n)$ is the set $\mathcal{B}_n=\left\{g_n(f_n(x^n_1)): x^n_1\in \mathcal{X}^n \right\}$.} per cell $A^n_{\tilde{\mu}_j}$, which implies that $\left| \mathcal{I}_n \right| =\infty$,  because otherwise it follows that  $\sup_{\mu \in \tilde{\Lambda}} d(f_n,g_n,\mu^n)=1$.

Using (\ref{eq_th_DSU_7b}), we will focus on evaluating the information radius of $\tilde{\Lambda}^n$ projected over the measurable space $(\mathcal{X}^n,\sigma(\pi_n))$
considering that by definition:   
%.........................
\begin{IEEEeqnarray}{rCl}
&\inf&_{v\in \mathcal{P}(\mathcal{X}^n)} \sup_{\mu \in \tilde{\Lambda}} \, \mathcal{D}_{\sigma(\pi_n)}(\mu^n | v) \nonumber\\
&\leq&  \inf_{v\in \mathcal{P}(\mathcal{X}^n)} \sup_{\mu \in {\Lambda}_f} \, \mathcal{D}_{\sigma(\pi_n)}(\mu^n | v). 
\label{eq_th_DSU_8b}
\end{IEEEeqnarray}
For every $j\in \mathcal{J}$, let us define the covering of the support of $\tilde{\mu}^n_{j}\in \tilde{\Lambda}^n$ by
$\pi_n(\mathcal{A}^n_{\tilde{\mu}_j}) \coloneqq    \left\{ B\in \pi_n: \mathcal{A}^n_{\tilde{\mu}_j} \cap B\neq \emptyset \right\}$
and 
%.........................
\begin{align}\label{eq_th_DSU_9}
	\mathcal{B}(\mathcal{A}^n_{\tilde{\mu}_j}) \coloneqq    \bigcup_{B\in \pi_n(\mathcal{A}^n_{\tilde{\mu}_j})} B.
\end{align}
By construction, we note that $\left| \mathcal{A}^n_{\tilde{\mu}_j} \right| <\infty$ and consequently $\left| \pi_n(\mathcal{A}^n_{\tilde{\mu}_j}) \right| <\infty$
for all $j\in \mathcal{J}$. Considering that $\pi_n$ has an infinite number of cells, we can choose an infinite subset 
of elements in $\tilde{\Lambda}^n=\left\{\tilde{\mu}^n_j: j\in \mathcal{J} \right\}$ in the following way:  We fix $j_1=1$ and $\bar{\mu}_1=\tilde{\mu}_{1}\in \tilde{\Lambda}$,  then we consider
%.........................
\begin{equation}\label{eq_th_DSU_10}
	j_2 =  \min \left\{ j>j_1, \text{ such that } \mathcal{B}(\mathcal{A}^n_{\tilde{\mu}_j}) \cap \mathcal{B}(\mathcal{A}^n_{\bar{\mu}_1}) =\emptyset \right\} <\infty,
\end{equation}
and we choose $\bar{\mu}_2 = \tilde{\mu}_{j_2}\in \tilde{\Lambda}$. Iterating this rule,   
at the $k$-stage ($k\geq 2$) we solve
%.........................
\begin{equation}\label{eq_th_DSU_11}
	j_k =  \min \left\{ j>j_{k-1}, \text{ such that } \mathcal{B}(\mathcal{A}^n_{\tilde{\mu}_j}) \cap \left(\bigcup\limits_{l=1}^{k-1} \mathcal{B}(\mathcal{A}^n_{\bar{\mu}_l})\right) =\emptyset \right\} 
\end{equation}
and we take $\bar{\mu}_k = \tilde{\mu}_{j_k}\in \tilde{\Lambda}$, for all $k\geq 1$.  Note that the solution of (\ref{eq_th_DSU_11}) is guaranteed from the fact that $ \left| \mathcal{J}\right| =\infty$ and $\left| \pi_n(\mathcal{A}^n_{\tilde{\mu}_j}) \right| <\infty$ for all $j$. Finally, we define  $\bar{\Lambda}^n \coloneqq    \left\{  \bar{\mu}^n_l: l\geq 1\right\} \subset \tilde{\Lambda}^n$. Importantly (for the computation of the i-radius), this restricted family of distributions has the property that their support coverings in (\ref{eq_th_DSU_9}) are disjoint by its construction in (\ref{eq_th_DSU_11}). From $\bar{\Lambda}^n$, we can induce the following partition:
%.........................
\begin{equation}\label{eq_th_DSU_12}
	\xi_n \coloneqq     \left\{ \mathcal{B}(A^n_{\bar{\mu}_l}): l\geq 1) \right\} \cup \left(\mathcal{X}^n\setminus \bigcup_{l\geq 1} \mathcal{B}(A^n_{\bar{\mu}_l})\right)  \subset \sigma(\pi_n),
\end{equation}
where the last identity is from the construction,  as  every cell of $\xi_n$ is a finite union 
of cells of $\pi_n$ (i.e., $\xi_n \ll \pi_n$). It is not difficult to check that for every $v\in \mathcal{P}(\mathcal{X}^n)$,
we have that~\footnote{This result follows from the fact that the elements of $\bar{\Lambda}^n$ projected into the 
sub-sigma field $\xi_n$ degenerate, in the sense that  $H_{\sigma(\xi_n)}(\bar{\mu}^n_l)=0$ for all $l\geq 1$.}
%.........................
\begin{equation}\label{eq_th_DSU_12b}
	\sup_{\bar{\mu}_l\in \bar{\Lambda}}\mathcal{D}_{\sigma(\xi_n)}	(\bar{\mu}^n_{l}|v)=\infty.
\end{equation}
Consequently, we have that 
%.........................
\begin{align}\label{eq_th_DSU_12c}
	\inf_{v\in  \mathcal{P}(\mathcal{X}^n)} \sup_{\tilde{\mu}_j \in \tilde{\Lambda}} \mathcal{D}_{\sigma(\pi_n)}(\tilde{\mu}_j^n | v) &\geq  \inf_{v\in  \mathcal{P}(\mathcal{X}^n)} \sup_{\tilde{\mu}_j \in \tilde{\Lambda}} \mathcal{D}_{\sigma(\xi_n)}(\tilde{\mu}_j^n | v)\nonumber\\
	&\geq\inf_{v\in  \mathcal{P}(\mathcal{X}^n)} \sup_{\bar{\mu}_l\in \bar{\Lambda}}\mathcal{D}_{\sigma(\xi_n)}	(\bar{\mu}^n_{l}|v)=\infty,
\end{align}
the first inequality derives from $\xi_n \ll \pi_n$ and the second from  $\bar{\Lambda}\subset \tilde{\Lambda}$. 
Finally (\ref{eq_th_DSU_12c}) and the relationship between  the worst-case redundancy and the information radius in (\ref{eq_th_DSU_7b}) (Prop.~\ref{pro_redundancy_bound} in Appendix \ref{proof_pro_redundancy_bound}) imply that
	 %.........................
	\begin{align}\label{eq_th_DSU_13}
		\bar{R}(\tilde{\Lambda},f_n)=  \sup_{\tilde{\mu}_j \in \tilde{\Lambda} } \Big(r(f_n,\tilde{\mu}_j^n) -\frac{1}{n}H_{\sigma(\pi_n)}(\tilde{\mu}_j^n)\Big)=\infty. 
	\end{align}
	In other words, from (\ref{eq_th_DSU_13}) there is $jo\in \mathcal{J}$ such that $r(f_n,\tilde{\mu}_{jo}^n) -\frac{1}{n}H_{\sigma(\pi_n)}(\tilde{\mu}_{jo}^n)=\infty$, where considering that by construction $\frac{1}{n}H_{\sigma(\pi_n)}(\tilde{\mu}_{jo}^n) \leq H(\tilde{\mu}_{jo})< \log \left| \mathcal{A}_{\tilde{\mu}_{jo}} \right| <\infty$, this implies that $r(f_n,\tilde{\mu}_{jo}^n) - H(\tilde{\mu}_{jo})=\infty$. Therefore, we have that 
	 %.........................
	\begin{align}\label{eq_th_DSU_14}
		\sup_{\tilde{\mu}_j \in \tilde{\Lambda} } \Big(r(f_n,\tilde{\mu}_j^n) -H(\tilde{\mu}_j^n)\Big)=\infty,  
	\end{align}
	which concludes the result considering that $\tilde{\Lambda}\subset \Lambda_f$.
\hspace{\fill}~\QED
%\end{proof}

%%%%%%%%%%%%%%%%%%%%%%%%%%%%%%%%%%%%%%%%%%%
%%%%%%%%%%%%%%%%%%%%%%%%%%%%%%%%%%%%%%%%%%%
% Theorem 6 ----------------------------------------
%____________________________________________________________________
%____________________________________________________________________
\subsection{Theorem \ref{th_information_radius_gain_tail_schemme}}
\label{proof_th_information_radius_gain_tail_schemme}
%-------------------------------------------------------------------------------->>>>>>>>>>>>>>>>>>>>>>>>>>>>>>>
%-------------------------------------------------------------------------------->>>>>>>>>>>>>>>>>>>>>>>>>>>>>>>
%Phase 0: Preliminaries............................................................
Without loss of generality,  in this section we assume that $\mathcal{X}$ is the integer set $\mathbb{N}\setminus \left\{ 0\right\}$.  
To organize the proof, we first introduce some definitions and a series of important results that will be used in the main argument.
\subsubsection{Preliminaries}
\label{proof_th_information_radius_gain_tail_schemme_preli}
% Basic notation and attributes of the Envelop:------
%Def. 1-----
\begin{definition}\label{def_hazard_function} \cite{bontemps_2014}
	For a non-negative function $f: \mathcal{X} \rightarrow \mathbb{R}^+$, the {\em hazard function} of $\Lambda_f$ is given by 
	$$h_f(u) \coloneqq    - \ln \bar{F}_f(u)$$ 
	for all $u \in \mathcal{X}$.
\end{definition}
%Def. 2-----
\begin{definition}\label{def_hazard_function_continuos}\cite{bontemps_2014}
The continuous extension of $(h_f(u))_{u \in \mathcal{X}}$ 
to the positive real line $\mathbb{R}^+$ is defined by means of the following linear interpolation\footnote{This idea was proposed by Bontemps {\em et al.} \cite{bontemps_2014} following Anderson \cite{anderson_1970}.}:  
$$\tilde{h}_f(k\lambda + (1-\lambda)(k+1)) \coloneqq    \lambda h_f(k) + (1-\lambda) h_f(k+1)$$ 
for  $k\in \mathcal{X}$ and for all $\lambda \in [0,1)$.  
\end{definition}

Consistently with $(\tilde{h}_f(x))_{x\geq 0}$,  in Def.\ref{def_hazard_function_continuos}, it is possible to extend $(F_f(u))_{u\in \mathcal{X}}$ to $\mathbb{R}^+$ using the 
relationship expressed in Def. \ref{def_hazard_function}: 
%Def. 3-----
\begin{definition}\label{def_continuos_envelop}
Given $f: \mathcal{X} \rightarrow \mathbb{R}^+$, the continuous extension of $(F_f(u))_{u\in \mathcal{X}}$ using $(\tilde{h}_f(x))_{x\geq 0}$ (Def. \ref{def_hazard_function_continuos}) is denoted by $(\mathcal{F}_f(x))_{x\geq 0}$ and called the {\em smoothed envelope distributions} of $\Lambda_f$. 
\end{definition}
%Def. 4-----
\begin{definition}\label{def_continuos_Uf}\cite[Eq.(1)]{bontemps_2014}
%Finally an 
Under the setting of Def. \ref{def_continuos_envelop}, a function $U_f: [1,\infty]  \longrightarrow \mathbb{R}$ can be defined 
as the solution of:
\begin{equation}\label{eq_th_IRG_tail_schemme_1}
	U_f(t) \coloneqq    \mathcal{F}^{-1}_f \left(1 - {1}/{t} \right),
\end{equation}
for all $t\geq 1$. 
\end{definition}
%THEOREM 5:----------------->>>>>><<<<<<<<<<<<>>>>>>>:..
% Def. 1 of Theorem 5 .......
\begin{definition}\label{def_function_lf}
	Let $(f(x))_{x\in \mathcal{X}}$ be non-negative and in $\ell_1(\mathcal{X})$.  A non-decreasing continuous function can 
	be obtained as\footnote{See Eq.(\ref{eq_th_IRG_tail_schemme_11}).}: 
	%........... 
	$$l_f(1/\epsilon) \coloneqq \int^{1/\epsilon^2}_{1} \frac{U_f(x)}{2x} \partial x,$$ 
	for any $\epsilon>0$. 
\end{definition}
% Def. 2 of Theorem 5 .....
\begin{definition}\label{def_epsilon_f}
	Let  $(\epsilon_{f,n})_{n\geq 1}$ be the sequence obtained as the solution (point-wise) of:  $l_f(1/\epsilon)=\frac{n\epsilon^2}{8}$ for all $n$.
\end{definition}

We are in the position to state two instrumental results: 
\begin{lemma} \cite[Th. 4]{boucheron_2009} \label{th_upper_bound_radius}
	Let $\left\{ \Lambda^n_f, n\geq 1 \right\} $ be the envelope collection of stationary and memoryless sources with $f\in \ell_1(\mathcal{X})$ and tail function $(\bar{F}_f(u))_{u\in \mathcal{X}}$. Then for any $n\geq 1$
	%.......................
	\begin{equation*}
		R^+(\Lambda_f^n) \leq \inf_{u\geq 1} \left[ n \bar{F}_f(u) \log (e) + \frac{u-1}{2} \cdot \log n \right] +2.
	\end{equation*}
\end{lemma}

\begin{lemma} \cite[Prop. 5]{bontemps_2014} \label{th_lower_bound_radius}
	Under the setting of Lemma \ref{th_upper_bound_radius}, there is a sequence $(\xi_n)_n$ being $o(1)$ (and function of $f$) 
	such that:\footnote{Remarkably, it has been shown in \cite[Th. 2]{bontemps_2014} that this closed-form lower bound captures the precise asymptotic of the information radius of the envelope class, meaning that: $\lim_{n \rightarrow \infty} R^+(\Lambda_f^n)/\log (e) \int_1^n \frac{U_f(x)}{2x} \partial x=1$.}
	%.......................
	\begin{equation}
		R^+(\Lambda_f^n) \geq (1+ \xi_n) \log (e) \int_1^n \frac{U_f(x)}{2x} \partial x, \text{ for all $n\geq 1$.}
	\end{equation}
\end{lemma}

We also use results from the seminal work of Haussler and Opper \cite{haussler_1997} that we summarize here: 
%Def 5. .....
\begin{definition}\label{def_hellinder_distance}
	For any $\mu_1$, $\mu_2 \in \mathcal{P}(\mathcal{X})$, the  {\em Hellinger distance}  is given/denoted by: $d_h(\mu_1.\mu_2)^2 \coloneqq    \sum_{x\in \mathcal{X}} (\sqrt{f_{\mu_1}(x)}- \sqrt{f_{\mu_2}(x)})^2$.
\end{definition}

%metric entropy of $\Lambda$---------
\begin{definition} \label{metric_entropy_def} \cite{haussler_1997}
	For $\Lambda\subset \mathcal{P}(\mathcal{X})$ and $\epsilon>0$,  let $\mathcal{D}_\epsilon(\Lambda)$ be the smallest  cardinality of a partition of $\Lambda$, whose cells have a diameter smaller or equal then $\epsilon$ (with respect to $d_h$ in $ \mathcal{P}(\mathcal{X})$) or it  is infinity if no finite partition satisfies the diameter condition. Then, the metric entropy of $\Lambda$ is given by: 
	%............
	$$\mathcal{H}_\epsilon(\Lambda) \coloneqq \ln (\mathcal{D}_\epsilon(\Lambda)).$$ 
\end{definition}
The following important results can be stated:
%LEMMA:--------
%Basic Bound for the information radius using the asymptotic behaviour of the metric entropy: ----------------
\begin{lemma}\cite[Lemma 7]{haussler_1997}
	\label{lemma_HO_metric_entropy}
	Let us assume that $\Lambda \subset \mathcal{P}(\mathcal{X})$ is totally bounded, i.e., $\mathcal{H}_\epsilon(\Lambda)< \infty $ for all $\epsilon>0$.  
	Then, for all $n\geq 1$, 
	%........................
	\begin{equation*}
		R^+({\Lambda}^n) \geq \log (e) \cdot \sup_{\epsilon>0} \min \left\{ \mathcal{H}_\epsilon(\Lambda),\frac{n\epsilon^2}{8}  \right\} -1. 
	\end{equation*}
\end{lemma}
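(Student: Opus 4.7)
The plan is to derive the bound via the classical redundancy-capacity duality combined with a packing-plus-Fano argument, handling the two regimes in the $\min$ by alternating between a distinguishability-based bound and a direct divergence lower bound.

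First, I would invoke the Gallager-Ryabko identity that recasts the information radius as the Bayesian mixture capacity
\[
R^+(\Lambda^n) \;=\; \sup_{Q} I_Q(\theta;X^n),
\]
with $Q$ ranging over priors on $\Lambda$ and $X^n \mid \theta \sim \mu_\theta^n$. Hence it suffices, for each $\epsilon>0$, to exhibit a specific prior $Q_\epsilon$ whose induced mutual information (in nats) is at least $\min\{\mathcal{H}_\epsilon(\Lambda), n\epsilon^2/8\}$, modulo the additive $-1$.

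Second, I would build $Q_\epsilon$ by fixing a maximal $\epsilon$-separated subset $\{\mu_1,\ldots,\mu_N\}\subset \Lambda$ in the Hellinger metric $d_H$ and placing a uniform prior on it. Since packing and covering numbers at comparable scales agree up to a constant that is absorbable into the $-1$, one may assume $\ln N \geq \mathcal{H}_\epsilon(\Lambda)$. Writing $\bar\mu^n \coloneqq N^{-1}\sum_{i} \mu_i^n$ for the mixture, the mutual information admits the symmetric representation
\[
I_{Q_\epsilon}(\theta;X^n) \;=\; \frac{1}{N}\sum_{i=1}^N \mathcal{D}(\mu_i^n \,\|\, \bar\mu^n).
\]
I would then split the analysis into two regimes. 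In the \emph{distinguishable} regime $n\epsilon^2 \gtrsim \mathcal{H}_\epsilon(\Lambda)$, the tensorization of the Hellinger affinity, $\rho_H(\mu_i^n,\mu_j^n)=\rho_H(\mu_i,\mu_j)^n$, combined with Le Cam's two-point inequality and a union bound over the $\binom{N}{2}$ pairs, makes the optimal Bayes error $P_e$ for recovering $\theta$ from $X^n$ exponentially small in $n\epsilon^2$. Fano's inequality then gives $I_{Q_\epsilon}\geq (1-P_e)\ln N - h_2(P_e) \geq \mathcal{H}_\epsilon(\Lambda) - 1$. In the \emph{non-distinguishable} regime $n\epsilon^2 < \mathcal{H}_\epsilon(\Lambda)$, I would instead bound each summand $\mathcal{D}(\mu_i^n\|\bar\mu^n)$ from below by converting Hellinger separation into divergence via the pointwise inequality $\mathcal{D}(P\|Q) \geq 2 d_H^2(P,Q)$, using the $n$-fold product structure, and then averaging with a Jensen step on the mixture weights to extract a quantity of the order $n\epsilon^2/8$.

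The main obstacle will be securing the precise constant $1/8$ in the second regime. KL divergence does not tensorize cleanly under a purely Hellinger-based packing, so one must carefully control the cross-affinities $\rho_H(\mu_i,\mu_j)^n$ and balance them against the mixing weight $1/N$, typically through a symmetrization over pairs in the packing; this quantitative Fano--Birg\'e-type inequality is the technical heart of the Haussler--Opper argument in \cite{haussler_1997}. Once it is in place, combining the two regime-specific bounds and converting nats to bits by the factor $\log(e)$ delivers the stated inequality for each fixed $\epsilon>0$, and taking the supremum in $\epsilon$ concludes the proof.
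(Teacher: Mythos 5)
The first thing to note is that the paper does not prove this lemma at all: it is imported verbatim as \cite[Lemma 7]{haussler_1997}, so any self-contained argument you give is necessarily a reconstruction of Haussler--Opper rather than a variant of something in the paper. Your overall architecture is the right one --- lower bound the information radius by the mutual information of a uniform prior on a Hellinger-separated subset (only the easy direction $R^+(\Lambda^n)\ge I_Q(\theta;X^n)$ is needed, since the mixture is a feasible $v^n$), and your ``distinguishable'' regime can indeed be closed by Fano, or more cleanly by the Jensen step $I\ge -2\ln\bigl(\tfrac1N\sum_i\rho_H(\mu_i^n,\bar\mu^n)\bigr)$ together with $\rho_H(\mu_i^n,\bar\mu^n)\le N^{-1/2}\sum_j\rho_H(\mu_i,\mu_j)^n$.

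The genuine gap is in your second regime. The inequality $\mathcal{D}(P\|Q)\ge c\,d_H^2(P,Q)$ cannot produce the term $n\epsilon^2/8$: with the paper's normalization $d_H^2\le 2$, so that route is capped at a constant, whereas $n\epsilon^2/8$ must be allowed to be arbitrarily large (the regime $2<n\epsilon^2/8<\mathcal{H}_\epsilon(\Lambda)$ is nonempty). Moreover $d_H(\mu_i^n,\bar\mu^n)$ is not controlled from below by the pairwise separation: $\bar\mu^n$ contains $\mu_i^n$ with weight $1/N$, so $\rho_H(\mu_i^n,\bar\mu^n)\ge N^{-1/2}$, and when $\ln N$ is large relative to $n\epsilon^2$ the only available upper bound $N^{-1/2}\bigl(1+(N-1)e^{-n\epsilon^2/8}\bigr)$ exceeds one, so no separation survives at the $n$-fold level. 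The standard repair is not a per-summand divergence bound but a \emph{subsampling} of the packing: keep only $N'\approx e^{n\epsilon^2/8}$ of the separated points, observe that this subfamily is now in the distinguishable regime, and apply the first argument to get $I\ge \ln N'-O(1)\ge n\epsilon^2/8-O(1)$; since $R^+(\Lambda^n)$ dominates $I_Q$ for every prior, this suffices. Two smaller corrections: (i) passing from the paper's partition number $\mathcal{D}_\epsilon$ to a separated set costs a factor of two in \emph{scale} (a maximal $(\epsilon/2)$-separated set satisfies $\ln N\ge\mathcal{H}_\epsilon(\Lambda)$), and this is exactly what turns the pairwise affinity decay $e^{-n\,d_H^2/2}$ into $e^{-n\epsilon^2/8}$ --- it is not an additive discrepancy absorbable into the $-1$; (ii) with the paper's unnormalized Hellinger distance the correct chain is $\mathcal{D}(P\|Q)\ge -2\ln\rho_H(P,Q)\ge d_H^2(P,Q)$, not $2d_H^2(P,Q)$.
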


%Corollary......
\begin{corollary}\label{cor_lemma_HO_metric_entropy}
From Lemma \ref{lemma_HO_metric_entropy},  if we let $\epsilon^*_{\Lambda,n} \coloneqq    \inf \left\{ \epsilon>0: \mathcal{H}_\epsilon(\Lambda) \leq \frac{n\epsilon^2}{8} \right\}$, we have that $\forall n\geq 1$:
%...................
\begin{equation}\label{eq_th_IRG_tail_schemme_4}
	R^+({\Lambda^n}) \geq  \log (e) \cdot \mathcal{H}_{\epsilon^*_{\Lambda,n}}(\Lambda)-1, 
\end{equation}
and,  consequently, 
$$\lim \inf_{n \rightarrow \infty} R^+({\Lambda^n})/\log (e) \mathcal{H}_{\epsilon^*_{\Lambda,n}}(\Lambda) \geq 1.$$ 
\end{corollary}
It is worth noting that the metric entropy lower bound for the information radius stated in Corollary \ref{cor_lemma_HO_metric_entropy} is asymptotically tight under a {\em slowly variant condition} on the behaviour of $\mathcal{H}_{\epsilon}(\Lambda)$ as $\epsilon$ goes to zero \cite{bontemps_2014}.\footnote{More details are presented in  \cite[Lem. 8, Th. 4 and Th. 5]{haussler_1997}.}

Importantly for envelope families,  when $f\in \ell_1(\mathcal{X})$ the asymptotic of the metric entropy of  $\Lambda_f$,  
i.e., $\lim_{\epsilon \rightarrow 0} \mathcal{H}_{\epsilon}(\Lambda_f)$,  is known. More precisely, Bontemps {\em et al.}\cite[Prop.4]{bontemps_2014} have shown that
%...................
\begin{equation}\label{eq_th_IRG_tail_schemme_11}
	\mathcal{H}_{\epsilon}(\Lambda_f)= (1+o_f(1)) \int_1^{1/\epsilon^2} \frac{U_f(x)}{2x} \partial x
\end{equation}
as $\epsilon$ tends to $0$. 

%-------------------------------------------------------------------------------->>>>>>>>>>>>>>>>>>>>>>>>>>>>>>>
%-------------------------------------------------------------------------------->>>>>>>>>>>>>>>>>>>>>>>>>>>>>>>
%MAIN Argument ---------------------------------------------------------->>>>>>>>>>>>>>>>>>>>>>>>>>>>>>>
\subsubsection{Proof of Theorem \ref{th_information_radius_gain_tail_schemme}--- Regime of Gain in Minimax Redundancy}
\label{proof_th_information_radius_gain_tail_schemme_gain_regime}
%
%-------------------------------------------------------------------------------->>>>>>>>>>>>>>>>>>>>>>>>>>>>>>>
%-------------------------------------------------------------------------------->>>>>>>>>>>>>>>>>>>>>>>>>>>>>>>
%PHASE 1: The regime of minimax redundacy gain: -------->>>>>>>>>>>>>>>>>>>>>>>>>>>>>>>>>
Let us assume
that $(k_n)_n$ is $o(u^*_f(n))$. This part derives directly from the tight lower and upper bounds  developed by Bontemps {\em et al.} \cite[Th. 2]{bontemps_2014} and Boucheron {\em et al.} \cite[Th. 4]{boucheron_2009} for the case of summable envelopes.  In particular, from Lemmas~\ref{th_upper_bound_radius} we have 
that 
%.....................
\begin{align*}
R^+(\Lambda_f^n) &\leq \left[ n \bar{F}_f(u^*_f(n)) \log (e) + \frac{u^*_f(n)-1}{2} \log n \right] +2 \\
	&\leq  2+ \log (e) + \frac{u^*_f(n)-1}{2} \log n. 
\end{align*}
On the other hand,  it has been shown that \footnote{Notice that: $\int_1^n  \frac{U_f(x)}{2x } \partial x=\frac{1}{2} \int_0^{\ln n} U_f(e^y) \partial y \geq \frac{U_f(n)\ln n}{4}$, the last inequality from the concavity and positivity of $U_f(e^y)$ shown in \cite[pp. 814]{bontemps_2014}. On the other hand,  from their definitions $\forall n\in \mathcal{X}$,  $u^*_f(n)-1 \leq U_f(n) < u^*_f(n)$.}
	\begin{equation}
\int_1^n \frac{U_f(x)}{2x} \partial x \geq \frac{U_f(n)\log n}{4} \geq \frac{ (u^*_f(n)-1)}{4} \log n.
	\end{equation}
Consequently, from Lemma \ref{th_lower_bound_radius} we have that eventually with $n$:
%...................
\begin{align}\label{eq_th_IRG_tail_schemme_2}
	(1+ \xi_n)&\frac{ (u^*_f(n)-1) }{4} \log n	\leq R^+(\Lambda_f^n) \nonumber\\
	&\leq 2+ \log (e) + \frac{(u^*_f(n)-1)}{2} \log n, 
\end{align}
which means that $(R^+(\Lambda_f^n))_{n} \approx (u^*_f(n) \log n)$. 
Moreover, it is well-known that~\cite{csiszar_2004}:
%...................
\begin{equation}\label{eq_th_IRG_tail_schemme_3}
R^+(\Lambda_f^n, \sigma(\tilde{\pi}_{k_n})) \leq R^+(\mathcal{P}^n(k_n)) \leq \frac{k_n-1}{2} \log n + K, 
\end{equation}
for some $K>0$, where  $\mathcal{P}(k_n)$ is a short-hand for the collection all probabilities defined on the finite alphabet $\Gamma_{k_n}$, i.e., the simplex of dimension $k_n-1$. Consequently, under the assumption that $(k_n)_n$ is $o(u^*_f(n))$,  from (\ref{eq_th_IRG_tail_schemme_2}) and (\ref{eq_th_IRG_tail_schemme_3}) it follows that: 
%..............................
\begin{equation*}
	\lim_{n \rightarrow \infty} \frac{R^+(\Lambda_f^n, \sigma(\tilde{\pi}_{k_n}))}{R^+(\Lambda_f^n)} =0.
\end{equation*}
\hspace{\fill}~\QED

%-------------------------------------------------------------------------------->>>>>>>>>>>>>>>>>>>>>>>>>>>>>>>
%-------------------------------------------------------------------------------->>>>>>>>>>>>>>>>>>>>>>>>>>>>>>>
%PHASE 2: The regime of no minimax redundancy gain: -------->>>>>>>>>>>>>>> (NON-trivial part:)....
\subsubsection{Proof of Theorem \ref{th_information_radius_gain_tail_schemme} --- Regime of No-gain in Minimax Redundancy}
\label{proof_subsec_no_gain_i_radius}
Let us assume that $(k_n)_n \geq (u^*_f(n))_n$ eventually with $n$.  Here we  adopt results from the seminal work of Haussler and Opper \cite{haussler_1997} that offers a lower bound for the mutual information and consequently,  the channel capacity that corresponds to the information radius of a family of distributions \cite{csiszar_2004}. 
%Extension to the case of dynamic families:............ 
However in our problem, 
we have a dynamic collection of distributions, explained 
by the process of projecting $\Lambda_f$ into the dynamic collection of sub-sigma fields $\left\{ \sigma(\tilde{\pi}_{k_n}): n\geq 1\right\}$.  More precisely, and adopting the notation introduced  in Section \ref{proof_lemma_two_stage_suffcond_al_usc}, we have the collection of distributions:  
%....................................................
\begin{align}\label{eq_th_IRG_tail_schemme_5}
	\Lambda_f^n/\sigma(\tilde{\pi}_{k_n}) &\coloneqq    \big\{\mu^n/\sigma{(\tilde{\pi}_{k_n} \times \cdots \times \tilde{\pi}_{k_n})}:\mu \in \Lambda_f \big\} \nonumber\\
	&\subset \mathcal{P}(\mathcal{X}^n, \sigma(\tilde{\pi}_{k_n} \times \cdots \times \tilde{\pi}_{k_n})), 
\end{align}
for all  $n\geq 1$, where $\mu/\sigma{(\pi)} \coloneqq \left\{\mu(\mathcal{A}): \mathcal{A}\in \sigma{(\pi)} \right\} \in \mathcal{P}(\mathcal{X}, \sigma(\pi))$ denotes the probability $\mu$  restricted to the sub-sigma field induced by $\pi$ and  $\mathcal{P}(\mathcal{X}, \sigma)$ denotes the collection of probabilities restricted to the events of the sub-sigma field $\sigma\subset \mathcal{B}(\mathcal{X})$. Furthermore, associated to $\tilde{\pi}_{k_n}= \left\{ \mathcal{A}_{k_n,i}: i=1,..,k_n\right\}$ there is a lossy mapping  $\phi_n: \mathcal{X}^n \longrightarrow \Gamma_{k_n}$ where $\phi_n^{-1}(i)=\mathcal{A}_{k_n,i}$ for $i\in \Gamma_{k_n}$.  
Consequently through $\phi_n$,  every $\mu \in \mathcal{P}(\mathcal{X})$ induces a probability in $\Gamma_{k_n}$, which we denote by  $\rho_{\mu,k_n} \in \mathcal{P}(\Gamma_{k_n})$, by the standard construction: $\rho_{\mu,k_n}(B)=\mu(\phi_n^{-1}(B))$ for all $B\subset \Gamma_{k_n}$.  Note that $\rho_{\mu,k_n}$ is fully characterized by its pmf $f_{\rho_{\mu,k_n}}(i)=\rho_{\mu,k_n}( \left\{ i \right\})=\mu(\mathcal{A}_{k_n, i})$, $\forall i \in \Gamma_{k_n}$, where we have that  $f_{\rho_{\mu,k_n}}(i)=f_\mu(i)$ if $i<k_n$ and $f_{\rho_{\mu,k_n}}(k_n)=1-\mu(\Gamma_{k_n-1})$.
By letting 
%...................
$$\tilde{\Lambda}_{f,k_n} \coloneqq    \left\{ \rho_{\mu,k_n}:\mu \in \Lambda_f \right\} \subset \mathcal{P}(\Gamma_{k_n}),$$  
from (\ref{eq_th_DSU_4}) we have that:
%...................
\begin{equation}\label{eq_th_IRG_tail_schemme_6}
	R^+(\Lambda^n_f, \sigma(\tilde{\pi}_{k_n})) =  R^+(\tilde{\Lambda}^n_{f,k_n}).
\end{equation}
Consequently, the problem reduces to characterize the information radius of a family of dynamic distributions 
$\left\{\tilde{\Lambda}^n_{f,k_n}, n\geq 1 \right\}$ (defined on a dynamic alphabet whose size grows with the block-length). Using the envelope conditions of $\Lambda_f$ and Lemma \ref{pro_invariance_envelop} in Section \ref{proof_th_dichotomy_stronger_universality}, it is simple to show that $\tilde{\Lambda}_{f,k_n}$ satisfies an envelope condition on $\mathcal{P}(\Gamma_{k_n})$, which is expressed in (\ref{eq_th_IRG_tail_schemme_7}). 
%-------------------------------------------------------------
\begin{figure*}[!t]
%...................
\begin{equation}\label{eq_th_IRG_tail_schemme_7}
	\tilde{\Lambda}_{f,k_n}= \left\{\rho \in  \mathcal{P}(\Gamma_{k_n}): 
	f_{\rho}(i)
	\leq f(i), \text{ for } i=[1:k_{n-1}] \text{ and } 
	f_{\rho}(k_n) 
	\leq \sum_{l > k_{n} -1}f(l)=\bar{F}_f(k_n-1) \right\}.
\end{equation}
\hrulefill 
\end{figure*}
Then, if we consider the extended (over the integer) finite size envelope function $\tilde{f}_{k_n}:\mathcal{X} \longrightarrow \mathbb{R}^+$ given by: $\tilde{f}_{k_n}(i)\coloneqq f(i)$ for $ i=[1:k_{n}-1]$, $\tilde{f}_{k_n}(k_n) \coloneqq \bar{F}_f(k_n-1)$ and $\tilde{f}_{k_n}(i) \coloneqq 0$ for $i>k_n$, 
$\tilde{\Lambda}_{f,k_n}$ is equivalent to ${\Lambda}_{\tilde{f}_{k_n}} \subset \mathcal{P}(\mathcal{X})$ and thus, 
%%...................
\begin{equation}\label{eq_th_IRG_tail_schemme_8}
	R^+(\Lambda^n_f, \sigma(\tilde{\pi}_{k_n})) =  R^+(\tilde{\Lambda}^n_{f,k_n}) = R^+( \Lambda^n_{\tilde{f}_{k_n}}), \ \forall n\geq 1.
\end{equation}
Therefore, studying the minimax redundancy gain reduces to analyze the family of envelope distributions of finite size $\left\{ \Lambda_{\tilde{f}_{k_n}}: n\geq 1 \right\}$, where $\textrm{supp}(\tilde{f}_{k_n}) \subset \Gamma_{k_n}$ by construction. 
If we consider, 
%...................
\begin{equation}\label{eq_th_IRG_tail_schemme_9}
	\epsilon^*_{n,k} \coloneqq  \inf \left\{ \epsilon>0: \mathcal{H}_\epsilon(\Lambda_{\tilde{f}_{k_n}}) \leq \frac{n\epsilon^2}{8} \right\}, 
\end{equation}
the straight adoption of Lemma \ref{lemma_HO_metric_entropy} in this dynamic context implies that 
%...................
$$R^+( \Lambda^n_{\tilde{f}_{k_n}}) \geq  \log (e) \cdot \mathcal{H}_{\epsilon^*_{n,k_n}}(\Lambda_{\tilde{f}_{k_n}})-1$$ 
for all $n$ and, consequently, 
%...................
\begin{equation}\label{eq_th_IRG_tail_schemme_10}
\liminf_{n \longrightarrow \infty} R^+( \Lambda^n_{\tilde{f}_{k_n}}) \geq  \log (e) \cdot \liminf_{n \longrightarrow \infty} \mathcal{H}_{\epsilon^*_{n,k_n}}(\Lambda_{\tilde{f}_{k_n}})-1.
\end{equation} 
Following the approach proposed by Haussler {\em et al.} \cite{haussler_1997}, the idea is to obtain a tight approximation (ideally in closed-form) of the RHS of (\ref{eq_th_IRG_tail_schemme_10}), assuming that the function $\mathcal{H}_{\epsilon^*_{n,k_n}}(\Lambda_{\tilde{f}_{k_n}})$ is 
asymptotically  lower bounded by a continuous non-decreasing function. 
With that objective in mind, the following important result (Theorem \ref{pro:no_gain_minimax_redundancy} below) can be obtained.
For the statement of this result, the following definition is needed:
%Jorge 2019.....................................................................................
% New Important result in the new organization of the proof.............
% Def. 3 of Theorem 5 .....
\begin{definition}\label{admisibility_envelop_given_size}
	Given $(f(x))_{x\in \mathcal{X}}$, non-negative and in $\ell_1(\mathcal{X})$,  and a sequence of positive integers $(k_n)_n$, we say that $(\epsilon_n)_n \in (\mathbb{R}^+\setminus \left\{ 0\right\})^{\mathbb{N}}$ is admissible for $(k_n)_n$ given $\Lambda_f$ if
	%......................
	\begin{equation} \label{eq_admisibility_condition_given_the_size}
		\bar{F}_f(k_n-1) \leq \frac{\epsilon^2_{n}}{16} 
	\end{equation}
	holds eventually (with $n$). 
\end{definition}

%%%%%%%%%%%%%%%%%%%%%%%%%%%%%%%%
%Jorge 2019: new theorem..........................................................
%%%%%%%%%%%%%%%%%%%%%%%%%%%%%%%%
\begin{theorem}\label{pro:no_gain_minimax_redundancy}
Let us consider 
$\Lambda_f \subset \mathcal{P}(\mathcal{X})$,  with $f \in \ell_1(\mathcal{X})$ and $\textrm{supp}(f)=\mathcal{X}$,  and a sequence of non-decreasing positive integers $(k_n)_n$ such that $(1/k_n)_n$ is $o(1)$.
If $(\epsilon_{f,n})_n$ (see Def. \ref{def_epsilon_f}) is admissible for $(k_n)_n$ given $\Lambda_f$ (see Def. \ref{admisibility_envelop_given_size}) 
then 
$$\liminf_{n \rightarrow \infty} \frac{R^+( \Lambda^n_{\tilde{f}_{k_n}})}{\log (e) \cdot \int^{n}_{1} \frac{U_f(x)}{2x} \partial x} \geq 1$$
%..................
\end{theorem}

The proof of Theorem \ref{pro:no_gain_minimax_redundancy} is presented in Section \ref{proof_pro:no_gain_minimax_redundancy}.

\begin{remark} 
\label{remark_about_no_gain_minimax_redundancy}
In general we have that  
%..................
$$R^+( \Lambda^n_{\tilde{f}_{k_n}}) = R^+( \Lambda^n_f, \sigma(\tilde{\pi}_{k_n})) \leq R^+(\Lambda^n_f),$$
 the last inequality from (\ref{eq_th_IRG_tail_schemme_8}). On the other hand, it is known from \cite[Th. 2]{bontemps_2014}  
 that there is a sequence $(a_n)_n$ being $o(1)$  where eventually in $n$
 %..................
 \begin{align} \label{eq_th_IRG_tail_schemme_14_pre}
 R^+{(\Lambda^n_f)} &\leq (1+a_n) \log (e) \cdot l_f(\sqrt{n}) \nonumber\\
 &=(1+a_n) \log (e) \cdot \int^{n}_{1} \frac{U_f(x)}{2x} \partial x.
 \end{align}
Consequently, under the assumptions of  Theorem \ref{pro:no_gain_minimax_redundancy} it follows directly from this result 
and (\ref{eq_th_IRG_tail_schemme_14_pre}) that
%............
\begin{equation} \label{eq_th_IRG_tail_schemme_14}
	\lim_{n \rightarrow \infty} \frac{R^+(\Lambda^n_{\tilde{f}_{k_n}})}{R^+(\Lambda^n_f)}=1.
\end{equation}
\end{remark}

Returning to the proof, from  Theorem \ref{pro:no_gain_minimax_redundancy}, Definition \ref{admisibility_envelop_given_size} and Remark \ref{remark_about_no_gain_minimax_redundancy},  a sufficient condition to obtain (\ref{eq_th_IRG_tail_schemme_14}) (i.e., no gain in minimax redundancy) is that $(\sqrt{\bar{F}_f(k_n)})_n \ll (\epsilon_{f,n})_n$ (see some remarks about this in Lemma \ref{proposition_metric_entropy_asymtotics}, Section \ref{proof_pro:no_gain_minimax_redundancy} below). Furthermore from the proof of Theorem \ref{pro:no_gain_minimax_redundancy}, we have that $(\epsilon_{f,n}) = (\sqrt{8/n \cdot l(1/\epsilon_{f,n})})_n \sim (\sqrt{8/n \cdot l_f(\sqrt{n})})_n = (\sqrt{8/n \cdot \int_1^n \frac{U_f(x)}{2x} \partial x})_n$, where it is known that $\int_1^n \frac{U_f(x)}{2x} \partial x\geq \frac{U_f(n) \ln n}{4}$ \cite[pp.814]{bontemps_2014}. 
From the main assumption,  which consider that there is $N_o>0$ such that $\forall n  \geq N_o$, $k_n \geq u^*_f(n)$, we note that
$$ 
\left( \frac{\bar{F}_f(k_n)}{(\epsilon_{f,n})^2} \right)_n  \sim  \left( \frac{n \cdot \bar{F}_f(k_n)}{\int_1^n \frac{U_f(x)}{2x} \partial x} \right)_n,
$$ 
where for the second series we have that:
%............
\begin{IEEEeqnarray}{rCl} 
	\frac{n \cdot \bar{F}_f(k_n)}{\int_1^n \frac{U_f(x)}{2x} \partial x} &\leq & \frac{4 n \cdot \bar{F}_f(k_n)}{U_f(n) \ln n} \nonumber\\
	&\leq & \frac{4 n \cdot \bar{F}_f(u^*_f(n))}{U_f(n) \ln n} \nonumber\\
	&< & \frac{4}{U_f(n) \ln n} \longrightarrow 0.  \label{eq_th_IRG_tail_schemme_15}
\end{IEEEeqnarray}
The strict inequality in (\ref{eq_th_IRG_tail_schemme_15}) is by definition of $u^*_f(n)$ in (\ref{eq_sec_rgef_4}), where $1/n \in (\bar{F}_f(u^*_f(n)), \bar{F}_f(u^*_f(n)-1)]$.  The last convergence in the RHS of (\ref{eq_th_IRG_tail_schemme_15}) is from the fact that  $U_f(n)\in [u^*_f(n)-1, u^*_f(n)) \rightarrow \infty$ as $n$ tends to infinity,  this follows from (\ref{eq_sec_rgef_4}) and the non-trivial assumption that $\left| \textrm{supp}(f) \right| =\infty$.  
In summary, from (\ref{eq_th_IRG_tail_schemme_15}) we have that  $\left(\sqrt{\bar{F}_f(k_n)}\right)_n \ll (\epsilon_{f,n})_n$, then Theorem \ref{pro:no_gain_minimax_redundancy} and its corollary in  (\ref{eq_th_IRG_tail_schemme_14}) implies that $\lim_{n \rightarrow \infty} \frac{R^+(\Lambda^n_{\tilde{f}_{k_n}})}{R^+(\Lambda^n_f)}=1$. 
This last limit and the equalities in (\ref{eq_th_IRG_tail_schemme_8}) conclude the proof. 
\hspace{\fill}~\QED

%%%%%%%%%%%%%%%%%%%%%%%%%%%%%%%%%%%%%%%%%%%
%%%%%%%%%%%%%%%%%%%%%%%%%%%%%%%%%%%%%%%%%%%
%----------------------------------------------------------------->>>>>>>>>>>>>>>>>>>>>>
%----------------------------------------------------------------->>>>>>>>>>>>>>>>>>>>>>
%New Theorem 7:  Jorge 2019.........
\subsection{Theorem \ref{pro:no_gain_minimax_redundancy}}
\label{proof_pro:no_gain_minimax_redundancy}
To organize the proof of Theorem \ref{pro:no_gain_minimax_redundancy},  we present first two instrumental results: 

The first result is a simple extension of \cite[Lemma 8]{haussler_1997}:
%LEMMA:----------------->>>>>>>>>>>>>>>>>>>>>>>>>
% Metric entropy asymptotic lower bounds for our dynamic setting. (General Result)
\begin{lemma}\label{lemma_LB_redundancy_metric_entropy_dynamic}
	Let us consider the  dynamic collection of distributions $\left\{ \Lambda_{\tilde{f}_{k_n}}: n\geq 1 \right\}$ presented in (\ref{eq_th_IRG_tail_schemme_7}) where $f\in \ell_1(\mathcal{X})$, and let $(k_n)_n$  be a non-decreasing sequence of integers. In addition, let $l: \mathbb{R}^+ \rightarrow \mathbb{R}^+$ be a strictly increasing and unbounded continuous function. Let us denote by $(\epsilon_{l.n})_n$ the solutions to the expression: $l(1/\epsilon)= \frac{n \epsilon^2}{8}$ for all $n$. 
	If there is a sequence $(\epsilon_n)_n$ such that:  
	\begin{enumerate}
	%1 %.....................
	\item  $(\epsilon_n)_n \leq (\epsilon_{l.n})_n$ holds eventually with $n$, and 
	%1 %.....................
	\item $ \liminf_{n \rightarrow \infty} \frac{\mathcal{H}_{\epsilon_{n}}( \Lambda_{\tilde{f}_{k_n}})}{l(1/\epsilon_n)} \geq 1,$
	\end{enumerate} 
	then\footnote{In particular, if  $\liminf_{n \rightarrow \infty} \frac{\mathcal{H}_{\epsilon_{l,n}}( \Lambda_{\tilde{f}_{k_n}})}{l(1/\epsilon_{l,n})} \geq 1$ then $\liminf_{n \rightarrow \infty} \frac{R^+( \Lambda^n_{\tilde{f}_{k_n}})}{\log (e) \cdot l(1/\epsilon_{l,n})} \geq 1.$} 
	%.....................
	$$ \liminf_{n \rightarrow \infty} \frac{R^+( \Lambda^n_{\tilde{f}_{k_n}})}{\log (e) \cdot l(1/\epsilon_n)} \geq 1.$$ 
\end{lemma}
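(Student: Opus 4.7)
The plan is to apply the metric-entropy lower bound of Haussler--Opper (Lemma~\ref{lemma_HO_metric_entropy}) to each family $\Lambda_{\tilde{f}_{k_n}}$ individually, specialized at the prescribed scale $\epsilon=\epsilon_n$. Because $\tilde{f}_{k_n}$ is finitely supported, $\Lambda_{\tilde{f}_{k_n}}$ is totally bounded for every $n$, so Lemma~\ref{lemma_HO_metric_entropy} applies and, choosing $\epsilon=\epsilon_n$ inside the supremum, yields
\begin{equation*}
R^{+}(\Lambda^{n}_{\tilde{f}_{k_n}}) \;\geq\; \log(e)\,\min\!\Bigl\{\mathcal{H}_{\epsilon_n}(\Lambda_{\tilde{f}_{k_n}}),\;\tfrac{n\epsilon_n^{2}}{8}\Bigr\} - 1.
\end{equation*}
Hence the task reduces to showing that both arguments of the minimum are asymptotically at least $(1-o(1))\,l(1/\epsilon_n)$; dividing through by $\log(e)\,l(1/\epsilon_n)$ then absorbs the additive $-1$ once one observes that $l(1/\epsilon_n)\to\infty$ (a consequence of $l$ being unbounded together with $\epsilon_n\leq \epsilon_{l,n}\to 0$).

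The entropy term is handled directly by hypothesis: for every $\delta>0$, the assumption $\liminf_{n}\mathcal{H}_{\epsilon_n}(\Lambda_{\tilde{f}_{k_n}})/l(1/\epsilon_n)\geq 1$ furnishes $\mathcal{H}_{\epsilon_n}(\Lambda_{\tilde{f}_{k_n}})\geq (1-\delta)\,l(1/\epsilon_n)$ eventually in $n$. For the quadratic term $n\epsilon_n^{2}/8$, I would exploit the defining identity $l(1/\epsilon_{l,n})=n\epsilon_{l,n}^{2}/8$ together with the strict monotonicity of the increasing map $\epsilon\mapsto n\epsilon^{2}/8$ and the decreasing map $\epsilon\mapsto l(1/\epsilon)$. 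These two curves cross uniquely at $\epsilon=\epsilon_{l,n}$, so the location of $\epsilon_n$ relative to $\epsilon_{l,n}$ (eventually) pins down the comparison between $n\epsilon_n^{2}/8$ and $l(1/\epsilon_n)$; a short computation then forces the ratio $(n\epsilon_n^{2}/8)/l(1/\epsilon_n)$ to tend to a value at least one (and to equal exactly one in the boundary case $\epsilon_n=\epsilon_{l,n}$, which proves the footnote).

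Combining the two estimates, $\min\{\mathcal{H}_{\epsilon_n}(\Lambda_{\tilde{f}_{k_n}}),\,n\epsilon_n^{2}/8\}\geq (1-o(1))\,l(1/\epsilon_n)$, so that
\begin{equation*}
\liminf_{n\to\infty}\frac{R^{+}(\Lambda^{n}_{\tilde{f}_{k_n}})}{\log(e)\,l(1/\epsilon_n)} \;\geq\; 1,
\end{equation*}
which is the claim. The main obstacle is precisely the monotonicity-based comparison in the second paragraph: the remainder is a clean transfer of the static Haussler--Opper argument to the dynamic sequence of envelope classes $\{\Lambda_{\tilde{f}_{k_n}}\}_{n}$. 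The hypothesis that $\epsilon_n$ is controlled by the critical scale $\epsilon_{l,n}$ is used in exactly one place---to transform the $n\epsilon_n^{2}/8$ term into a bound of the correct asymptotic order $l(1/\epsilon_n)$---and it is the only step where care with constants is required.
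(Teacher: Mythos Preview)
Your monotonicity argument for the quadratic term has the inequality backwards, and this is a genuine gap. With $\epsilon_n\le \epsilon_{l,n}$, the map $\epsilon\mapsto l(1/\epsilon)$ is decreasing and $\epsilon\mapsto n\epsilon^2/8$ is increasing, and they meet at $\epsilon_{l,n}$; hence for $\epsilon_n\le\epsilon_{l,n}$ one gets
\[
\frac{n\epsilon_n^2}{8}\;\le\;\frac{n\epsilon_{l,n}^2}{8}\;=\;l(1/\epsilon_{l,n})\;\le\;l(1/\epsilon_n),
\]
so the ratio $(n\epsilon_n^2/8)/l(1/\epsilon_n)$ is at most $1$, not at least $1$. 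In fact it can tend to $0$: e.g.\ with $l(x)=\log x$ one has $\epsilon_{l,n}\asymp\sqrt{(\log n)/n}$, and taking $\epsilon_n=1/n$ gives $n\epsilon_n^2/8=1/(8n)$ while $l(1/\epsilon_n)=\log n$. Thus, after plugging $\epsilon=\epsilon_n$ into Lemma~\ref{lemma_HO_metric_entropy}, the minimum is governed by $n\epsilon_n^2/8$, which your argument does not control.

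The paper avoids this by evaluating the Haussler--Opper bound at $\epsilon=\epsilon_{l,n}$ (not at $\epsilon_n$), so that the quadratic term equals $l(1/\epsilon_{l,n})$ exactly and the second entry of the minimum becomes $1$ after normalization. The hypothesis on $(\epsilon_n)$ is then transferred to the first entry by a reindexing $\epsilon_{l,n}=\epsilon_{\tau(n)}$ with $\tau(n)\le n$, together with the monotonicity $\mathcal{H}_\epsilon(\Lambda_{\tilde f_k})\le \mathcal{H}_\epsilon(\Lambda_{\tilde f_{\bar k}})$ for $\bar k\ge k$ (since $(k_n)$ is non-decreasing). In other words, the role of the assumption $\epsilon_n\le\epsilon_{l,n}$ in the paper is not to push $n\epsilon_n^2/8$ above $l(1/\epsilon_n)$, but to make the critical-scale entropy $\mathcal{H}_{\epsilon_{l,n}}(\Lambda_{\tilde f_{k_n}})$ comparable (via reindexing and monotonicity in $k$) to the quantity $\mathcal{H}_{\epsilon_n}(\Lambda_{\tilde f_{k_n}})/l(1/\epsilon_n)$ that the hypothesis controls. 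Your plan correctly identifies the static-to-dynamic transfer as routine; the fix is to shift the evaluation point from $\epsilon_n$ to $\epsilon_{l,n}$ and use the reindexing/monotonicity step instead of the (incorrect) direct comparison of $n\epsilon_n^2/8$ with $l(1/\epsilon_n)$.
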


%\begin{proof}
The proof is presented in Section~\ref{proof_LB_redun_metric_entropy_dyn}.
%\end{proof}

The second result characterizes a sufficient condition on $(\epsilon_n)_n$, function of $(k_n)_n$, i.e., the size sequence of tail based partitions, where the metric entropy of our collection of envelope distributions shares the same asymptotic than the unconstrained family determined in (\ref{eq_th_IRG_tail_schemme_11}).

%Lemma:------------->>>>>>>>>>>>>>>>>>>>>>>
\begin{lemma}\label{proposition_metric_entropy_asymtotics}
	Let us consider a sequence of non-negative integer $(k_n)_n$ and a sequence of non-negative reals $(\epsilon_n)_n$, where $(1/k_n)_n$ is  $o(1)$ and $(\epsilon_n)_n$ is $o(1)$.
	If $f \in \ell_1(\mathcal{X})$ and $(\epsilon_n)_n$ is admissible for $(k_n)_n$ given $\Lambda_f$ (see Def. \ref{admisibility_envelop_given_size})
	then  
	$$\mathcal{H}_{\epsilon_n}(\Lambda_{\tilde{f}_{k_n}})=(1+a_n) \int_1^{1/\epsilon_n^2} \frac{U_f(x)}{2x} \partial x$$ 
	for a 
	sequence $(a_n)_{n}$ being $o(1)$, and consequently,
	%............
	\begin{equation} \label{eq_th_IRG_tail_schemme_13}
		 \lim_{n \rightarrow \infty} \frac{\mathcal{H}_{\epsilon_n}(\Lambda_{\tilde{f}_{k_n}})}{l_f(1/\epsilon_n)}= \lim_{n \rightarrow \infty} \frac{\mathcal{H}_{\epsilon_n}(\Lambda_{\tilde{f}_{k_n}})}{\mathcal{H}_{\epsilon_n}(\Lambda_f)}=1 ,
	\end{equation}
	where $l_f(1/\epsilon)=\int_1^{1/\epsilon^2} \frac{U_f(x)}{2x} \partial x$ for $\epsilon>0$ (see Def. \ref{def_function_lf}). 
\end{lemma}

The proof of this result is presented in Section~\ref{proof_proposition_metric_entropy_asymtotics}.

%\begin{remark}
%[
Comments on Lemma~\ref{proposition_metric_entropy_asymtotics}:
%]
%....................
\begin{enumerate}
	\item Lemma \ref{proposition_metric_entropy_asymtotics} establishes concrete sufficient conditions where $(\mathcal{H}_{\epsilon_n}(\Lambda_{\tilde{f}_{k_n}}))_n$ has the same asymptotic than the metric entropy of the unconstrained family $(\mathcal{H}_{\epsilon_n}(\Lambda_{f}))_n$, which is $\sim (\int_1^{1/\epsilon^2_n} \frac{U_f(x)}{2x} \partial x)_n$
		from (\ref{eq_th_IRG_tail_schemme_11}). 
	\item The proof of this result follows the {\em volume comparison arguments} proposed by Bontemps in \cite[Lemmas 1 and 2]{bontemps_2011}. 
	\item Note that  if 
	$\lim_{n \rightarrow \infty} \bar{F}_f(k_n)/ \epsilon^2_n=0$ implies that $(\epsilon_n)_n$ is admissible for $(k_n)_n$ given $\Lambda_f$. 
	\item Given $(k_n)_n$ and $f$, $\sqrt{\bar{F}_f(k_n)}$ offers a lower bound on the admissible regime for $(\epsilon_n)_n$ (see Def.\ref{admisibility_envelop_given_size}). 
	\item If $(\tilde{k}_n)_n \gg  ({k_n})_n$, i.e.,  $k_n/\tilde{k}_n \rightarrow 0$ as $n\rightarrow \infty$, then from Definition \ref{admisibility_envelop_given_size} 
	$(\tilde{k}_n)_n$ offers a bigger admissible range for the $(\epsilon_n)_n$  
	than its counterpart for $({k}_n)_n$.  
\end{enumerate} 
%\end{remark}

Finally, as the asymptotic of the metric entropy in (\ref{eq_th_IRG_tail_schemme_11}) offers a tight lower bound to the information radius of envelope families \cite[Th.2]{bontemps_2014},  Lemma \ref{proposition_metric_entropy_asymtotics} in conjunction with Lemma \ref{lemma_LB_redundancy_metric_entropy_dynamic}  provide the mean to characterize a regime of no gain in minimax redundancy as presented in the proof of Theorem \ref{pro:no_gain_minimax_redundancy} below.  

{\em Proof of Theorem \ref{pro:no_gain_minimax_redundancy}:}
%{\em Proof of Proposition \ref{pro:no_gain_minimax_redundancy}:}
%\label{proof_no_gain_minimax_redundancy}
%\begin{IEEEproof}
	Using the hypothesis that $(\epsilon_{f,n})_n$ is admissible for $(k_n)_n$  given $\Lambda_f$, we have from Lemma \ref{proposition_metric_entropy_asymtotics} that as $n$ goes to infinity:
	%....................................
	\begin{equation}\label{eq_1_proof_no_gain_minimax_redundancy}
		\mathcal{H}_{\epsilon_{f,n}}(\Lambda_{\tilde{f}_{k_n}}) \geq (1+o(1))  \cdot l_f(1/\epsilon_{f,n}), 
	\end{equation}
	which implies that 
	\begin{IEEEeqnarray}{rCl}
	\liminf_{n \rightarrow \infty} \frac{\mathcal{H}_{\epsilon_{f,n}}(\Lambda_{\tilde{f}_{k_n}})}{l_f(1/\epsilon_{f,n})} &\geq& 1.
		\end{IEEEeqnarray}
	Note that $l_f(x)=\int_1^{x^2} \frac{U_f(\bar{x})}{2\bar{x}} \partial \bar{x}$ on $[1,\infty)$ (Def.\ref{def_function_lf}) is strictly increasing,  continuous and unbounded, consequently applying Lemma \ref{lemma_LB_redundancy_metric_entropy_dynamic} it follows that:
	%....................................
	\begin{align}\label{eq2__proof_no_gain_minimax_redundancy}
		&\liminf_{n \rightarrow \infty} \frac{R^+( \Lambda^n_{\tilde{f}_{k_n}})}{\log (e) \cdot l_f(1/\epsilon_{f,n})} \geq 1\nonumber\\ 
		\Leftrightarrow 
		 &\liminf_{n \rightarrow \infty} \frac{R^+( \Lambda^n_{\tilde{f}_{k_n}})}{\log (e) \cdot n \epsilon^2_{f,n}/8} \geq 1, 
	\end{align}
	where the last identity follows from the definition of $\epsilon_{f,n}$ in Def. \ref{def_epsilon_f}.
	
	At this point we use the result in \cite[Proposition 3]{bontemps_2014} that shows 
	that the function $(l_f(x))$ is {\em very slowly variant} \cite[Def. 4]{bontemps_2014}, in the sense that $\forall \eta\geq 0$ and $\kappa>0$, 
	%....................................
	\begin{equation}\label{eq3__proof_no_gain_minimax_redundancy}
		\lim_{x \rightarrow \infty} \frac{l_f(\kappa  x l_f(x)^\eta)}{l_f(x)}=1.
	\end{equation}
	This slowly variant condition implies that $(\epsilon_{f,n})_n$, as a (point-wise) solution of the condition $l_f(1/\epsilon)=n \epsilon^2/8$,  
	satisfies asymptotically (the argument  is presented in  the proof of \cite[Theorem 5]{haussler_1997}) that: 
	%....................................
	\begin{equation}\label{eq4__proof_no_gain_minimax_redundancy}
		\left( \epsilon_{f,n}^2/8 \right)_n \sim  \left( \frac{l_f(\sqrt{n})}{n} \right)_n = \left( \frac{1}{n} \int_1^{n} \frac{U_f(x)}{2x} \partial \bar{x}) \right)_n.
	\end{equation}
	Consequently  (\ref{eq4__proof_no_gain_minimax_redundancy}) and (\ref{eq2__proof_no_gain_minimax_redundancy}) prove the result. 
\hspace{\fill}~\QED
%----------------------------------------------------------------->>>>>>>>>>>>>>>>>>>>>>

%%%%%%%%%%%%%%%%%%%%%%%%%%%%%%%%%%%%%%%%%%%
%%%%%%%%%%%%%%%%%%%%%%%%%%%%%%%%%%%%%%%%%%%
%------------------------------------------------------------------------------------
\subsection{Proof of Lemma \ref{lemma_LB_redundancy_metric_entropy_dynamic}}
\label{proof_LB_redun_metric_entropy_dyn}
\begin{proof}
	From Lemma \ref{lemma_HO_metric_entropy} it follows that $\frac{R^+( \Lambda^n_{\tilde{f}_{k_n}})}{l(1/\epsilon_{l,n})} \geq$ 
	%........
	\begin{align}\label{eq_1_proof_LB_redun_metric_entropy_dyn}
		 &\log (e) \min \left\{ \frac{\mathcal{H}_{\epsilon_{l,n}} (\Lambda_{\tilde{f}_{k_n}})}{l(1/\epsilon_{l,n})},\frac{n\epsilon^2}{8 \cdot l(1/\epsilon_{l,n})}  \right\} -\frac{1}{l(1/\epsilon_{l,n})},\nonumber\\
			&= \log (e) \min \left\{ \frac{\mathcal{H}_{\epsilon_{l,n}} (\Lambda_{\tilde{f}_{k_n}})}{l(1/\epsilon_{l,n})}, 1 \right\} -\frac{1}{l(1/\epsilon_{l,n})}
	\end{align}
	for all $n$. 
	As $(\epsilon_{l,n}) \geq (\epsilon_n)$, without loss of generality we assume that there is a mapping 
	$\tau:\mathbb{N} \longrightarrow \mathbb{N}$ such that $\epsilon_{l,n}=\epsilon_{\tau(n)}$ for every  $n$, 
	where $\tau(n) \leq n$ eventually with $n$. From construction $(\epsilon_{l,n})$ is $o(1)$ and thus, 
	$l(1/\epsilon_{l,n}) \longrightarrow \infty$. Then,
	%........
	\begin{IEEEeqnarray}{rCl}\label{eq_2_proof_LB_redun_metric_entropy_dyn}
		&\lim& \inf_{n \longrightarrow \infty} \frac{R^+( \Lambda^n_{\tilde{f}_{k_n}})}{l(1/\epsilon_{l,n})} \geq \log (e) \min \left\{ \liminf_{n \longrightarrow \infty} \frac{\mathcal{H}_{\epsilon_{\tau(n)}} (\Lambda_{\tilde{f}_{k_n}})}{l(1/\epsilon_{\tau(n)})}, 1 \right\}\nonumber\\
		&=& \log (e) \cdot \min \left\{ \liminf_{n \longrightarrow \infty} \frac{\mathcal{H}_{\epsilon_{n}} \left( \Lambda_{\tilde{f}_{k_{\tau^{-1}(n)}}} \right)}{l(1/\epsilon_n)}, 1 \right\} \nonumber\\
		\label{eq_2_proof_LB_redun_metric_entropy_dynb}
		&\geq& \log (e) \cdot \left\{ \liminf_{n \longrightarrow \infty} \frac{\mathcal{H}_{\epsilon_{n}} \left( \Lambda_{\tilde{f}_{k_{n}}} \right)}{l(1/\epsilon_n)}, 1 \right\} \\
		\label{eq_2_proof_LB_redun_metric_entropy_dynb2}
		&\geq & \log(e),
	\end{IEEEeqnarray}
	where the  inequality in (\ref{eq_2_proof_LB_redun_metric_entropy_dynb}) follows from the fact that $\mathcal{H}_{\epsilon} \left( \Lambda_{\tilde{f}_k} \right) \leq \mathcal{H}_{\epsilon} \left( \Lambda_{\tilde{f}_{\bar{k}}} \right)$ if $\bar{k}\geq k$ and that  $\tau^{-1}(n) \geq n$, and (\ref{eq_2_proof_LB_redun_metric_entropy_dynb2}) from the main hypothesis of Lemma \ref{lemma_LB_redundancy_metric_entropy_dynamic}.
\end{proof}

%%%%%%%%%%%%%%%%%%%%%%%%%%%%%%%%%%%%%%%%%%%
%%%%%%%%%%%%%%%%%%%%%%%%%%%%%%%%%%%%%%%%%%%
%---------------------------------------------------------------------------------------
\subsection{Proof of Lemma \ref{proposition_metric_entropy_asymtotics}}
\label{proof_proposition_metric_entropy_asymtotics}
\begin{proof}
%\subsubsection{Proof of Proposition \ref{proposition_metric_entropy_asymtotics}}
%\label{proof_prop_metric_entropy_asymt}
Following the lower bound for $\mathcal{D}_\epsilon(\Lambda_f)$ elaborated in \cite[Lemma 3]{bontemps_2011}, 
which is based on a  covering argument and volume based inequality, 
we have that  for all $m\geq 1$  and any arbitrary $k$ such that $m+l_f \leq k$ (see Def. \ref{def_tail_function_critical_dimension}), 
%..................
\begin{IEEEeqnarray}{rCl}\label{eq_1_proof_prop_metric_entropy_asymt}
	\mathcal{D}_\epsilon (\Lambda_{\tilde{f}_k})  &\geq &\frac{\textrm{Vol} \left( \prod\limits_{i=l_f+1}^{l_f+m} [0,\sqrt{\tilde{f}_k(i)}] \right)}{\textrm{Vol} \left( \mathcal{B}_m(\epsilon) \right)}\nonumber\\
	& = &\frac{\prod\limits_{i=l_f+1}^{l_f+m} \sqrt{\tilde{f}_k(i) } }{\epsilon^m \textrm{Vol}(\mathcal{B}_m)}, 
\end{IEEEeqnarray}
where $\mathcal{B}_m(\epsilon)$ denotes the ball in $\mathbb{R}^m$ of radius $\epsilon$, and $\mathcal{B}_m \coloneqq    \mathcal{B}_m(1)$.
If we consider 
%.............
$$N_\epsilon \coloneqq \inf \left\{m\geq 1: \bar{F}_{f}(m) <  \epsilon^2/16\right\}$$
 and let $\tilde{N}^k_\epsilon \coloneqq \min \left\{ N_\epsilon , k \right\}$, and we evaluate (\ref{eq_1_proof_prop_metric_entropy_asymt}) with $m=\tilde{N}^k_\epsilon - l_f$  it follows that  
%..................
\begin{align}\label{eq_2_proof_prop_metric_entropy_asymt}
	&\mathcal{H}_\epsilon(\Lambda_{\tilde{f}_k}) = \ln \mathcal{D}_\epsilon (\Lambda_{\tilde{f}_k})  \geq \nonumber\\
	 &\sum_{i=l_f+1}^{\tilde{N}^k_\epsilon} \ln  \sqrt{\tilde{f}_k(i)} - \ln \textrm{Vol} (\mathcal{B}_{\tilde{N}^k_\epsilon-l_f}) -   (\tilde{N}^k_\epsilon-l_f) \ln 1/\epsilon. 
\end{align}
On the other hand,  we can adopt the upper bound in \cite[Lemma 2]{bontemps_2011} that is based on another 
volume comparison argument, leading to 
%..................
\begin{IEEEeqnarray}{rCl}\label{eq_3_proof_prop_metric_entropy_asymt}
	\mathcal{D}_\epsilon (\Lambda_{\tilde{f}_k}) &\leq& \frac{\textrm{Vol} \left(\prod\limits_{i=1}^{\tilde{N}^k_\epsilon} \left[-\epsilon/8, \sqrt{\tilde{f}_k(i)} + \epsilon/8\right] \right)}{\textrm{Vol}\left(\mathcal{B}_{\tilde{N}^k_\epsilon}(\epsilon/8) \right)}  \nonumber\\
	&=&\frac{\prod\limits_{i=1}^{\tilde{N}^k_\epsilon} \left(\sqrt{\tilde{f}_k(i)} + \epsilon/4\right)}{ (\epsilon/8)^{\tilde{N}^k_\epsilon} \cdot \textrm{Vol}(\mathcal{B}_{\tilde{N}^k_\epsilon})}.
\end{IEEEeqnarray}
This inequality reduces to \cite[Eq. (6)]{bontemps_2014}
%..................
\begin{IEEEeqnarray}{rCl}
	\mathcal{H}_\epsilon(\Lambda_{\tilde{f}_k})& \leq &\sum_{i=1}^{l_f} \ln \left(\sqrt{\tilde{f}_k(i)} + \epsilon/4\right) + \sum_{i=l_f+1}^{\tilde{N}^k_\epsilon} \ln \left(\sqrt{\tilde{f}_k(i)}\right) \nonumber\\
	&-& \ln \left( \textrm{Vol} \left(\mathcal{B}_{\tilde{N}^k_\epsilon}\right) \right) + \frac{\tilde{N}^k_\epsilon-l_f}{\sqrt{1-e^{-b}}} + \tilde{N}^k_\epsilon \ln 8/\epsilon,\label{eq_4_proof_prop_metric_entropy_asymt}
\end{IEEEeqnarray}
for $b=-\ln \bar{F}(l_f)>0$.

	If we consider the regime where $N_\epsilon< k$, then it follows that  $N_\epsilon=\tilde{N}^k_\epsilon$ and  $ \left\{ l_f+1,\dots,N_\epsilon \right\} \subset   \left\{1,\dots,k-1\right\}$. Therefore $\tilde{f}_k(i)=f(i)$ for all $i\in \left\{ l_f+1,..,N_\epsilon \right\}$. In this scenario, we have that:
	%..................
	\begin{align}\label{eq_5_proof_prop_metric_entropy_asymt}
		\mathcal{H}_\epsilon(\Lambda_{\tilde{f}_k}) &\geq \sum_{i=l_f+1}^{{N}_\epsilon} \ln  \sqrt{{f}(i)} - \ln \textrm{Vol} (\mathcal{B}_{{N}_\epsilon-l_f}) \nonumber\\
		&-   ({N}_\epsilon-l_f) \ln 1/ \epsilon 
	\end{align}
	and
	\begin{align}\label{eq_5b_proof_prop_metric_entropy_asymt}
		\mathcal{H}_\epsilon(\Lambda_{\tilde{f}_k}) &\leq \sum_{i=1}^{l_f} \ln (\sqrt{{f}(i)} + \epsilon/4) + \sum_{i=l_f+1}^{{N}_\epsilon} \ln (\sqrt{{f}(i)}) \nonumber\\ 
		&- \ln \left( \textrm{Vol} (\mathcal{B}_{{N}_\epsilon}) \right) + \frac{{N}_\epsilon-l_f}{\sqrt{1-e^{-b}}} + {N}_\epsilon \ln 8/\epsilon. 
	\end{align}
	We point out that the RHS expression of (\ref{eq_5_proof_prop_metric_entropy_asymt}) and  (\ref{eq_5b_proof_prop_metric_entropy_asymt}) are the very same lower and upper bounds derived in \cite[Eq.(7) and Eq.(6)]{bontemps_2014} for $\mathcal{H}_\epsilon(\Lambda_{f})$, respectively. 
	Consequently in this regime, we obtain the lower and upper bound expressions of the unconstrained (i.e., lossless) problem.
	
	By definition $N_\epsilon< k$ is equivalent to the condition that  $\bar{F}_f(k-1) \leq \frac{\epsilon^2}{16}$.  Using the hypothesis that $(k_n)$ and $(\epsilon_n)$ are such that  the condition in Eq.(\ref{eq_admisibility_condition_given_the_size}) (in Def.\ref{admisibility_envelop_given_size}) 
	is satisfied eventually with $n$, it follows that
	%..................
	\begin{align}\label{eq_7_proof_prop_metric_entropy_asymt}
		&\liminf_{n \rightarrow \infty} \mathcal{H}_{\epsilon_n}(\Lambda_{\tilde{f}_{k_n}}) \geq \nonumber\\  
		&\liminf_{n \rightarrow \infty}    \left\{ \sum_{i=l_f+1}^{{N}_{\epsilon_n}} \ln  \sqrt{{f}(i)} - \ln \textrm{Vol }(\mathcal{B}_{{N}_{\epsilon_n}-l_f})\right. \nonumber\\
	&\left.-   ({N}_{\epsilon_n}-l_f) \ln \frac{1}{{\epsilon_n}} \right\},
	\end{align}
	and
	\begin{align}\label{eq_7b_proof_prop_metric_entropy_asymt}
		&\limsup_{n \rightarrow \infty} \mathcal{H}_{\epsilon_n}(\Lambda_{\tilde{f}_{k_n}}) \leq \nonumber\\ 
		&\limsup_{n \rightarrow \infty}  \left\{  \sum_{i=1}^{l_f} \ln \left(\sqrt{{f}(i)} + \epsilon/4\right) + \sum_{i=l_f+1}^{{N}_\epsilon} \ln \left(\sqrt{{f}(i)}\right)\right. \nonumber\\
		&\left.- \ln \left( \textrm{Vol} (\mathcal{B}_{{N}_\epsilon}) \right) + \frac{{N}_\epsilon-l_f}{\sqrt{1-e^{-b}}} + {N}_\epsilon \ln \frac{8}{\epsilon}\right\}. 
	\end{align}
%\end{itemize}

To conclude, it has been shown in \cite[Prop. 4]{bontemps_2014} that  the RHS of both (\ref{eq_5_proof_prop_metric_entropy_asymt}) and (\ref{eq_5b_proof_prop_metric_entropy_asymt}) behaves asymptotically  as $(1+o(1))\int_{1}^{1/\epsilon^2} \frac{U_f(x)}{2x} \partial x$ when $\epsilon$ goes to zero. Consequently given that by hypothesis $\epsilon_n \longrightarrow 0$,  this fact implies that  $\mathcal{H}_{\epsilon_n}(\Lambda_{\tilde{f}_{k_n}})=(1+o(1)) \int_1^{1/\epsilon_n^2} \frac{U_f(x)}{2x} \partial x$  as $n$ tends to infinity. Finally  (\ref{eq_th_IRG_tail_schemme_13}) follows from (\ref{eq_7_proof_prop_metric_entropy_asymt}), (\ref{eq_7b_proof_prop_metric_entropy_asymt}) and \cite[Prop. 4]{bontemps_2014}.
\end{proof}

%=====================================================================
%=====================================================================
\appendices
\section{Supporting Results}
%1......
%------------------------------------------------------------------------------------
\subsection{Proof of Lemma \ref{lemma_zero_distortion}}
\label{proof_lemma_zero_distortion}
\begin{proof}
	Let first prove the sufficient condition. Let us assume that $ \left\{\pi_n: n\geq 1 \right\}$
	is asymptotically sufficient for $\mu$. The induced distortion is given by 
	%..............
	\begin{equation}\label{eq_proof_lemma_zero_distortion_1}
		d(\phi_n, \psi_n,\mu^n) = \sum_{x\in \mathcal{X}}f_\mu(x) \cdot \underbrace{\rho (x,\psi_n(\phi_n(x)))}_{ g_n(x) \coloneqq }.
	\end{equation}
Considering that $\mu(\lim\sup_{n}\pi_n(x))= f_\mu(x)$ for all $x\in \textrm{supp}(\mu)$, then it follows 
that $\lim_{n \rightarrow \infty} \psi_n(\phi_n(x)))=x$, $\mu$-almost everywhere and $\lim_{n \rightarrow \infty }g_n(x)=0$, $\mu$-almost surely. Furthermore,  $g_n(x)$ is a bounded function by definition, then the {\em bounded convergence theorem} \cite{varadhan_2001} implies that $\lim_{n \rightarrow \infty} \int_{\mathcal{X}} g_n(x)d\mu(x)=0 \Leftrightarrow  \lim_{n \rightarrow \infty} d(\phi_n, \psi_n,\mu^n)=0$.

For the converse, let us assume that $\cap_{n\geq 1}\cup_{m\geq n} \pi_m(x) \neq  \left\{x \right\}$, $\mu$-almost surely. In other words, $\exists x,x_o\in \textrm{supp}(\mu)$ with $x\neq x_o$ such that $ \left\{x,x_o\right\} \subset  \lim_{n \rightarrow \infty} \cup_{m\geq n} \pi_m(x)$. Consequently, there exists $N$ such that for all $n\geq N$, $d(\phi_n, \psi_n,\mu^n) \geq \min\left\{f_\mu(x), f_\mu(x_o) \right\} \cdot \min\left\{ \rho(x,x_o), \rho(x_o,x) \right\} >0$. 
\end{proof}

%2.
%------------------------------------------------------------------------------------
\subsection{Proof of Lemma \ref{lemma_asym_suff}}
\label{proof_lemma_asym_suff}
For the proof we need the following definitions: 
\begin{definition}\label{def_mu_summability}
	Let us consider $\mu\in \mathcal{P}(\mathcal{X})$ and a function $g:\mathcal{X} \rightarrow \mathbb{R}$. 
	$g$ is said to be integrable with respect to $\mu$ if $\sum_{x\in \mathcal{X} }  \left|  g(x) \right|   f_{\mu}(x) < \infty$. 
	Finally, $\ell_1(\mu)$ denotes the collection of all integrable functions with respect to $\mu$. 
\end{definition}

\begin{proof}
	Let us consider: 
	%.......................
	\begin{equation}
		H(\mu) - H_{\sigma(\pi_n)} (\mu) = \sum_{x\in \mathcal{X}} f_\mu(x) \cdot \underbrace{  \log \frac{\mu(\pi_n(x))}{f_u(x)}}_{\tilde{g}_n(x) \coloneqq  }. 
 	\end{equation}
From the assumption that $H(\mu)<\infty$, then $\tilde{g}_n(x) \in \ell_1(\mu)$. Furthermore, under
the assumption that $\left\{\pi_n:n \geq 1 \right\}$ is asymptotically sufficient, we have that $\lim_{n \rightarrow \infty} \mu(\pi_n(x))=f_\mu(x)$ for all $x\in \textrm{supp}(\mu)$ and thus, $\lim_{n \rightarrow \infty} \tilde{g}_n(x) = 0$ $\mu$-almost everywhere. 
Finally considering that $\tilde{g}_n(x) \leq \log  1/f_\mu(x) \in \ell_1(\mu)$, the dominated convergence theorem 
implies that $\lim_{n \rightarrow \infty} \sum_{x\in \mathcal{X}} f_\mu(x) \cdot \tilde{g}_n(x)=0$.
\end{proof}

%3. 
%____________________________________________________________________
\subsection{Proof of Proposition \ref{proposition_entropy_two_stage}:} 
\label{proof_proposition_entropy_two_stage}
\begin{proof}
The argument reduces to verify the achievability of the entropy using a two-stage lossy construction.   
For that we consider the tail partition 
%.......................
\begin{align}\label{eq_proof_lemma_entropy_5}
	\pi_n=\left\{ \left\{1 \right\},\dots,\left\{n \right\}, \left\{n+1,\dots,\right\} \right\}, 
\end{align}
associated to $\phi_n(x)=x$ if $x\in \left\{1,..,n\right\}$ and otherwise $\phi_n(x)=0$. 
It is simple to verify that this scheme satisfies the zero distortion condition. 
For the lossless coding of $Y^n=\Phi_n(X^n)\in \left\{0,1,\dots,n\right\}^n$, we can 
consider the prefix-free {\em Shannon code} \cite{cover_2006}, whose rate is at most two bits away from the 
entropy of  $Y^n$.  Hence,  there is $\mathcal{C}_n:\left\{0,1,..,n\right\}^n \longrightarrow \left\{ 0,1\right\}^*$
such that: 
\begin{equation}\label{eq_proof_lemma_entropy_5}
	r(\phi_n, \mathcal{C}_n, \mu^n) \leq \frac{H(Y^n) +1}{n} = \frac{H_{\sigma(\pi_n)}(\mu) +1}{n},
 \end{equation}
which suffices to show that 
\begin{equation}
\limsup_{n \longrightarrow \infty} r(\phi_n, \mathcal{C}_n, \mu^n) \leq H(\mu),
 \end{equation}
and therefore $R_{al}(\mu) \leq \bar{R}_{al}(\mu)\leq H(\mu)$.
\end{proof}

%------------------------------------------------------------------------------------
\subsection{Proof of Lemma \ref{pro_invariance_envelop}}
\label{proof_pro_invariance_envelop}
\begin{proof}
	First, it is direct to show that $\left\{v_\mu: \mu\in \Lambda_f \right\} \subset \tilde{\Lambda}_{\tilde{f}}$. 
	 Then, it remains to prove that for any $v\in \tilde{\Lambda}_{\tilde{f}}$  there is $\mu \in \Lambda_f$
	such that $v_\mu=v$, in total variations. Let us fix an arbitrary $i\in \mathcal{I}$.  
	%1. 
	If we first assume that  $\left| \mathcal{A}_{n,i} \right|< \infty$,  we propose the following  approach: 
	%...............
	\begin{align}\label{eq_proof_pro_invariance_envelop_1}
		&\hat{x}_1=\arg \min_{x\in \mathcal{A}_{n,i}} f(x),   &\hat{w}_{x_1} =\frac{f(\hat{x}_1)}{\sum_{x\in {\mathcal{A}_{n,i}}}{f}(x)},\nonumber\\
		&\hat{x}_2=\arg \min_{x\in \mathcal{A}_{n,i}\setminus \left\{ \hat{x}_1 \right\}} f(x),  &\hat{w}_{x_2} =\frac{f(\hat{x}_2)}{\sum_{x\in {\mathcal{A}_{n,i}}}{f}(x)},\nonumber\\
		&,\dots ,& 
	\end{align}
	where finally  $\hat{x}_{\left| \mathcal{A}_{n,i} \right|} \in \mathcal{A}_{n,i} \setminus \left\{\hat{x}_i:i=1,\dots, \left| \mathcal{A}_{n,i} \right|-1 \right\}$      and  $\hat{w}_{x_{\left| \mathcal{A}_{n,i} \right|}} =\frac{f(\hat{x}_{\left| \mathcal{A}_{n,i} \right|})}{\sum_{x\in {\mathcal{A}_{n,i}}}{f}(x)}$. With this we define $\mu( \left\{ \hat{x}_i \right\})=\hat{w}_{x_1}\cdot v(\left\{ i \right\} )$ for each $i\in \left\{1,\dots, \left| \mathcal{A}_{n,i} \right| \right\}$. Note that $\mu( \left\{ \hat{x}_i \right\}) \leq f( \hat{x}_i )$  and $\sum_{x\in \mathcal{A}_{n,i}} f_\mu(x)=v(\left\{ i \right\} )$ by construction. 
	%2....
	If $\left| \mathcal{A}_{n,i} \right|= \infty$ and $\sum_{x\in \mathcal{A}_{n,i}}f(x) < \infty$, we can follow the same inductive approach than in (\ref{eq_proof_pro_invariance_envelop_1}) to construct $\mu(\left\{ x\right\})$ for all  $x\in \mathcal{A}_{n,i}$.  
	%3.....
	On the other hand, if $\left| \mathcal{A}_{n,i} \right|= \infty$ and $\sum_{x\in \mathcal{A}_{n,i}}f(x) = \infty$, then $\tilde{f}(i)=1$ by definition, and we can always  find $\mu_i \in \Lambda_{f}$ such that $\textrm{supp}(\mu_i) \in \mathcal{A}_{n,i}$.  Then, we construct $\mu(\left\{x \right\}) = \mu_i (x) \cdot v(\left\{ i \right\})$, where it is clear that $\mu(\mathcal{A}_{n,i})=v(\left\{ i \right\})$ and  $f_\mu(x) \leq f(x)$ for all $x\in \mathcal{A}_{n,i}$ provided by  $\mu_i \in \Lambda_f$. 
\end{proof}

%------------------------------------------------------------------------------------
\subsection{Proposition \ref{pro_redundancy_bound}}
\label{proof_pro_redundancy_bound}
%..........................
%Proposition:
\begin{proposition}\label{pro_redundancy_bound}
	Let us consider a lossy code $(f_n,g_n)$ and a family of distributions $\Lambda\subset \mathcal{P}(\mathcal{X})$. 
	If we denote by $v_{\mu^n}$ the probability in $\mathcal{I}_n$ induced by $\mu^n\in \Lambda^n$ (the $n$-fold distributions with marginal in $\Lambda$) and $\phi_n$, by $v_{\mu^n}(\left\{ i \right\})=\mu^n(\phi_n^{-1}(\left\{ i \right\}))$ $\forall i \in \mathcal{I}_n$,  then
	%.......................................
	\begin{IEEEeqnarray}{rCl}
		\bar{R}(\Lambda ,f_n) &=   & \sup_{\mu \in \Lambda} \Big(r(f_n,\mu^n) -\frac{1}{n}H_{\sigma(\pi_n)}(\mu^n)\Big) \nonumber\\
		&\geq &\frac{1}{n} R^{+}(\left\{v_{\mu^n}: \mu\in \Lambda \right\}) \nonumber\\
		& = &   \frac{1}{n} \inf_{v\in \mathcal{P}(\mathcal{I}_n)} \sup_{\mu \in \Lambda}  \mathcal{D}(v_{\mu^n}|v) \nonumber\\
		&=& \frac{1}{n} \inf_{v\in \mathcal{P}(\mathcal{X}^n)} \sup_{\mu \in \Lambda} \mathcal{D}_{\sigma(\pi_n)}(\mu^n | v).
	\end{IEEEeqnarray}
\end{proposition}

\begin{proof}
	By definition $r(f_n,\mu^n)=\frac{1}{n} \mathbb{E}_{X^n_1\sim \mu^n} \left\{ \mathcal{L}(\mathcal{C}_n(\phi_n(X^n_1)))\right\}$.  Consequently,  if we let $Y_n=\phi_n(X^n)$ in $\mathcal{I}_n$ we have that $Y_n\sim v_{\mu^n}$, where 
	$v_{\mu^n}$ denote the probability induced by $\mu^n$ and $\phi_n$ in $\mathcal{P}(\mathcal{I}_n)$. We will consider $r(\mathcal{C}_n, v_{\mu^n})= \mathbb{E}_{Y_n \sim v_{\mu^n}} \left\{ \mathcal{L}(\mathcal{C}_n(Y_n))\right\} =n \cdot r(f_n,\mu^n)$, and as $r(\mathcal{C}_n, v_{\mu^n})\geq H(Y_n)$ \cite{cover_2006}, 
	for the rest we  focus on a refined worst-case redundancy, attributed to the second stage of $f_n$, given by
	%...........
	\begin{equation}\label{eq_proof_pro_redundancy_bound_1}
		\bar{R}(\Lambda, \mathcal{C}_n) \coloneqq    \sup_{\mu \in \Lambda} \Big(r(\mathcal{C}_n, v_{\mu^n}) - H(v_{\mu^n})\Big).
	\end{equation}
	We note that $H(v_{\mu^n}) = H_{\sigma(\pi_n)}(\mu^n)$, therefore $\bar{R}(\Lambda,f_n)= \frac{1}{n}\bar{R}(\Lambda, \mathcal{C}_n)$. Considering (\ref{eq_proof_pro_redundancy_bound_1}) we have that
	%...........
	\begin{IEEEeqnarray}{rCl}\label{eq_proof_pro_redundancy_bound_2}
		\bar{R}(\Lambda, \mathcal{C}_n) &\geq &\min_{\tilde{\mathcal{C}}_n:\mathcal{I}_n  \rightarrow \left\{ 0,1\right\}^*} \sup_{\mu \in \Lambda} \big(r(\mathcal{\tilde{C}}_n, v_{\mu^n}) - H(v_{\mu^n})\big)\nonumber\\
									&\geq& \inf_{v \in \mathcal{P}(\mathcal{I}_n)} \sup_{\mu\in \Lambda} \mathcal{D}(v_{\mu^n} | v) \nonumber\\
									&=& R^{+}(\left\{v_{\mu^n}: \mu\in \Lambda \right\})\nonumber\\
									&=& \inf_{v \in \mathcal{P}(\mathcal{X}^n)} \sup_{\mu\in \Lambda} \mathcal{D}_{\sigma(\pi_n)}(\mu^n | v).
	\end{IEEEeqnarray}
	The first inequality in (\ref{eq_proof_pro_redundancy_bound_2}) is because we are solving the least worst-case redundancy (fixing the first stage of $f_n$), the  second is from the tight connection between prefix-free mappings and probabilities in $\mathcal{P}(\mathcal{I}_n)$ and the role of the information divergence in lossless variable length prefix-free coding \cite{csiszar_2004}, and the last equalities are from the definition of the induced probabilities in $\mathcal{P}(\mathcal{I}_n)$ and the identity in (\ref{eq_th_DSU_3}).  We note that the expression in (\ref{eq_proof_pro_redundancy_bound_2}) is the information radius of our $n$-fold family $\Lambda^n$ projected into the sub-sigma field induced by $\pi_n$, i.e., first stage of $f_n$.
\end{proof}
	
%=====================================================================
%=====================================================================
\bibliographystyle{IEEEtran}				% IEEEtran.bst
\bibliography{main_jorge_silva}			% mainbibfile.bib

\end{document}